\documentclass{article}

\usepackage{authblk}
\providecommand{\keywords}[1]{\textbf{\textit{Keywords---}} #1}

\usepackage[usenames,dvipsnames,svgnames,table]{xcolor}
\usepackage{amsmath}
\newtheorem{prop}{Proposition}
\newtheorem{theo}{Theorem}
\newtheorem{definition}{Definition}
\newtheorem{lemma}{Lemma}
\newtheorem{proof}{Proof}
\usepackage{amssymb} 
\usepackage{mathrsfs}
\usepackage{mathtools}

\usepackage[a4paper,left=3cm,right=3cm,top=3cm,bottom=2.5cm]{geometry}

\usepackage{amsmath}
\usepackage{amssymb}
\usepackage{graphicx}
\usepackage{caption}
\usepackage{subcaption}
\usepackage{xcolor} 


\newcommand{\vv}{\mathbf{v}}

\newcommand{\ii}{\mathrm{i}}
\newcommand{\bk}{\mathbf{k}}
\newcommand{\bK}{\mathbf{K}}
\newcommand{\bx}{\mathbf{x}}
\newcommand{\by}{\mathbf{y}}
\newcommand{\bX}{\mathbf{X}}
\newcommand{\EE}{\mathbf{E}}
\newcommand{\bE}{\mathbf{E}}

\newcommand{\bF}{\mathbf{F}}
\newcommand{\DD}{\mathbf{D}}
\newcommand{\HH}{\mathbf{H}}
\newcommand{\bH}{\mathbf{H}}

\newcommand{\BB}{\mathbf{B}}

\newcommand{\LL}{\mathcal{L}}
\renewcommand{\L}{\mathbf{L}}

\newcommand{\RRT}{\mathfrak{R}}
\newcommand{\PPT}{\mathfrak{P}}
\newcommand{\TTT}{\mathfrak{T}}

\newcommand{\MM}{\mathcal{M}}

\newcommand{\vf}{\mathbf{\Psi}}
\newcommand{\bPsi}{\mathbf{\Psi}}
\newcommand{\bPhi}{\mathbf{\Phi}}

\newcommand{\omegaz}{\omega^{0}}

\newcommand{\bu}{\mathbf{u}}

\newcommand{\bPsiso}{\bPsi_{\sigma}^{1}}
\newcommand{\bPsisz}{\bPsi_{\sigma}^{0}}
\newcommand{\pt}{\PPT\TTT}

\makeatletter
\renewcommand*\env@matrix[1][c]{\hskip -\arraycolsep
  \let\@ifnextchar\new@ifnextchar
  \array{*\c@MaxMatrixCols #1}}
\makeatother

\newcommand{\MMo}{\mathcal{M}_{W^1}}

\newcommand{\curl}{\text{curl }}
\newcommand{\la}{\langle}
\newcommand{\ra}{\rangle}
\newcommand{\rai}{\rangle_I}
\renewcommand{\div}{\text{div }}

\DeclareMathOperator{\sech}{sech}
\DeclareMathOperator{\dd}{d}

\graphicspath{{./}{./fig/}}

\begin{document}
\title{Linear and nonlinear electromagnetic waves in modulated honeycomb media}
\author[1]{Pipi Hu}
\author[1]{Liu Hong}
\author[1]{Yi Zhu\thanks{yizhu@tsinghua.edu.cn}}

\affil[1]{Zhou Pei-Yuan Center for Applied Mathematics, Tsinghua, Beijing, 100084, China.}

\maketitle
\begin{abstract}
 Wave dynamics in topological materials has been widely studied recently. A striking feature is the existence of robust and chiral wave propagations that have potential applications in many fields. A common way to realize such wave patterns is to utilize Dirac points which carry topological indices and is supported by the symmetries of the media.  In this work, we investigate these phenomena in photonic media. Starting with Maxwell's equations with a honeycomb material weight as well as the nonlinear Kerr effect, we first prove the existence of Dirac points in the dispersion surfaces of transverse electric and magnetic Maxwell operators under very general assumptions of the material weight. Our assumptions on the material weight are almost the minimal requirements to ensure the existence of Dirac points in a general hexagonal photonic crystal. We then derive the associated wave packet dynamics in the scenario where the honeycomb structure is weakly modulated. It turns out the reduced envelope equation is generally a two-dimensional nonlinear Dirac equation with a spatially varying mass. By studying the reduced envelope equation with a domain-wall-like mass term, we realize the subtle wave motions which are chiral and immune to local defects.  The underlying mechanism is the existence of topologically protected linear line modes, also referred to as edge states. However, we show that these robust linear modes do not survive with nonlinearity. We demonstrate the existence of nonlinear line modes, which can propagate in the nonlinear media based on high-accuracy numerical computations. Moreover, we also report a new type of nonlinear modes which are localized in both directions.
\end{abstract}

\keywords{ Nonlinear Maxwell's equations, Dirac points, Topologically protected edge states, honeycomb structure, Lump soliton}

\section{Introduction}

The past few years have witnessed an explosion of researches on topological materials in different fields. Many novel and subtle wave dynamics are investigated in these materials. One hallmark is the topological propagation of wave modes which are immune to defects and disorders \cite{cheng2016robust, HR:07,plotnik2013observation,poo2011experimental,wang2009observation}. Among those novelly designed materials, one focus is the honeycomb-based materials, in which the underlying symmetries play essential roles in topological phenomena \cite{ablowitz2010evolution, Fefferman2012Honeycomb, RMP-Graphene:09}. This work is concerned with topological wave dynamics in the nonlinear photonic media.

Maxwell's equations for the electromagnetic fields in nonlinear Kerr media read
\begin{align}\label{eq:maxwellmain}
&\frac{\partial \DD}{\partial t}=\curl \HH, \quad \frac{\partial \BB}{\partial t}=-\curl \mathbf{E},\\
&\div\DD = 0,\quad \div\BB = 0,
\end{align}
for electric field $\EE$ and displacement $\DD$, magnetic field $\HH$ and induction field $\BB$ with the following constitutive relation
\begin{equation}
\DD =\hat{\epsilon} \EE + \sigma|\EE|^2\EE, \quad \BB = \hat{\mu}\HH,
\label{eq.constitutive}
\end{equation}
where all fields $\mathbf{F}=(\bF^{(1)}, \bF^{(2)},\bF^{(3)})^T$ are complex-valued functions, the linear permittivity $\hat{\epsilon}$ and permeability $\hat{\mu}$ are $3\times 3$  positive-definite Hermitian matrices and $\sigma\in \mathbb{R}$ represents the nonlinear Kerr coefficient.

This work focuses on two-dimensional photonic materials with the following specific material weights
\begin{equation}\label{material_weight}
\hat{\epsilon}=
\begin{pmatrix}
    \epsilon_{11} & \epsilon_{12} & 0 \\
    \epsilon_{21} & \epsilon_{22} & 0 \\
    0 & 0 & \epsilon_3 \\
\end{pmatrix},\quad \hat{\mu}= \begin{pmatrix}
    \mu_{11} & \mu_{12} & 0 \\
    \mu_{21} & \mu_{22} & 0 \\
    0 & 0 & \mu_3 \\
\end{pmatrix},
\end{equation}
where $\hat{\epsilon}=\hat{\epsilon}(\bx),~\hat{\mu}=\hat{\mu}(\bx),  ~\bx=(x_1, ~x_2)^T\in \mathbb{R}^2$ vary in the transverse plane and are invariant along the longitudinal direction.  In this setup, the general nonlinear Maxwell's equations are still very complicated. However, if the nonlinear effect is negligible, i.e., $\sigma=0$, Maxwell's equations with linear material weight (\ref{material_weight}) can be simplified greatly. Indeed, the electromagnetic fields can be divided into two decoupled components, the so-called transverse electric (TE) field $\bPsi_e = (\bE^{(2)}, -\bE^{(1)}, \bH^{(3)})^T$ and transverse magnetic (TM) field $\bPsi_m = (\bH^{(2)}, -\bH^{(1)}, \bE^{(3)})^T$, (We switch the first two entries of the TE/TM modes for the purpose of notational simplification. These forms are equivalent to the standard decoupling \cite{joannopoulos2008molding}.) which satisfy
\begin{equation*}
\ii\partial_t
\begin{pmatrix}
\bPsi_e\\
\bPsi_m\\
\end{pmatrix} + \begin{pmatrix}
    \MM_{W_e} & 0\\
    0 & -\MM_{W_m} \\
\end{pmatrix}\begin{pmatrix}
\bPsi_e\\
\bPsi_m\\
\end{pmatrix}
=0.
\end{equation*}
Hereafter, we use the notation $\MM_W:=W(\bx)\LL$ for a given $3\times3$ matrix function $W(\bx)$ and
\begin{equation}\label{eq.LL}
\LL=\begin{pmatrix}
    0 & 0 &\ii\partial_{x_1}\\
    0 & 0 &\ii\partial_{x_2}\\
    \ii\partial_{x_1} & \ii\partial_{x_2} & 0 \\
\end{pmatrix}.
\end{equation}

From (\ref{material_weight}), we focus on material weight matrices $W(\bx)$ of the form

\begin{equation}\label{WeWm}
W_e(\bx)=\begin{pmatrix}
    d_\epsilon^{-1}\epsilon_{11}  & d_\epsilon^{-1}\epsilon_{21} & 0 \\
    d_\epsilon^{-1}\epsilon_{12}  & d_\epsilon^{-1}\epsilon_{22} & 0 \\
    0 & 0 & \mu_3^{-1}\\
\end{pmatrix},
 W_m(\bx)= \begin{pmatrix}
    d_\mu^{-1}\mu_{11} & d_\mu^{-1}\mu_{21}& 0 \\
    d_\mu^{-1}\mu_{12} & d_\mu^{-1}\mu_{22}& 0 \\
    0 & 0&\epsilon_3^{-1}\\
\end{pmatrix},
\end{equation}

where $d_\epsilon=\epsilon_{11}\epsilon_{22}-\epsilon_{12}\epsilon_{21}>0$, and $d_\mu=\mu_{11}\mu_{22}-\mu_{12}\mu_{21}>0$.

Thanks to the unified form, we can study the TE/TM components similarly by solving the following eigenvalue problem
\begin{equation}
    \MM_{W}\bPsi = \omega\bPsi.
\label{eq.eigenvalueo}
\end{equation}
In accordance with (\ref{WeWm}),  the material weight under study is of the specific form
\begin{equation*}
W(\bx)=\begin{pmatrix}
    A(\bx)& \mathbf{0}_{{2\times1}}\\
    \mathbf{0}_{{1\times2}} & a(\bx)\\
\end{pmatrix}.
\end{equation*}

Throughout this paper, we assume that $W=W(\bx)$ is an admissible material weight in the following sense.

\begin{definition}\label{def.adm}
  A $d\times d$ $(d\in\mathbb{Z}_+)$ matrix function $W(\bx)$ is called admissible if it is (1) Hermitian; and (2) elliptic, i.e., there exist $c, C>0$ such that
   for any $\xi\in \mathbb{C}^d$ and $\bx \in \mathbb{R}^2$,
  \begin{equation*}
  c|\xi|^2\le \xi^* W(\bx) \xi\le C|\xi|^2.
  \end{equation*}
\end{definition}

\subsection{Physical motivations and main results}
By designing different architectures, i.e., manipulating the material weight $W(\bx)$, researchers can produce many novel wave propagation patterns \cite{joannopoulos2008molding}. A recent focus is to realize the so-called topological wave propagation in photonic materials \cite{cheng2016robust,HR:07,plotnik2013observation,poo2011experimental}. The existence of topologically protected edge states is the hallmark. Their immunity to defects ensures robust and nearly lossless energy/signal transfers, which have important applications in many different fields. The fast-developing and huge experimental realizations require rigorous analysis from the theoretical aspect.

To generate robust wave modes, one needs to design the materials with certain symmetries. A typical way is to utilize hexagonal lattices. Together with other symmetries, the Dirac points, conically degenerate points in the dispersion surface, can exist. These materials are often referred to as ``honeycomb material". By modulating the honeycomb structure in certain manners, linear topologically protected edge states can be realized. One goal of this work is to rigorously demonstrate the existence of Dirac points under some very general assumptions on the material weight $W(\bx)$. Specifically, we make the following achievements.
\begin{enumerate}
\item we characterize the \emph{honeycomb material weight} associated with Maxwell operator $\MM_{W}$ in Section \ref{sec.fb}. To this end, we introduce proper function spaces $\L_\bk^2(\Lambda)$, time-reversal $\TTT$, parity-inversion $\PPT$ and $\frac{2\pi}{3}$-rotation $\RRT$ symmetries. Definition \ref{honeycomb_material} gives the essential ingredients of the honeycomb material weight which guarantees the existence of Dirac points for Maxwell operator $\MM_{W}$. Proposition \ref{prop.honeychara} characterizes the honeycomb material weight. To the best of our knowledge, this has not been well described before this work.
\item We demonstrate the existence of Dirac points. After discussing the structure of eigenspaces of $\MM_{W}$ in $\L_\bK^2$, we rigorously prove there exists a conically degenerate point, a.k.a., Dirac point, in the dispersion surfaces of $\MM_{W}$ at the high symmetry point $\bK$ with very general assumptions. Many novel topological phenomena are associated with Dirac points. Our rigorous analysis clarifies what the minimal requirements are needed to obtain Dirac points in a very general photonic setup.  Moreover, we also show that a $\PPT\TTT$-symmetry breaking perturbation to the honeycomb material weight leads to the disappearance of Dirac points and local spectral gap opening. A simple example of the honeycomb material weight and the associated dispersion surfaces are illustrated. The numerical simulations agree very well with our analysis.
\end{enumerate}

After establishing the local analytical structure of the spectrum of $\MM_{W}$ in the vicinity of Dirac points, we can achieve the other goal of this work, i.e., the nonlinear dynamics of the wave packet associated with Dirac points with a slowly modulated honeycomb media. Our results are summarized as follows.
\begin{enumerate}
\setcounter{enumi}{2}
\item We derive the nonlinear envelope equation in a most interesting parameter regime, where the envelope scale, nonlinearity, and material weight modulation are maximally balanced. By implementing a multi-scale analysis, we obtain the envelope equation, which is a nonlinear Dirac equation with a spatially varying mass. The reduction is directly applied to Maxwell's equations (\ref{eq:maxwellmain})-(\ref{eq.constitutive}), and it includes many nontrivial computations of the solvability conditions using symmetry arguments. To the best of our knowledge, this reduction from the nonlinear Maxwell's equation has not been implemented in the literature.

\item By analyzing the reduced envelope equation, we explain many interesting wave propagation patterns. Both the analysis and numerical simulations of the original Maxwell's equations are very tough due to multi-scale features of the physical problems, and some underlying mechanisms are buried in the complicated structure. With the simple form of the envelope equation with a domain-wall mass, we first show the existence of linear line modes, a.k.a, topologically protected edge states, and their chiral and robust features. We also show that these striking features break down when nonlinearity is included in the system. Via numerical methods, we obtain the nonlinear line modes as well as fully localized nonlinear lump-like solitons.
\end{enumerate}

\subsection{Connections to previous studies and outline}

In the current study of topological wave propagations, the Dirac point, which carries topological characterizations \cite{RevModPhys.82.3045,qi2011topological}, is frequently used to realized subtle phenomena. Hexagonal periodicity is the best candidate due to their symmetries. However, more conditions are needed to guarantee the existence of Dirac points. For a physical system, it is a key problem that what minimal conditions are needed to ensure the existence of Dirac points. Despite a large amount of numerical and/or asymptotic studies on the existence of Dirac points, see for example \cite{ablowitz2012nonlinear,ablowitz2013nonlinear,guo2019bloch,wallace1947band,ablowitz2011nonlinear},  Fefferman and Weinstein first gave the general condition for the honeycomb potential under which they rigorously proved the existence of Dirac points for the Schr\"odinger operator \cite{fefferman2018honeycomb}. This work solved a long-standing open problem in quantum mechanics.  Later, one author of this work Zhu together with Lee-Thorp and Weinstein extended the conditions of the honeycomb media, and the rigorous demonstration of Dirac points to a 2nd order elliptic operator with a divergence form \cite{lee2019elliptic}. This elliptic operator can be used to describe some special TE mode of electromagnetic waves in a photonic crystal. Our current work aims to extend the analysis to a general photonic material. To do so, we clarify the symmetries, function spaces, general requirements to prove the existence of Dirac points. Our extension provides the most general theory for a honeycomb photonic material. In addition, compared to previous studies in which the operators are all scaler operators, we deal with Maxwell operator directly, which is a vector operator. The extension itself involves nontrivial arguments and calculations.

The other part of this work is devoted to the nonlinear dynamics of wave packets associated with Dirac points in slowly modulated honeycomb media. We show that the reduced equation is the two-dimensional nonlinear Dirac equation with a spatially varying mass. We want to point out that the reduced envelope equation is very similar to the nonlinear coupled-mode equation in which the mass is a constant  \cite{dohnal2013coupled}, see also \cite{curtis2015dynamics}. It is the spatially varying mass term that makes the essential difference. Indeed, with a domain-wall-like mass, our envelope equation can capture the topologically protected wave propagation. Our analysis and numerical results in  \ref{sec.dynamics} show such powerful capabilities. In the literature, this importance of the linear Dirac equation with a varying mass has been noticed \cite{bal2017topological,bal2018continuous,xie2019wave}. In the past few years, people began to use similar nonlinear models to describe nonlinear topological modes \cite{smirnova2019topological}. The direct reduction of the envelope equation from Maxwell's equations with a modulated honeycomb material weight has not been done before our work. For example, the nonlinear terms in \cite{smirnova2019topological} is added artificially without any reasonable explanations. Our current work provides a complete and consistent theory, including the conditions for the existence of Dirac points and the associated linear spectrum of Maxwell's operator, the reduction of the nonlinear envelope equation. All terms and coefficients have very clear physical interpretations. On the other hand, our derivation has a lot of nontrivial symmetry arguments in the detailed calculations when dealing with the modulation and nonlinearity. In addition, the linear Dirac equation with a sign-changed mass has been used to describe topological quantum mechanics. In the literature, most of the works focus on the topologically protected edge states and topological invariants \cite{bal2017topological,bal2018continuous}. With our reduction, we can study how the nonlinearity affects the topological wave propagation. Specially, we generate the nonlinear line modes and fully localized solitary waves corresponding to the reduced envelope equation.

The rest of the paper is organized as follows. In Section \ref{sec.fb}, Floquet-Bloch theory for the Maxwell operator $\MM_W$ is reviewed. We then characterize the required symmetries and define the honeycomb material weight. The high-symmetry points and properties of associated function spaces are discussed. In Section \ref{sec.dirac}, after investigating the eigenspaces of $\MM_W$ at high-symmetry points, we prove the existence of Dirac points in the dispersion surfaces of $\MM_W$ with a honeycomb material weight that we define in the last section. We then study the persistence of the Dirac points under perturbations. Numerical simulations on the spectrum of $\MM_W$ with a typical honeycomb material weight are presented and show the agreements with our analysis. The derivation of the nonlinear envelope equation is implemented in Section \ref{sec.envelope}.
In Section \ref{sec.dynamics}, we analyze and numerically simulate the envelope equations in different scenarios. Linear and nonlinear line modes are obtained, and their typical dynamics are studied. We also show a new-type of lump-like solitary wave solutions to the reduced nonlinear envelope equation. We conclude our results in Section \ref{sec.conclusion}.

\section{Honeycomb media and Floquet-Bloch theory}\label{sec.fb}
In this section, we list the function spaces, honeycomb structures, symmetries, Floquet-Bloch theory, and some preliminary results which are required for our analysis in the next sections.
\subsection{Hexagonal lattice}

A hexagonal lattice $\Lambda = \mathbb{Z}\vv_1\oplus\mathbb{Z}\vv_2$ is generated by the lattice vectors
$    \vv_1 = (
        \frac{\sqrt{3}}{2},
        \frac{1}{2}
    )^T$,  and $\vv_2 = (
        \frac{\sqrt{3}}{2},
        -\frac{1}{2}
    )^T.
$
Here we have normalized the lattice vectors for simplicity.

\begin{figure}[htbp]
     \centering
     \begin{subfigure}[b]{0.4\textwidth}
         \centering
         \label{figh:a}\includegraphics[width=\textwidth]{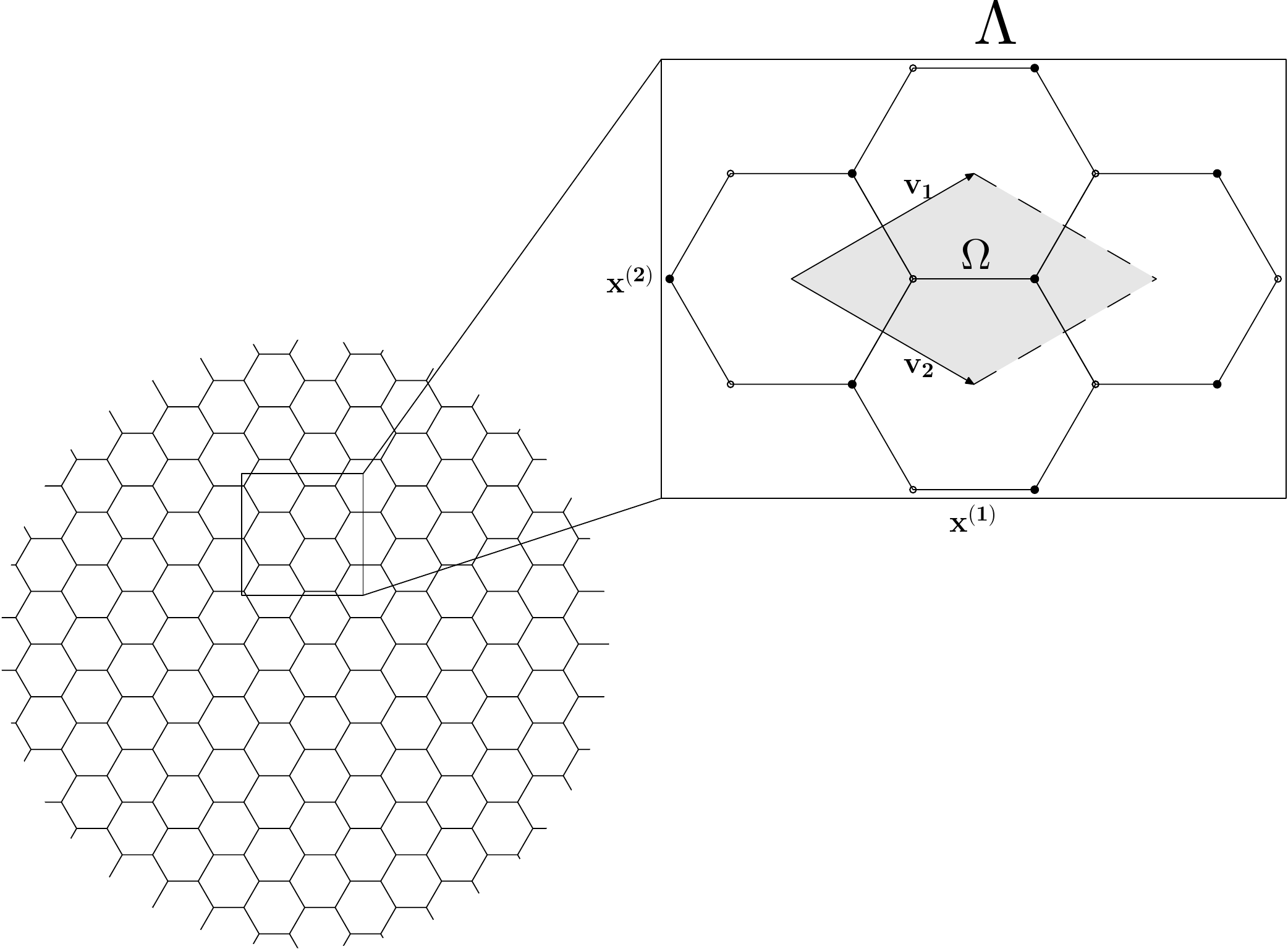}
         \end{subfigure}
     \hspace{1em}
          \begin{subfigure}[b]{0.18\textwidth}
         \centering
         \label{figh:b}\includegraphics[width=\textwidth]{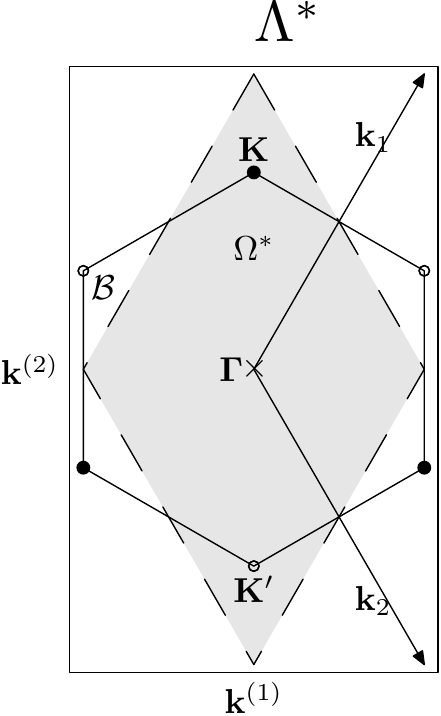}
     \end{subfigure}
       \caption{Left Panel: hexagonal lattice, lattice vectors $\vv_j, ~j=1,2$ and the unit cell $\Omega$ (the shadow region). Right Panel: the dual fundamental cell $\Omega^*$ (the shadow region) and the Brillouin zone $\mathcal{B}$ (the region surrounded by the hexagon). $\bK$ and $\bK'$ are labeled. }
  \label{fig.hexagon}
\end{figure}

The fundamental cell is chosen to be
\begin{equation*}
\Omega=\{\theta_1\vv_1+\theta_2\vv_2\,|\, 0\leq \theta_j<1,\, j=1,2 \}.
\end{equation*}

The dual lattice is $
    \Lambda^* = \mathbb{Z}\bk_1\oplus\mathbb{Z}\bk_2$
with the dual lattice vectors
$    \bk_1 =\frac{4\pi}{\sqrt{3}} (
        \frac{1}{2},
        \frac{\sqrt{3}}{2}
    )^T$,  and
        $\bk_2 = \frac{4\pi}{\sqrt{3}}(
        \frac{1}{2}
        -\frac{\sqrt{3}}{2}
        )^T,
$
satisfying the reciprocal relations $\bk_i\cdot \vv_j=2\pi\delta_{ij},~ i,j=1,2$. Throughout this paper, we choose the fundamental dual cell to be
$\Omega^*=\{\theta_1\bk_1+\theta_2\bk_2\,|\, -\frac12\leq \theta_j<\frac12,\, j=1,2 \}$.
It is remarked that this choice of the fundamental cell is equivalent to the Brillouin zone $\mathcal{B}$ that is frequently used in physical literature. The physical and dual lattices, as well as their fundamental cells are shown in Figure \ref{fig.hexagon}.

For the purpose of this work, we introduce the following spaces for 3-tuple vector functions defined in $\mathbb{R}^2$

\begin{align*}
\L_\text{per}^2(\Lambda)=\bigl\{\bu(\bx)=\bigl(\bu^{(1)}(\bx),\bu^{(2)}(\bx),\bu^{(3)}(\bx)\bigr)^T\,|\, \bu^{(j)}(\bx)\in L_\text{per}^2(\Lambda),\, j=1,2,3\bigr\},
\end{align*}

and

\begin{align*}
\L_\bk^2(\Lambda)=\bigl\{\bPsi(\bx)=\bigl(\bPsi^{(1)}(\bx),\bPsi^{(2)}(\bx),\bPsi^{(3)}(\bx)\bigr)^T\,|\, \bPsi^{(j)}(\bx)\in L_\bk^2(\Lambda),\, j=1,2,3\bigr\},
\end{align*}

where $L_\text{per}^2(\Lambda)$ and $L_\bk^2(\Lambda)$ are standard square integrable function spaces for periodic and $\bk$-quasiperiodic functions, i.e.,
$L^2_\text{per}(\Lambda) = \{f(\bx)\in L^2_{\text{loc}}(\mathbb{R}^2,\mathbb{C})| f(\bx+\vv)=f(\bx), \vv\in\Lambda\}$, and
$L^2_\bk(\Lambda) = \{f(\bx)|e^{-\mathfrak{i}\bk\cdot\bx}f(\bx)\in L^2_\text{per}(\Lambda)\}.$

For a material weight $W(\bx)$, we define the following weighted inner product in $\L^2_{\text{per}}(\Lambda)$
\begin{equation}\label{eq.Wweight}
    \langle\bPsi_1,\mathbf{\bPsi_2}\rangle_{_W} = \int_{\Omega}\bPsi_1^*(\bx) W^{-1}(\bx)\bPsi_2(\bx)\dd\bx,
\end{equation}
where the superscript asterisk ``*"  represents the conjugate transpose. Since $W(\bx)$ is elliptic, the weighted norm $\| \cdot \|_W$ induced by this inner product is equivalent to the standard norm $\|\cdot \|_{\L^2_{\text{per}}(\Lambda)}$. Note that $\bPhi_1W^{-1}(\bx)\bPhi_2$ is in $L^2_\text{per}(\Lambda)$ for any $\bPhi_j(\bx)\in\L^2_\bk(\Lambda)$, $j=1,2$. So the inner product can be extended to $\L^2_\bk(\Lambda)$ for any $\bk\in\mathbb{R}^2$.

\subsection{Floquet-Bloch Theory}\label{sec.bloch}
In this subsection, we briefly review the Floquet-Bloch theory for the operator $\MM_{W}$ when $W(\bx)$ is $\Lambda$-periodic, see for example \cite{joannopoulos2008molding,lee2019elliptic,De_Nittis-Lein:14}. The spectrum of $\MM_{W}$ can be obtained by solving the following $\L^2_\bk(\Lambda)$-eigenvalue problem
\begin{equation}\label{eq.bloch}
    \mathcal{M}{_W} \bPsi(\bx) = \omega\bPsi(\bx),\quad \bPsi(\bx)\in \L^2_\bk(\Lambda).
\end{equation}

Alternatively, by setting
$    \bPsi(\bx) = e^{\mathfrak{i}\bk\cdot\bx}\mathbf{u}(\bx)$,
we transform the eigenvalue problem (\ref{eq.bloch}) to
\begin{equation}
    \MM_{W}(\bk)\bu(\bx) = \omega(\bk)\bu(\bx),\quad \bu(\bx)\in\L^2_\text{per}(\Lambda),
\label{eq:eigenk}
\end{equation}
where

\begin{equation*}
    \MM_{W}(\bk) = e^{-\mathfrak{i}\bk\cdot\bx}\MM_{W} e^{\mathfrak{i}\bk\cdot\bx}
    =W(\bx)\begin{pmatrix}
    0 & 0 &\ii\partial_{x_1}-\bk^{(1)}\\
    0 & 0 &\ii\partial_{x_2}-\bk^{(2)}\\
    \ii\partial_{x_1}-\bk^{(1)} & \ii\partial_{x_2}-\bk^{(2)} & 0 \\
\end{pmatrix},
\end{equation*}

 and $\bk=(\bk^{(1)},\bk^{(2)})^T$.

For fixed $\bk\in\Omega^*$, the eigenvalue problem (\ref{eq:eigenk}) has a series of discrete eigenvalues
\begin{equation*}
    \cdots \leq \omega_{-b}(\bk) \leq \cdots \leq \omega_{-1}(\bk) \leq \omega_{0}(\bk)\leq \omega_1(\bk) \leq \cdots \leq \omega_b(\bk) \leq \cdots .
\end{equation*}
The mappings $\bk  \rightarrow \omega(\bk)$ are called band dispersion functions which are Lipschitz continuous.

\subsection{$\RRT$,\,$\PPT$ and $\TTT$ symmetries}

Symmetry plays a very important role in the understanding of physical phenomena.
We introduce the following symmetry operators.
For any $\bPsi(\bx)=\bigl(\bPsi^{(1)}(\bx),\bPsi^{(2)}(\bx),\bPsi^{(3)}(\bx)\bigr)^T$ defined in $\mathbb{R}^2$, we define  the parity inversion operator $\PPT$ as
\begin{equation}\label{def.P}
    [\PPT\vf](\bx) =
    \begin{pmatrix}
        -I_{2\times 2}&\mathbf{0}\\
        \mathbf{0}&1\\
    \end{pmatrix}
    \vf(-\bx),
\end{equation}
the time reversal operator $\TTT$ as
\begin{equation}\label{def.T}
    [\TTT\vf](\bx) =
    \begin{pmatrix}
        -I_{2\times 2}&\mathbf{0}\\
        \mathbf{0}&1\\
    \end{pmatrix}
    \overline{\vf(\bx)},
\end{equation}
and the $\frac{2\pi}{3}$-rotation operator $\RRT$ as
\begin{equation}\label{def.R}
    [\RRT\vf](\bx) =
     \begin{pmatrix}
        R&\mathbf{0}_{2\times1}\\
        \mathbf{0}_{1\times2}&1\\
    \end{pmatrix}
    \vf(R^*\bx),
\end{equation}
where the $2\times 2$ matrix $R$ is a clockwise $\frac{2\pi}{3}$-rotation matrix $R = \bigl(\begin{smallmatrix}
        -\frac{1}{2} &\frac{\sqrt{3}}{2}\\
        -\frac{\sqrt{3}}{2} &-\frac{1}{2}\\
    \end{smallmatrix}\bigr)$.

Remark that the eigenvalues of $R$ are $\tau=e^{\ii\frac{2\pi}{3}}$ and $\overline{\tau}=e^{-\ii\frac{2\pi}{3}}$, with the corresponding eigenvectors  $\frac{1}{\sqrt{2}}\begin{pmatrix}1\\\ii\\ \end{pmatrix}$ and $\frac{1}{\sqrt{2}}\begin{pmatrix}1\\-\ii\\ \end{pmatrix}$.

We use the terminology that $\vf(\bx)$ is $\RRT$ invariant if $[\RRT \vf](\bx) = \vf(\bx)$. Similar terminologies are used for $\PPT-$invariance and $\TTT-$invariance. We also use  $\PPT\TTT$ as the compound of operators $\PPT$ and $\TTT$.

\subsection{Honeycomb material weight}
This work focuses on the photonic material with honeycomb structures defined as follows.
\begin{definition}\label{honeycomb_material} A $3\times3$ matrix function $ W(\bx)=\bigl(\begin{smallmatrix}A(\bx)&\mathbf{0}\\\mathbf{0}&a(\bx)\end{smallmatrix}\bigr)$
is a honeycomb material weight if it is admissible in the sense of Definition \ref{def.adm} and further satisfies
\begin{enumerate}
\item $A(R^*\bx)=R^*A(\bx)R$, and $a(R^*\bx)=a(\bx)$;
\item $\overline{A(-\bx)}=A(\bx)$, and $a(\bx)$ is real and even.
\end{enumerate}
\end{definition}

Evidently, the identity matrix $I_{{3\times3}}$ is a trivial example of the honeycomb material weight. Moreover, $W^{-1}(\bx)$ is a honeycomb material weight if $W(\bx)$ is.

An obvious advantage of the honeycomb material weight is given in the following proposition.

\begin{prop}\label{prop.honeychara}
    The following commutators vanish if $W(\bx)$ is a honeycomb material weight:
\begin{equation}
[\MM_{W}, \RRT]=0, \quad \text{and}\quad [\MM_{W}, \PPT\TTT]=0.
\end{equation}
\end{prop}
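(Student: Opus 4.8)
The plan is to verify directly that $\RRT$ and $\PPT\TTT$ commute with $\MM_W = W(\bx)\LL$ by tracking how each operator transforms the two factors $W(\bx)$ and $\LL$ separately. The key observation is that $\LL$ from (\ref{eq.LL}) is built out of the first-order operators $\ii\partial_{x_1},\ii\partial_{x_2}$ arranged in the off-diagonal ``gradient/divergence'' block pattern, and each symmetry operator acts on this block structure in a controlled way; the defining relations in Definition \ref{honeycomb_material} are precisely engineered so that the transformation of $W$ cancels the transformation of $\LL$.

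First I would handle $[\MM_W,\RRT]=0$. Write $\RRT = Q\circ \mathcal{D}_{R^*}$ where $Q = \bigl(\begin{smallmatrix}R&\mathbf{0}\\\mathbf{0}&1\end{smallmatrix}\bigr)$ acts on the target $\mathbb{C}^3$ and $\mathcal{D}_{R^*}$ is the pullback $[\mathcal{D}_{R^*}\vf](\bx)=\vf(R^*\bx)$. The heart of the computation is the conjugation identity $\mathcal{D}_{R^*}^{-1}\,\LL\,\mathcal{D}_{R^*} = Q^{-1}\LL Q$, which follows from the chain rule: differentiating $\vf(R^*\bx)$ produces $R^*$ acting on the gradient, and the $2{+}1$ block form of $\LL$ together with $R^{-1}=R^*$ reassembles exactly into $Q^{-1}\LL Q$. (Concretely, $(\ii\partial_{x_1},\ii\partial_{x_2})$ is a vector under rotation, and the scalar third slot is rotation-invariant, which matches how $Q$ is built.) Then
\begin{align*}
\RRT^{-1}\MM_W\RRT
&= \mathcal{D}_{R^*}^{-1}Q^{-1}\,W(\bx)\LL\,Q\,\mathcal{D}_{R^*}\\
&= \mathcal{D}_{R^*}^{-1}\bigl(Q^{-1}W(\bx)Q\bigr)\bigl(Q^{-1}\LL Q\bigr)\mathcal{D}_{R^*}\\
&= \bigl(Q^{-1}W(R^*\bx)Q\bigr)\,\LL,
\end{align*}
using the conjugation identity in the last step after pulling $\mathcal{D}_{R^*}$ through. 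Now $Q^{-1}W(R^*\bx)Q = \bigl(\begin{smallmatrix}R^{-1}A(R^*\bx)R&\mathbf{0}\\\mathbf{0}&a(R^*\bx)\end{smallmatrix}\bigr)$, and condition (1) of Definition \ref{honeycomb_material} — namely $A(R^*\bx)=R^*A(\bx)R=R^{-1}A(\bx)R$ (since $R$ is a rotation) hence $R^{-1}A(R^*\bx)R = R^{-2}A(\bx)R^2$... I would instead use $A(R^*\bx)=R^*A(\bx)R$ to get $R^{-1}A(R^*\bx)R$; the cleanest route is to note $Q^{-1}W(R^*\bx)Q = W(\bx)$ is exactly equivalent to $A(R^*\bx)=RA(\bx)R^{-1}=RA(\bx)R^*$ and $a(R^*\bx)=a(\bx)$, which is condition (1) (with $R^*=R^{-1}$). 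Hence $\RRT^{-1}\MM_W\RRT = W(\bx)\LL = \MM_W$.

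Next I would treat $[\MM_W,\PPT\TTT]=0$. The operator $\PPT\TTT$ sends $\vf(\bx)\mapsto \overline{\vf(-\bx)}$ (the two $\bigl(\begin{smallmatrix}-I&\\&1\end{smallmatrix}\bigr)$ factors from (\ref{def.P}) and (\ref{def.T}) multiply to the identity, and parity's $\bx\mapsto-\bx$ composes with time reversal's conjugation). So it is antilinear. The needed conjugation fact is that $\LL$ is invariant under $\vf\mapsto\overline{\vf(-\bx)}$: parity flips $\partial_{x_j}\mapsto-\partial_{x_j}$, complex conjugation turns $\ii\mapsto-\ii$, and the two sign changes cancel in each entry of $\LL$; no target-space matrix is needed because $\PPT\TTT$ acts trivially on $\mathbb{C}^3$. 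Therefore conjugating $\MM_W$ by $\PPT\TTT$ gives $\overline{W(-\bx)}\,\LL$, and condition (2) — $\overline{A(-\bx)}=A(\bx)$ and $a$ real and even, i.e.\ $\overline{a(-\bx)}=a(\bx)$ — says exactly $\overline{W(-\bx)}=W(\bx)$. Hence $(\PPT\TTT)^{-1}\MM_W(\PPT\TTT)=\MM_W$.

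The main obstacle, and the only genuinely non-routine piece, is establishing the conjugation identity $\mathcal{D}_{R^*}^{-1}\LL\,\mathcal{D}_{R^*} = Q^{-1}\LL Q$ for the rotation — i.e.\ checking carefully that the first-order matrix differential operator $\LL$ transforms under the $\frac{2\pi}{3}$-rotation the way a ``$\lapsinline$''-type operator should, with the $2\times 2$ block behaving as $R^*(\text{gradient})$ and the scalar block inert. I would verify this by writing $\LL = \lapsinline$ schematically and computing the $(1{:}2,3)$ and $(3,1{:}2)$ blocks entrywise: $[\mathcal{D}_{R^*}^{-1}(\ii\nabla)\mathcal{D}_{R^*}]\vf(\bx) = \ii\nabla_\bx[\vf(R^*\bx)]\big|_{\text{coeff}} = R(\ii\nabla\vf)(R^*\bx)$-type manipulation, being careful that $\nabla_\bx$ of a composition brings down the transpose of the linear map; since $R$ is orthogonal this transpose is $R^*=R^{-1}$, which is what makes the blocks of $Q^{-1}\LL Q$ match. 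Once that identity is in hand, everything else is the bookkeeping displayed above, and the proof closes by reading off conditions (1) and (2) of Definition \ref{honeycomb_material}.
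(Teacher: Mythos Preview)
The paper states this proposition without proof, so there is no argument to compare against; your direct-verification strategy is exactly what is intended and, once the bookkeeping is straightened out, constitutes a complete proof.

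Your $\PPT\TTT$ half is clean and correct: $[\PPT\TTT\vf](\bx)=\overline{\vf(-\bx)}$, the sign flips from parity ($\partial_{x_j}\mapsto-\partial_{x_j}$) and conjugation ($\ii\mapsto-\ii$) cancel in every entry of $\LL$, and $[\MM_W,\PPT\TTT]=0$ then reduces to $\overline{W(-\bx)}=W(\bx)$, which is precisely condition~(2) of Definition~\ref{honeycomb_material}.

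The $\RRT$ half has the right architecture but the orientation of the conjugations is tangled, and you end up asserting that commutation is equivalent to $A(R^*\bx)=RA(\bx)R^*$ --- which is \emph{not} condition~(1); the definition reads $A(R^*\bx)=R^*A(\bx)R$. Two slips compound: first, $\mathcal{D}_{R^*}^{-1}M(\cdot)\mathcal{D}_{R^*}$ is multiplication by $M(R\,\cdot)$, not $M(R^*\cdot)$; second, the correct identity is $\mathcal{D}_{R^*}\LL\mathcal{D}_{R^*}^{-1}=Q^{-1}\LL Q$ (equivalently $\mathcal{D}_{R^*}^{-1}\LL\mathcal{D}_{R^*}=Q\LL Q^{-1}$), with the inverse on the opposite side from what you wrote. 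The simplest repair is to bypass the separate $\mathcal{D}_{R^*}$ identity: once you have established $[\LL,\RRT]=0$ (your block computation of gradient and divergence under rotation does this), compare directly
\[
\MM_W\RRT\vf(\bx)=W(\bx)\,\RRT\LL\vf(\bx)=W(\bx)\,Q\,[\LL\vf](R^*\bx),
\qquad
\RRT\MM_W\vf(\bx)=Q\,W(R^*\bx)\,[\LL\vf](R^*\bx),
\]
so equality for all $\vf$ forces $W(\bx)Q=QW(R^*\bx)$, i.e.\ $A(R^*\bx)=R^{-1}A(\bx)R=R^*A(\bx)R$ and $a(R^*\bx)=a(\bx)$ --- exactly condition~(1).
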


\subsection{Properties of high symmetry points}
It will be seen that symmetries play a very important role on the existence of degenerate Dirac points since the honeycomb material weight is described through the symmetries, see Definition \ref{honeycomb_material} and Proposition \ref{prop.honeychara}.  Apparently, for any $\bk\in\Omega^*$, $\bPsi \in \L^2_\bk(\Lambda)$ if and only if  $\PPT\TTT \bPsi\in \L^2_\bk(\Lambda)$. Further $\bPsi \in \L^2_\bk(\Lambda)$ if and only if  $\RRT \bPsi\in \L^2_{<R\bk>}(\Lambda)$ where $<R\bk>$ is the representative of $R\bk$ in $\Omega^*$, i.e., there exists $m_1,m_2\in \mathbb{Z}$ such that $<R\bk>=R\bk+m_1\bk_1+m_2\bk_2 \in \Omega^*$. Therefore
$$\PPT\TTT\L^2_\bk(\Lambda)=\L^2_\bk(\Lambda),$$
but it does not hold for the rotational operator $\RRT$ if $<R\bk>\neq\bk$.

For a given $\bk\in \Omega^*$, there uniquely exist $p_j\in [-\frac{1}{2},\frac{1}{2}),~ j=1,2$ such that $\bk = p_1\bk_1+p_2\bk_2$.
Recalling $R\bk_1=\bk_2,~ R\bk_2=-\bk_1-\bk_2$, we have
\[
R\bk -\bk =-(p_1+p_2)\bk_1 +(p_1-2p_2)\bk_2.
\]

In order that $<R\bk>=\bk$, there must exist $m_j\in \mathbb{Z},~j=1,2$ such that
$$
\bk=R\bk+m_1\bk_1 +m_2\bk_2.
$$

Thus, both $p_1+p_2$ and $p_1-2p_2$ are integers. Apparently, there are three solutions $(p_1,p_2)=(0,0),(\frac{1}{3},-\frac{1}{3}),(-\frac{1}{3},\frac{1}{3})$.
Denote
$\Gamma=\mathbf{0}$, $\bK = \frac{1}{3}(\bk_1-\bk_2)$, and $\bK'=-\bK.$
The three points are referred as the high symmetry points with respect to operator $\RRT$.

In this work, we focus on the $\bK$ point, while the analysis for $\bK'$ is the same. The following properties about the space $\L^2_{\bK}$ are frequently used in our later analysis. Hereafter, we frequently suppress the lattice symbol $\Lambda$ for simplicity, i.e., $\L^2_{\bK}=\L^2_{\bK}(\Lambda)$.
\begin{lemma}    \label{prop.inner}
Let  $W(\bx)$ be a honeycomb material weight. The following identities hold for any $\bPsi_j \in \L^2_{\bK},~ j=1,2$:
\begin{equation*}
        \la\RRT\bPsi_1,\RRT\bPsi_2\ra_{_W}=\la\bPsi_1,\bPsi_2\ra_{_W}\quad
\text{and }
    \la\PPT\TTT\bPsi_1,\PPT\TTT\bPsi_2\ra_{_W}=\overline{\la\bPsi_1,\bPsi_2\ra}_{_W}.
\end{equation*}
\end{lemma}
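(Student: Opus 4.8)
The plan is to verify both identities by direct computation, exploiting the change-of-variables $\bx\mapsto R^*\bx$ (resp. $\bx\mapsto-\bx$) in the weighted integral $\la\cdot,\cdot\ra_{_W}=\int_\Omega(\cdot)^*W^{-1}(\cdot)\,\dd\bx$, and using that $W$, hence $W^{-1}$, is a honeycomb material weight. First I would record the two algebraic facts I need about $W^{-1}$: writing $W^{-1}=\bigl(\begin{smallmatrix}A^{-1}(\bx)&\mathbf 0\\\mathbf 0&a^{-1}(\bx)\end{smallmatrix}\bigr)$, the honeycomb conditions on $W$ pass to $W^{-1}$, so $A^{-1}(R^*\bx)=R^*A^{-1}(\bx)R$ and $a^{-1}(R^*\bx)=a^{-1}(\bx)$ (rotation covariance), and $\overline{A^{-1}(-\bx)}=A^{-1}(\bx)$ with $a^{-1}$ real and even (the $\PPT\TTT$ condition). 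Introduce the block matrix notation $\Theta_R=\bigl(\begin{smallmatrix}R&\mathbf 0\\\mathbf 0&1\end{smallmatrix}\bigr)$ and $\Theta_P=\bigl(\begin{smallmatrix}-I_{2\times2}&\mathbf 0\\\mathbf 0&1\end{smallmatrix}\bigr)$, so that $[\RRT\bPsi](\bx)=\Theta_R\bPsi(R^*\bx)$ and $[\PPT\TTT\bPsi](\bx)=\Theta_P\overline{\bPsi(-\bx)}$; note $\Theta_R^*\Theta_R$ acts blockwise as $R^*R=I$ on the first block and $1$ on the third, i.e. $\Theta_R$ is unitary, and $\Theta_P$ is real and involutive.

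For the first identity, substitute the definition of $\RRT$:
\begin{equation*}
\la\RRT\bPsi_1,\RRT\bPsi_2\ra_{_W}=\int_\Omega \bigl(\Theta_R\bPsi_1(R^*\bx)\bigr)^* W^{-1}(\bx)\,\Theta_R\bPsi_2(R^*\bx)\,\dd\bx.
\end{equation*}
Change variables $\by=R^*\bx$; since $R$ is orthogonal the Jacobian is $1$, and $R^*$ maps $\Omega$ to another fundamental cell (up to a $\Lambda$-null set), while periodicity of the integrand $\bPsi_1^*W^{-1}\bPsi_2$ — valid because the product of a $\bK$-quasiperiodic conjugate, a $\Lambda$-periodic matrix, and a $\bK$-quasiperiodic function is $\Lambda$-periodic — lets me replace that cell by $\Omega$. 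The integrand becomes $\bPsi_1(\by)^*\Theta_R^*\,W^{-1}(R\by)\,\Theta_R\bPsi_2(\by)$, and the key step is the pointwise identity $\Theta_R^*W^{-1}(R\by)\Theta_R=W^{-1}(\by)$: on the first $2\times2$ block this is $R^*A^{-1}(R\by)R=A^{-1}(\by)$, which is exactly the rotation-covariance relation applied at the point $R\by$ (so $R^*\bx$ there is $\by$), and on the $(3,3)$ entry it is $a^{-1}(R\by)=a^{-1}(\by)$. This collapses the integral to $\la\bPsi_1,\bPsi_2\ra_{_W}$.

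For the second identity, substitute the definition of $\PPT\TTT$ and pull the complex conjugation outside (it conjugates the whole scalar integrand):
\begin{equation*}
\la\PPT\TTT\bPsi_1,\PPT\TTT\bPsi_2\ra_{_W}=\int_\Omega \overline{\bPsi_1(-\bx)}^{\;*}\,\Theta_P^* W^{-1}(\bx)\Theta_P\,\overline{\bPsi_2(-\bx)}\,\dd\bx,
\end{equation*}
noting $\Theta_P$ is real so it is unaffected by the bars. Since $\overline{\bPsi_1(-\bx)}^{\;*}=\bPsi_1(-\bx)^T$, the integrand equals $\overline{\bPsi_1(-\bx)^*\,\overline{\Theta_P^*W^{-1}(\bx)\Theta_P}\,\bPsi_2(-\bx)}$; now change variables $\by=-\bx$ (Jacobian $1$, $-\Omega$ replaced by $\Omega$ by periodicity of the product) and use $\overline{\Theta_P^*W^{-1}(-\by)\Theta_P}=W^{-1}(\by)$, which on the first block is $\overline{(-I)A^{-1}(-\by)(-I)}=\overline{A^{-1}(-\by)}=A^{-1}(\by)$ by the $\PPT\TTT$ condition, and on the $(3,3)$ entry is $\overline{a^{-1}(-\by)}=a^{-1}(\by)$ since $a^{-1}$ is real and even. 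The integral becomes $\overline{\int_\Omega\bPsi_1(\by)^*W^{-1}(\by)\bPsi_2(\by)\,\dd\by}=\overline{\la\bPsi_1,\bPsi_2\ra}_{_W}$, as claimed.

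The computations are essentially routine once the bookkeeping is set up; the only genuine subtlety — and the step I would be most careful about — is the justification that after the change of variables one may integrate back over $\Omega$ rather than over its image $R^*\Omega$ or $-\Omega$. This rests on the observation that, although $\bPsi_1,\bPsi_2$ are merely $\bK$-quasiperiodic, the scalar integrand $\bPsi_1^*W^{-1}\bPsi_2$ is genuinely $\Lambda$-periodic (the quasiperiodic phases $e^{-\ii\bK\cdot\bx}$ and $e^{\ii\bK\cdot\bx}$ cancel), so its integral over any fundamental domain is the same; this is exactly the remark made just before this lemma that $\bPhi_1W^{-1}\bPhi_2\in L^2_{\mathrm{per}}(\Lambda)$. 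I would state this once and invoke it for both parts.
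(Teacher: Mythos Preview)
Your proposal is correct and follows essentially the same approach as the paper's proof: a direct computation with the change of variables $\by=R^*\bx$ (resp.\ $\by=-\bx$) and the block identity $\Theta_R^*W^{-1}(R\by)\Theta_R=W^{-1}(\by)$ (resp.\ $\overline{\Theta_P^*W^{-1}(-\by)\Theta_P}=W^{-1}(\by)$) coming from the honeycomb conditions on $W$. The paper only writes out the $\RRT$ case and dismisses the $\PPT\TTT$ case with ``Similarly''; your version is more explicit on that part and on the justification that the integration domain may be taken back to $\Omega$ by $\Lambda$-periodicity of the scalar integrand, but these are refinements of the same argument rather than a different route.
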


\begin{proof}
By definition of $\RRT$ in (\ref{def.R}), we directly calculate

\begin{equation*}
    \begin{array}{ll}
    \la\RRT\bPsi_1,\RRT\bPsi_2\ra_{_W} =&  \displaystyle\int_{\Omega}\bPsi_1^*(R^*\bx)\begin{pmatrix}
        R^* & 0\\
        0& 1\\
    \end{pmatrix} W^{-1}(\bx) \begin{pmatrix}
        R & 0\\
        0 & 1\\
    \end{pmatrix}\bPsi_2(R^*\bx)\dd \bx\\[0.5cm]
    \overset{\mathbf{y} = R^*\bx}{=}&
    \displaystyle\int_{R^*\Omega}\bPsi_1^*(\mathbf{y})\begin{pmatrix}
        R^* & 0\\
        0 & 1\\
    \end{pmatrix}\begin{pmatrix}
        A^{-1}(R\mathbf{y}) & 0\\
        0 & a^{-1}(R\mathbf{y})\\
    \end{pmatrix} \begin{pmatrix}
        R & 0\\
        0 & 1\\
    \end{pmatrix}\bPsi_2(\mathbf{y})\dd \mathbf{y}.
    \end{array}
\end{equation*}

$A(R^*\bx)=R^*A(\bx)R$ by Definition \ref{honeycomb_material} implies
\begin{equation*}
R^*A^{-1}(R\mathbf{y})R=(R^*A(R\mathbf{y})R)^{-1}=(A(R^*R\mathbf{y}))^{-1}=A^{-1}(\mathbf{y}).
\end{equation*}
It follows that
\begin{equation*}
    \la\RRT\bPsi_1,\RRT\bPsi_2\ra_{_W} = \int_{\Omega}\bPsi_1^*(\mathbf{y})\begin{pmatrix}
        A^{-1}(\mathbf{y}) & 0\\
        0 & a^{-1}(\mathbf{y})\\
    \end{pmatrix}\bPsi_2(\mathbf{y})\,\dd\mathbf{y} = \la\bPsi_1,\bPsi_2\ra_{_W}.
\end{equation*}
Similarly, we have
$\la\PPT\TTT\bPsi_1,\PPT\TTT\bPsi_2\ra_{_W} = \overline{\la\bPsi_1,\bPsi_2\ra}_{_W}$.
\end{proof}

Noting that $\RRT: \L^2_{\bK}\rightarrow\L^2_{\bK}$ is isometry and $\RRT^3=I_d$, we can divide $\L^2_{\bK}$ into a direct sum of the eigenspaces of $\RRT$. Namely,
\begin{equation}\label{directSum}
    \L^2_{\bK} =  \L^2_{\bK,1}\oplus \L^2_{\bK,\tau}\oplus \L^2_{\bK,\overline{\tau}},
\end{equation}
where
\begin{equation}\label{def.subspace}
\L^2_{\bK,\sigma}=\left\{\vf(\bx)\in\L^2_{\bK}~|~ [\RRT\vf](\bx)=\sigma \vf(\bx)  \right\}, \quad \sigma=1,~\tau,~\overline{\tau}.
\end{equation}
The direct sum (\ref{directSum}) is actually an orthogonal sum by Lemma \ref{prop.inner}.

\begin{lemma}\label{lemma.f}
Define the mapping $\mathcal{F}:\L^2_\bK\times\L^2_\bK\rightarrow \mathbb{C}^2$
    \begin{equation}
        \mathcal{F}(\bPsi,\bPhi):= \int_{\Omega}\bigl[\overline{\bPsi}^{\bot}(\bx)\bPhi^{(3)}(\bx)+\overline{\bPsi}^{(3)}(\bx)\bPhi^{\bot}(\bx)\bigr] \dd\bx, \quad \bPsi,\bPhi\in \L^2_{\bK}.
        \label{eq.deff}
    \end{equation}
Then
$\mathcal{F}(\cdot,\cdot)$ satisfies the following properties:
\begin{enumerate}
    \item $\mathcal{F}(\cdot,\cdot)$ is sesquilinear;
    \item $\mathcal{F}(\bPsi,\bPhi)=\overline{\mathcal{F}(\bPhi,\bPsi)}$;
    \item  $\mathcal{F}(\RRT\bPsi,\RRT\bPhi)=R\mathcal{F}(\bPsi,\bPhi)$.
\end{enumerate}
\end{lemma}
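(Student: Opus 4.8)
The three assertions about $\mathcal{F}$ are all verified directly from the definition \eqref{eq.deff}, and the only one requiring genuine care is the rotation covariance (3). The plan is to dispatch (1) and (2) first as bookkeeping, then to set up the change of variables carefully for (3), tracking how the rotation matrix $R$ acts on the transverse components versus the third component.

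For (1), note that $\mathcal{F}(\bPsi,\bPhi)$ is built from the integrand $\overline{\bPsi}^{\bot}\bPhi^{(3)}+\overline{\bPsi}^{(3)}\bPhi^{\bot}$, which is linear in each component of $\bPhi$ and conjugate-linear in each component of $\bPsi$; integration over $\Omega$ preserves this, so $\mathcal{F}$ is sesquilinear (linear in the second slot, conjugate-linear in the first). For (2), I would simply conjugate the defining integral: $\overline{\mathcal{F}(\bPsi,\bPhi)} = \int_\Omega\bigl[\bPsi^{\bot}\overline{\bPhi}^{(3)}+\bPsi^{(3)}\overline{\bPhi}^{\bot}\bigr]\,\dd\bx$, and observe that this is exactly $\mathcal{F}(\bPhi,\bPsi)$ after swapping the roles of the two arguments in \eqref{eq.deff}; here one uses that the two summands in the integrand are symmetric under interchanging $\bPsi\leftrightarrow\bPhi$ together with conjugation. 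One subtlety: $\bPsi,\bPhi\in\L^2_{\bK}$ are $\bK$-quasiperiodic, so $\overline{\bPsi}^{\bot}\bPhi^{(3)}$ and similar products are $\Lambda$-periodic and the integral over $\Omega$ is well-defined and independent of the choice of fundamental cell — this should be noted since it is used implicitly in the change of variables for (3).

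For (3), the computation is: by the definition \eqref{def.R} of $\RRT$, $[\RRT\bPsi]^{\bot}(\bx) = R\,\bPsi^{\bot}(R^*\bx)$ (the top two components) and $[\RRT\bPsi]^{(3)}(\bx) = \bPsi^{(3)}(R^*\bx)$, and likewise for $\bPhi$. Substituting into \eqref{eq.deff} gives
\begin{equation*}
\mathcal{F}(\RRT\bPsi,\RRT\bPhi) = \int_\Omega\Bigl[\overline{R\,\bPsi^{\bot}(R^*\bx)}\,\bPhi^{(3)}(R^*\bx) + \overline{\bPsi^{(3)}(R^*\bx)}\,R\,\bPhi^{\bot}(R^*\bx)\Bigr]\dd\bx.
\end{equation*}
Now change variables $\by = R^*\bx$; since $R$ is orthogonal with determinant $1$, $\dd\bx = \dd\by$, and the region $R^*\Omega$ can be replaced by $\Omega$ because the integrand is $\Lambda$-periodic (as in (2)) and $R^*\Lambda = \Lambda$. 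The scalar factors $\bPhi^{(3)}$ and $\overline{\bPsi^{(3)}}$ come out unchanged, while $\overline{R\,\bPsi^{\bot}}$ — with $R$ real — equals $R\,\overline{\bPsi^{\bot}}$; pulling the common matrix $R$ out of the integral (it is constant) yields exactly $R\int_\Omega\bigl[\overline{\bPsi}^{\bot}(\by)\bPhi^{(3)}(\by)+\overline{\bPsi}^{(3)}(\by)\bPhi^{\bot}(\by)\bigr]\dd\by = R\,\mathcal{F}(\bPsi,\bPhi)$.

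The main obstacle — or rather the main point to get right — is the $\by$-vs-$\bx$ and $R$-vs-$R^*$ bookkeeping: one must be careful that the factor appearing in front is $R$ and not $R^*$ or $R^2$. The key observation that makes it clean is that $R$ is \emph{real}, so conjugation does not turn $R$ into $\overline{R}$; and that $R$ multiplies the $2$-vector $\overline{\bPsi^{\bot}}$ as an ordinary matrix, so after substitution it is simply a constant matrix that factors through the integral. The only other place to be attentive is justifying the replacement of $R^*\Omega$ by $\Omega$ in the domain of integration, which follows from $\Lambda$-periodicity of the integrand together with $R^*\Lambda=\Lambda$ (equivalently, one could invoke that the integral of a $\Lambda$-periodic function over any fundamental cell is the same). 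This is the same mechanism already used in the proof of Lemma~\ref{prop.inner} above, so it can be cited rather than re-proved.
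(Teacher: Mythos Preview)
Your proposal is correct and follows essentially the same route as the paper: both dismiss (1) and (2) as immediate from the definition, and for (3) both substitute the block form of $\RRT$, change variables $\by=R^*\bx$, and factor out the constant real matrix $R$. Your version is slightly more explicit about why $\overline{R\,\bPsi^{\bot}}=R\,\overline{\bPsi^{\bot}}$ (realness of $R$) and why $R^*\Omega$ may be replaced by $\Omega$ (periodicity of the integrand and $R^*\Lambda=\Lambda$), but these are exactly the points the paper uses implicitly.
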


 \begin{proof}
 The first two properties are evident from the definition. We now prove the third one.

 Recall the definition of $\RRT$ in (\ref{def.R}), $[\RRT\bPsi](\bx) = \begin{pmatrix}
         R\bPsi^{\bot}(R^*\bx)\\
         \bPsi^{(3)}(R^*\bx)\\
     \end{pmatrix}$.
A direct calculation shows
     \begin{equation*}
         \begin{split}
             \mathcal{F}(\RRT\bPsi,\RRT\bPhi) &=\int_{\Omega}\bigl[R\overline{\bPsi}^{\bot}(R^*\bx)\bPhi^{(3)}(R^*\bx)+\overline{\bPsi}^{(3)}(R^*\bx)R\bPhi^{\bot}(R^*\bx)\bigr]\dd\bx\\[0.3cm]
             &\overset{\mathbf{y} = R^*\bx}{=} \int_{R^*\Omega}R\bigl[\overline{\bPsi}^{\bot}(\by )\bPhi^{(3)}(\by)+\overline{\bPsi}^{(3)}(\by)\bPhi^{\bot}(\by)\bigr]\dd\by\\[0.3cm]
             &=R\mathcal{F}(\bPsi,\bPhi).
         \end{split}
     \end{equation*}
\end{proof}

Interestingly, $\mathcal{F}(\bPsi,\bPhi)$ has some specific directions in $\mathbb{C}^2$ if $\bPsi$ and $\bPhi$ are choosen in the above subspaces of $\L^2_\bK$. Namely, we shall prove the following proposition.
\begin{prop}
    Let $\mathcal{F}(\cdot,\cdot)$ be defined in Lemma \ref{lemma.f} and $\bPsi_j\in\L_{\bK,j}$, $j\in \{1,\tau,\overline{\tau}\}$. There exist complex constants $C_{j,k}$ such that
\begin{equation*}
        \mathcal{F}(\bPsi_j,\bPsi_k)=
        \begin{cases}
            C_{j,k}\begin{pmatrix}
            1\\
            \ii\\
        \end{pmatrix} & \text{if}\,\, (j,k)=(1,\tau)\, , (\tau,\overline{\tau})\, , \text{or}\,
     (\overline{\tau},1)\\[0.5cm]
        C_{j,k}\begin{pmatrix}
            1\\
            -\ii\\
        \end{pmatrix} & \text{if}\,\, (j,k)=(\tau,1)\, , (\overline{\tau},\tau)\, , \text{or}\,
         (1,\overline{\tau})\\[0.3cm]
        0& \text{if}\,\, j=k\\
        \end{cases}\quad .
\end{equation*}
\label{prop.f}
\end{prop}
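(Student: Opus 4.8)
The plan is to exploit the transformation law $\mathcal{F}(\RRT\bPsi,\RRT\bPhi)=R\,\mathcal{F}(\bPsi,\bPhi)$ from Lemma \ref{lemma.f}(3) together with the eigenspace definitions in (\ref{def.subspace}). If $\bPsi_j\in\L^2_{\bK,j}$ and $\bPsi_k\in\L^2_{\bK,k}$, then $\RRT\bPsi_j=j\,\bPsi_j$ and $\RRT\bPsi_k=k\,\bPsi_k$, so by sesquilinearity
\[
\mathcal{F}(\RRT\bPsi_j,\RRT\bPsi_k)=\overline{j}\,k\;\mathcal{F}(\bPsi_j,\bPsi_k).
\]
Combining with Lemma \ref{lemma.f}(3) gives the key identity $R\,\mathcal{F}(\bPsi_j,\bPsi_k)=\overline{j}\,k\;\mathcal{F}(\bPsi_j,\bPsi_k)$, i.e.\ the vector $\mathbf{w}:=\mathcal{F}(\bPsi_j,\bPsi_k)\in\mathbb{C}^2$ is an eigenvector of the clockwise $\frac{2\pi}{3}$-rotation matrix $R$ with eigenvalue $\overline{j}\,k$, unless $\mathbf{w}=0$.

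Next I would simply run through the cases using the spectral data of $R$ recorded just after (\ref{def.R}): the eigenvalue $\tau=e^{\ii 2\pi/3}$ has eigenvector proportional to $\bigl(\begin{smallmatrix}1\\\ii\end{smallmatrix}\bigr)$, the eigenvalue $\overline{\tau}=e^{-\ii 2\pi/3}$ has eigenvector proportional to $\bigl(\begin{smallmatrix}1\\-\ii\end{smallmatrix}\bigr)$, and $1$ is not an eigenvalue of $R$. The quantity $\overline{j}\,k$ with $j,k\in\{1,\tau,\overline{\tau}\}$ and $\tau^3=1$ takes the value $1$ exactly when $j=k$; it equals $\tau$ precisely for $(j,k)\in\{(1,\tau),(\tau,\overline{\tau}),(\overline{\tau},1)\}$ (using $\overline{\tau}=\tau^2$, so e.g.\ $\overline{\tau}\cdot 1=\tau^2$... let me recompute: $\overline{j}k$ for $(\overline\tau,1)$ is $\overline{\overline\tau}\cdot 1=\tau$, good); and it equals $\overline{\tau}$ precisely for $(j,k)\in\{(\tau,1),(\overline{\tau},\tau),(1,\overline{\tau})\}$. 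Hence: when $\overline{j}\,k=1$ (the case $j=k$) the identity forces $R\mathbf{w}=\mathbf{w}$, and since $1\notin\mathrm{spec}(R)$ we conclude $\mathbf{w}=0$; when $\overline{j}\,k=\tau$, $\mathbf{w}$ lies in the one-dimensional $\tau$-eigenspace, so $\mathbf{w}=C_{j,k}\bigl(\begin{smallmatrix}1\\\ii\end{smallmatrix}\bigr)$ for some $C_{j,k}\in\mathbb{C}$; when $\overline{j}\,k=\overline{\tau}$, $\mathbf{w}=C_{j,k}\bigl(\begin{smallmatrix}1\\-\ii\end{smallmatrix}\bigr)$. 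This matches exactly the three branches of the claimed formula.

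This argument is essentially a one-line eigenvalue computation once Lemma \ref{lemma.f} is in hand, so there is no serious obstacle; the only thing to be careful about is the conjugation bookkeeping — $\mathcal{F}$ is conjugate-linear in its first slot, so the factor coming out of the first argument is $\overline{j}$, not $j$ — and the arithmetic in $\mathbb{Z}/3$ identifying which ordered pairs $(j,k)$ give $\overline{j}k=\tau$ versus $\overline{\tau}$. I would also remark (or silently use) that $C_{j,k}$ may well be zero for some pairs; the proposition only asserts the \emph{direction} is constrained, which is all that follows and all that is needed. No further input — in particular no use of the honeycomb structure beyond what is already baked into Lemma \ref{lemma.f} — is required.
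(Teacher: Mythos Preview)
Your proposal is correct and follows essentially the same approach as the paper: derive the eigenvector relation $R\,\mathcal{F}(\bPsi_j,\bPsi_k)=\overline{j}\,k\,\mathcal{F}(\bPsi_j,\bPsi_k)$ from Lemma~\ref{lemma.f}(3) and sesquilinearity, then read off the three cases from the spectrum of $R$. Your write-up is in fact a bit more explicit than the paper's about the $\overline{j}k$ enumeration and the sesquilinearity bookkeeping, but the argument is the same.
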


 \begin{proof}
 By (\ref{def.subspace}) and Lemma \ref{lemma.f}, we have
 \begin{equation*}
     R\mathcal{F}(\bPsi_j,\bPsi_k)=\mathcal{F}(\RRT\bPsi_j,\RRT\bPsi_k)= \mathcal{F}(j\bPsi_j,k\bPsi_k)=\overline{j}k\mathcal{F}(\bPsi_j,\bPsi_k).
 \end{equation*}
 As $j,k\in \{1,\tau,\overline{\tau}\}$, $\overline{j}k$ must take values in $\{1,\tau,\overline{\tau}\}$.
   Recall that $R$ only has two eigenvalues $\tau$ and $\overline{\tau}$ with eigenvectors $\frac{1}{\sqrt{2}}\begin{pmatrix}1\\\ii\\ \end{pmatrix}$ and $\frac{1}{\sqrt{2}}\begin{pmatrix}1\\-\ii\\ \end{pmatrix}$. Therefore, $\mathcal{F}(\bPsi_j,\bPsi_k)$ must be $0$ if $\overline{j}k =1$, i.e., $j=k$. In contrast, if $j\neq k$, then $\mathcal{F}(\bPsi_j,\bPsi_k)$ is the eigenvector of $R$ corresponding to the eigenvalue $\overline{j}k$.
A simple enumeration completes the proof.
 \end{proof}

\section{Conically degenerate points}\label{sec.dirac}
With the preparations in the last section, we now turn to the study of the spectrum of the operator $\MM_W$ where $W=W(\bx)$ is a honeycomb material weight. Although the structure of the whole spectrum is impossible to obtain analytically, we can still get the local structure of the dispersion surfaces around high symmetry points such as the $\bK$ point. In this section, we shall show that there exist conically degenerate points in the dispersion surfaces, which are referred to as Dirac points. With such local structure, it is enough for us to study the envelope dynamics associated with the Dirac points.

\subsection{Existence of Dirac points}
The conically degenerate points that we seek are the eigenvalues of the operator $\MM_W$ with multiplicity 2. Before proceeding, we first investigate the properties of the eigenspace of $\MM_W(\bK)$ at the high symmetry point $\bK$ with multiplicity 2. The results are concluded in the following proposition.

\begin{prop}\label{pro.psi}
Let  $W(\bx)$ be a honeycomb material weight in Definition \ref{honeycomb_material}. Assume $\omega_*$ is a two-fold degenerate eigenvalue of $\MM_W$ in $\L^2_\bK$, i.e., the corresponding eigenspance $\mathcal{E}_{\omega_*}$ is two-dimensional. Then either of the following two statements holds
\begin{enumerate}
\item $\mathcal{E}_{\omega_*}\subset \L^2_{\bK,1}$;
\item $\mathcal{E}_{\omega_*}\subset  \L^2_{\bK,\tau}\oplus \L^2_{\bK,\overline{\tau}}$.
\end{enumerate}

Moreover, if the latter case holds, then there exist constant $C_{\omega_*}\geq 0$, $\bPsi_1(\bx)\in  \L^2_{\bK,\tau}$ and $\bPsi_2(\bx)= [\PPT\TTT\bPsi_1](\bx) \in \L^2_{\bK,\overline{\tau}}$ satisfying $\|\bPsi_j(\bx)\|_{W} =1,\, j=1,2$ and $\mathcal{F}(\bPsi_1,\bPsi_2)=C_{\omega_*}\begin{pmatrix}
            1\\
            \ii\\
        \end{pmatrix}$ such that $\mathcal{E}_{\omega_*}=\text{span}\{\bPsi_1,\bPsi_2\}$.

\end{prop}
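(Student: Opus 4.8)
The plan is to exploit the decomposition $\L^2_{\bK}=\L^2_{\bK,1}\oplus\L^2_{\bK,\tau}\oplus\L^2_{\bK,\overline\tau}$ from \eqref{directSum} together with the commutation $[\MM_W,\RRT]=0$ (Proposition \ref{prop.honeychara}). Since $\MM_W$ commutes with $\RRT$, the eigenspace $\mathcal{E}_{\omega_*}$ is $\RRT$-invariant, hence splits as $\mathcal{E}_{\omega_*}=\bigl(\mathcal{E}_{\omega_*}\cap\L^2_{\bK,1}\bigr)\oplus\bigl(\mathcal{E}_{\omega_*}\cap\L^2_{\bK,\tau}\bigr)\oplus\bigl(\mathcal{E}_{\omega_*}\cap\L^2_{\bK,\overline\tau}\bigr)$, and these three summands have dimensions summing to $2$. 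First I would rule out the ``mixed'' cases where the dimension vector is $(1,1,0)$, $(1,0,1)$ or $(0,1,1)$ and also the degenerate possibilities $(2,0,0)$, $(0,2,0)$, $(0,0,2)$; what remains is exactly that $\mathcal{E}_{\omega_*}$ lies entirely in $\L^2_{\bK,1}$ or entirely in $\L^2_{\bK,\tau}\oplus\L^2_{\bK,\overline\tau}$ with one dimension in each. The tool to rule out the bad cases is the $\PPT\TTT$ symmetry: since $[\MM_W,\PPT\TTT]=0$, the antilinear operator $\PPT\TTT$ maps $\mathcal{E}_{\omega_*}$ to itself, and one checks from $\RRT(\PPT\TTT)=(\PPT\TTT)\RRT^{-1}$ (which follows by a short computation from \eqref{def.P}--\eqref{def.R}) that $\PPT\TTT$ sends $\L^2_{\bK,\sigma}$ to $\L^2_{\bK,\overline\sigma}$. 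Hence $\dim(\mathcal{E}_{\omega_*}\cap\L^2_{\bK,\tau})=\dim(\mathcal{E}_{\omega_*}\cap\L^2_{\bK,\overline\tau})$ and $\dim(\mathcal{E}_{\omega_*}\cap\L^2_{\bK,1})$ is preserved; combined with the total being $2$, the only survivors are $(2,0,0)$ restricted to the $1$-eigenspace, and $(0,1,1)$. For the first alternative I would additionally have to exclude the $\RRT$-invariant cases landing in $\L^2_{\bK,\tau}$ or $\L^2_{\bK,\overline\tau}$ alone: these are impossible because $\PPT\TTT$ would then map a nonzero vector out of the eigenspace, a contradiction. This yields the dichotomy (1)--(2).

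For the ``moreover'' part, assume case (2). Pick any unit vector $\bPsi_1\in\mathcal{E}_{\omega_*}\cap\L^2_{\bK,\tau}$ (normalized in $\|\cdot\|_W$, which is legitimate since $\langle\cdot,\cdot\rangle_W$ is a genuine inner product equivalent to the standard one). Set $\bPsi_2:=\PPT\TTT\bPsi_1$. By the previous paragraph $\bPsi_2\in\mathcal{E}_{\omega_*}\cap\L^2_{\bK,\overline\tau}$, and by Lemma \ref{prop.inner} $\|\bPsi_2\|_W=\overline{\|\bPsi_1\|_W}=1$. Since $\L^2_{\bK,\tau}$ and $\L^2_{\bK,\overline\tau}$ are distinct $\RRT$-eigenspaces, $\bPsi_1$ and $\bPsi_2$ are linearly independent, so $\mathcal{E}_{\omega_*}=\mathrm{span}\{\bPsi_1,\bPsi_2\}$. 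Finally I invoke Proposition \ref{prop.f} with $(j,k)=(\tau,\overline\tau)$: it gives $\mathcal{F}(\bPsi_1,\bPsi_2)=C\bigl(\begin{smallmatrix}1\\ \ii\end{smallmatrix}\bigr)$ for some complex constant $C$. To see that $C$ can be taken real and nonnegative, note that replacing $\bPsi_1$ by $e^{\ii\theta}\bPsi_1$ (and correspondingly $\bPsi_2=\PPT\TTT\bPsi_1$ by $e^{-\ii\theta}\bPsi_2$, using antilinearity of $\PPT\TTT$) multiplies $\mathcal{F}(\bPsi_1,\bPsi_2)$ by $e^{-2\ii\theta}\cdot$ wait --- more carefully, $\mathcal{F}$ is conjugate-linear in the first slot and linear in the second, so the prefactor is $\overline{e^{\ii\theta}}\,e^{-\ii\theta}=e^{-2\ii\theta}$; choosing $\theta$ appropriately rotates $C$ to $|C|=:C_{\omega_*}\ge 0$. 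This fixes the phase and completes the proof.

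The main obstacle I anticipate is the bookkeeping in the first paragraph: establishing cleanly that $\PPT\TTT$ interchanges the $\tau$- and $\overline\tau$-isotypic subspaces (this rests on the antilinearity of $\TTT$ together with $\RRT\PPT\TTT=\PPT\TTT\RRT^{-1}$, which must be verified from the explicit matrix forms), and then running the dimension count carefully enough to exclude \emph{every} configuration except the two claimed. The phase-normalization argument at the end is a minor point but worth stating precisely, since $\mathcal{F}$'s sesquilinearity makes the phase transform as $e^{-2\ii\theta}$ rather than trivially.
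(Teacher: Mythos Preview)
Your overall strategy is sound and in fact cleaner than the paper's: the paper argues the dichotomy by contradiction, explicitly building three linearly independent eigenvectors when the eigenspace straddles $\L^2_{\bK,1}$ and $\L^2_{\bK,\tau}\oplus\L^2_{\bK,\overline\tau}$, whereas you use the isotypic decomposition of $\mathcal{E}_{\omega_*}$ plus the observation that $\PPT\TTT$ forces $\dim(\mathcal{E}_{\omega_*}\cap\L^2_{\bK,\tau})=\dim(\mathcal{E}_{\omega_*}\cap\L^2_{\bK,\overline\tau})$, which pins the dimension vector to $(2,0,0)$ or $(0,1,1)$ directly. The phase-normalization at the end matches the paper's exactly (they set $\bPsi_1=e^{\ii\arg C_0/2}\bPhi_1$, which is precisely your $e^{-2\ii\theta}$ computation).

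There is, however, one concrete error you should fix. The relation you assert, $\RRT(\PPT\TTT)=(\PPT\TTT)\RRT^{-1}$, is false: from the explicit formulas one finds $[\PPT\TTT\vf](\bx)=\overline{\vf(-\bx)}$, and since $R$ is a real matrix this gives $\RRT(\PPT\TTT)=(\PPT\TTT)\RRT$, i.e.\ they \emph{commute}. Worse, your stated relation would actually imply the opposite of what you need: if $\RRT\bPsi=\tau\bPsi$ then $\RRT(\PPT\TTT\bPsi)=(\PPT\TTT)\RRT^{-1}\bPsi=(\PPT\TTT)(\overline\tau\bPsi)=\tau\,\PPT\TTT\bPsi$, so $\PPT\TTT$ would \emph{preserve} $\L^2_{\bK,\tau}$ and your dimension equality would collapse. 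The correct mechanism is commutation \emph{plus} antilinearity: $\RRT(\PPT\TTT\bPsi)=(\PPT\TTT)\RRT\bPsi=(\PPT\TTT)(\tau\bPsi)=\overline\tau\,\PPT\TTT\bPsi$, which is exactly the swap $\L^2_{\bK,\tau}\leftrightarrow\L^2_{\bK,\overline\tau}$ you want. Once this is corrected, your argument goes through without further change. (Also, tidy up the enumeration in your first paragraph: you list $(0,1,1)$ and $(2,0,0)$ among the cases to ``rule out'' and then keep them as the survivors, which reads confusingly.)
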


\begin{proof}
Recall that $\L^2_\bK=\L^2_{\bK,1}\oplus\L^2_{\bK,\tau}\oplus\L^2_{\bK,\overline{\tau}}$ and the direct sum is an orthogonal sum. Assume the 2-dimensional eigenspace $\mathcal{E}_{\omega^*}$ is neither in $\L^2_{\bK,1}$ nor in $\L^2_{\bK,\tau}\oplus\L^2_{\bK,\overline{\tau}}$. Namely, there exists $\bPsi\neq 0\in\mathcal{E}_{\omega^*}$, but $\bPsi\not\in \L^2_{\bK,1}$ and $\bPsi\not\in \L^2_{\bK,\tau}\oplus\L^2_{\bK,\overline{\tau}}$. By $[\MM_{W}, \RRT]=0$, $\RRT\bPsi\in\mathcal{E}_{\omega_*}$ and thus $\RRT\bPsi-\bPsi\in\mathcal{E}_{\omega_*}$.

Let $\bPhi=\RRT\bPsi-\bPsi$. Evidently, $\bPhi\in\L^2_{\bK,\tau}\oplus\L^2_{\bK,\overline{\tau}}$ and $\bPhi\neq 0$ by assumption $\bPsi\not\in\L^2_{\bK,1}$.
Note that $\RRT\bPhi-\tau\bPhi\in\L^2_{\bK,\overline{\tau}}$ and $\RRT\bPhi-\overline{\tau}\bPhi\in\L^2_{\bK,\tau}$. They  belong to $\mathcal{E}_{\omega_*}$ and can not be both zero. Without loss of generality, suppose $\RRT\bPhi-\tau\bPhi\neq0$. $\PPT\TTT(\RRT\bPhi-\tau\bPhi)\in \mathcal{E}_{\omega_*}$ by $[\MM_{W}, \PPT\TTT]=0$. Thus, we have constructed three linearly independent nonzero functions $\bPsi$, $\RRT\bPhi-\tau\bPhi$ and $\RRT\TTT(\RRT\bPhi-\tau\bPhi)$ which violates the condition $\text{dim}{\mathcal{E}_{\omega_*}}=2$. Thus, $\mathcal{E}_{\omega_*}$ is either in $\L^2_{\bK,1}$ or in $\L^2_{\bK,\tau}\oplus\L^2_{\bK,\overline{\tau}}$. We turn to the second part.

Similar to the above argument, there exist $\bPhi_1\neq 0\in\L^2_{\bK,\tau}$, $\bPhi_2=\PPT\TTT\bPhi_1\in\L^2_{\bK,\overline{\tau}}$ and $\|\bPhi_1\|_W=\|\bPhi_2\|_W=1$ such that $\mathcal{E}_{\omega_*}=\text{span}\{\bPhi_1,\bPhi_2\}$. By Propostion \ref{prop.f}, there exists a constant $C_0\in\mathbb{C}$ such that $\mathcal{F}(\bPhi_1,\bPhi_2)=C_0\begin{pmatrix}1\\\ii\\\end{pmatrix}$.
Let $\bPsi_1 = \bPhi_1e^{\ii\arg{C_0}/2}\in\L^2_{\bK,\tau}$, $C_{\omega_*}=|C_0|$ and $\bPsi_2=\PPT\TTT\bPsi_1\in\L^2_{\bK,\overline{\tau}}$ where $\text{arg}C_0$ reprensents the angle of the complex number $C_0$.  Thus $\mathcal{F}(\bPsi_1,\bPsi_2)=C_{\omega_*}\begin{pmatrix}1\\\ii\\\end{pmatrix}$. Since $ \mathcal{E}_{\omega_*}$ is two-dimensional, $\mathcal{E}_{\omega_*}=\text{span}\{\bPsi_1\}\oplus\text{span}\{\bPsi_2\}$. In other words, $\mathcal{E}_{\omega_*}$ does not depend on the choices of $\bPsi_1$ and $\bPsi_2$.
\end{proof}

The above proposition states that any 2-dimensional eigenspace of $\MM_W$ at high symmetry point $\bK$ can be characterized by the eigenspaces of $\RRT$ on $\L^2_{\bK}$. We now turn to the behavior of the dispersion relation near $\bK$ at the degenerate eigenvalues. This plays an essential role in studying envelope dynamics. It turns out that the two-fold degeneracy implies a conical intersection of the dispersion relation which we shall give in the following theorem.

\begin{theo}\label{theo.diracpoint}
Let  $W(\bx)$ be a honeycomb material weight in  Definition \ref{honeycomb_material}. Assume $\omega_{_D}=\omega_b(\bK)=\omega_{b+1}(\bK)$ for some $b\ge 1$ is a two-fold degenerate eigenvalue of $\MM_W$ in $\L^2_\bK$ and $\mathcal{E}_{\omega_{_D}}\subset \L^2_{\bK,\tau}\oplus \L^2_{\bK,\overline{\tau}}$. Let
\begin{equation}\label{eq.cf}
C_D=\frac{1}{2}\Bigl | \mathcal{F}(\bPsi_1,\bPsi_2)\cdot \begin{pmatrix}1\\-\ii\\\end{pmatrix}\Bigr |,
\end{equation}
where $\bPsi_j,\, j=1,2$ are given in Proposition \ref{pro.psi}.
If  $C_D>0$, then
 $(\bK,\omega_{_D})$ is a Dirac point in the sense that there exist Lipschitz continuous functions $e_-(\bk)$, $e_+(\bk)$ and $\chi_0>0$, such that             \begin{equation*}
                \begin{array}{lll}
                \omega_{b+1}(\bk)-\omega_{_D} &=& +C_D|\bk-\bK|(1+e_+(\bk)),\\
                \omega_{b}(\bk)-\omega_{_D} &= &-C_D|\bk-\bK|(1+e_-(\bk)),
                \end{array}
            \end{equation*}
where  $|e_{\pm}(\bk)|<C|\bk-\bK|$ when $|\bk-\bK|<\chi_0$.
\label{theo.conical}
\end{theo}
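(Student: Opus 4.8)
The plan is to run a Lyapunov--Schmidt reduction of the Bloch eigenvalue problem $\MM_W(\bk)\bu=\omega(\bk)\bu$ near $(\bk,\omega)=(\bK,\omega_{_D})$, in the spirit of \cite{fefferman2018honeycomb,lee2019elliptic}, collapsing it to a $2\times2$ Hermitian eigenvalue problem whose leading symbol is read off from the form $\mathcal{F}$ by means of Propositions \ref{pro.psi} and \ref{prop.f}. Write $\bs\kappa:=\bk-\bK$. Since $\MM_W(\bk)$ differs from $\MM_W(\bK)$ only in its zeroth-order part, $\MM_W(\bk)=\MM_W(\bK)+\delta\MM(\bs\kappa)$ with
\begin{equation*}
\delta\MM(\bs\kappa)=-\,W(\bx)\begin{pmatrix}0&0&\bs\kappa^{(1)}\\0&0&\bs\kappa^{(2)}\\\bs\kappa^{(1)}&\bs\kappa^{(2)}&0\end{pmatrix},
\end{equation*}
a \emph{bounded} multiplication operator, linear in $\bs\kappa$ and self-adjoint with respect to $\la\cdot,\cdot\ra_{_W}$ (the middle matrix being real symmetric and $W$ Hermitian). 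Because $\omega_{_D}$ is two-fold degenerate it is isolated in $\operatorname{spec}\MM_W(\bK)$, so the $\la\cdot,\cdot\ra_{_W}$-orthogonal Riesz projection $P$ onto $\mathcal{E}_{\omega_{_D}}$ is well defined and of rank two; set $Q=I-P$. By Proposition \ref{pro.psi}, together with the orthogonality of the $\RRT$-eigenspaces (Lemma \ref{prop.inner}), $\{\bPsi_1,\bPsi_2\}$ with $\bPsi_1\in\L^2_{\bK,\tau}$, $\bPsi_2=\PPT\TTT\bPsi_1\in\L^2_{\bK,\overline{\tau}}$ and $\|\bPsi_j\|_W=1$ is a $\la\cdot,\cdot\ra_{_W}$-orthonormal basis of $\mathcal{E}_{\omega_{_D}}$.

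Splitting $\bu=P\bu+Q\bu$ and using that $Q(\MM_W(\bK)-\omega)Q$ has a bounded inverse on $\operatorname{Ran}Q$ uniformly for $\omega$ near $\omega_{_D}$, the $Q$-component of the eigenvalue equation is solved by $Q\bu=-[Q(\MM_W(\bK)+\delta\MM-\omega)Q]^{-1}Q\,\delta\MM\,P\bu$, which is $O(|\bs\kappa|)\|P\bu\|_W$ and analytic in $(\bs\kappa,\omega)$. Substituting into the $P$-component, $\omega\in\operatorname{spec}\MM_W(\bk)$ near $\omega_{_D}$ if and only if $\det(\mathcal{A}(\bs\kappa,\omega)-\omega I)=0$, where $\mathcal{A}(\bs\kappa,\omega)\colon\mathcal{E}_{\omega_{_D}}\to\mathcal{E}_{\omega_{_D}}$ is Hermitian for real $\omega$ and, in the basis $\{\bPsi_1,\bPsi_2\}$,
\begin{equation*}
\mathcal{A}(\bs\kappa,\omega)=\omega_{_D}I+M_0(\bs\kappa)+O(|\bs\kappa|^2),\qquad \bigl(M_0(\bs\kappa)\bigr)_{jk}=\la\bPsi_j,\delta\MM(\bs\kappa)\bPsi_k\ra_{_W},
\end{equation*}
the $O(|\bs\kappa|^2)$ remainder being the Schur complement, which carries two factors of $\delta\MM$. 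A direct computation cancels the weight and leaves precisely the integrand of (\ref{eq.deff}): $\la\bPsi_j,\delta\MM(\bs\kappa)\bPsi_k\ra_{_W}=-\bs\kappa\cdot\mathcal{F}(\bPsi_j,\bPsi_k)$. By Propositions \ref{pro.psi} and \ref{prop.f}, $\mathcal{F}(\bPsi_1,\bPsi_1)=\mathcal{F}(\bPsi_2,\bPsi_2)=0$ and $\mathcal{F}(\bPsi_1,\bPsi_2)=C_D(1,\ii)^T=\overline{\mathcal{F}(\bPsi_2,\bPsi_1)}$, with $C_D\ge0$ the constant of (\ref{eq.cf}); hence
\begin{equation*}
M_0(\bs\kappa)=-C_D\begin{pmatrix}0&\bs\kappa^{(1)}+\ii\bs\kappa^{(2)}\\\bs\kappa^{(1)}-\ii\bs\kappa^{(2)}&0\end{pmatrix},
\end{equation*}
whose eigenvalues are $\pm C_D|\bs\kappa|$.

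Writing $\mu=\omega-\omega_{_D}$, the characteristic equation becomes $\mu^2=C_D^2|\bs\kappa|^2+O(|\bs\kappa|^3)$ for roots with $\mu=O(|\bs\kappa|)$. The bounded perturbation $\delta\MM$ preserves the rank of the Riesz projection, so $\MM_W(\bk)$ has exactly two eigenvalues inside the contour; using $C_D>0$ they are $\omega_{_D}+\mu_\pm(\bk)$ with $\mu_\pm(\bk)=\pm C_D|\bk-\bK|\,(1+e_\pm(\bk))$ and $|e_\pm(\bk)|\le C|\bk-\bK|$ for $|\bk-\bK|<\chi_0$. Lipschitz continuity of $e_\pm$ follows by factoring the discriminant of $\mathcal{A}(\bs\kappa,\cdot)-\omega I$ as $|\bs\kappa|^2$ times a function that is Lipschitz in $\bs\kappa$ and equals $C_D^2>0$ at $\bs\kappa=0$, and then extracting its square root; away from $\bk=\bK$ the two eigenvalues are simple, hence real-analytic, which fixes the labelling $\mu_+\leftrightarrow\omega_{b+1}(\bk)$, $\mu_-\leftrightarrow\omega_b(\bk)$ by continuity and the monotone ordering of the dispersion bands. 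This is the asserted conical intersection, with cone slope $C_D$.

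Since $\delta\MM(\bs\kappa)$ is bounded, the functional-analytic side of the reduction (analyticity of the Riesz projection and of the Schur complement, uniform resolvent bounds on $\operatorname{Ran}Q$) is routine. The substantive work is (i) the bookkeeping that identifies the reduced symbol with $-\bs\kappa\cdot\mathcal{F}(\bPsi_j,\bPsi_k)$ and then applies the $\RRT$-symmetry results (Propositions \ref{pro.psi}, \ref{prop.f}) to pin down its exact shape --- this is where the hypothesis $C_D>0$ enters, guaranteeing a nondegenerate cone --- and (ii) the Lipschitz regularity of $e_\pm(\bk)$ up to the apex $\bk=\bK$, where the square-root structure of the cone must be handled with care.
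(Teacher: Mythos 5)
Your proposal is correct and follows essentially the same route as the paper: both carry out a Lyapunov--Schmidt reduction of $\MM_W(\bK+\kappa)$ about the two-dimensional eigenspace, identify the leading $2\times2$ symbol with $-\kappa\cdot\mathcal{F}(\bPsi_j,\bPsi_k)$ via Propositions \ref{pro.psi} and \ref{prop.f}, and solve the resulting characteristic relation $\tilde{\omega}^2=C_D^2|\kappa|^2+O(|\kappa|^3)$ to get the cone of slope $C_D$. Your Riesz-projection/Schur-complement packaging and the explicit attention to the Lipschitz regularity of $e_\pm$ and the band labelling are only slightly more careful renditions of the paper's $\det G(\tilde{\omega},\kappa)=0$ computation.
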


\begin{proof}
To prove $(\bK,\omega_D)$ is a Dirac point, the key is to solve the eigenvalue problem at $\bk=\bK+\kappa$ with $\kappa=\begin{pmatrix} \kappa^{(1)}\\\kappa^{(2)}\end{pmatrix}$ being  sufficiently small,
\begin{equation}\label{eq:pert1}
    \begin{array}{l}
        \MM_W(\bK+\kappa)\mathbf{u}(\bx,\bK+\kappa) = \omega(\bK+\kappa)\mathbf{u}(\bx,\bK+\kappa),\\
        \bu(\bx+\vv,\bK+\kappa) = \bu(\bx,\bK+\kappa),
    \end{array}
    \end{equation}
    where $\MM_W(\bK+\kappa) = \MM_W(\bK)+W(\bx) B(\kappa)$, and $$B(\kappa)=\begin{pmatrix}
        0 & 0 &-\kappa^{(1)}\\
        0 & 0 &-\kappa^{(2)}\\
        -\kappa^{(1)}&-\kappa^{(2)} & 0\\
    \end{pmatrix}.$$

Regarding $W(\bx)B(\kappa)$ being a perturbation to the operator $\MM_\omega(\kappa)$, we follow the perturbation theory for the spectrum of linear operator.  We first expand the eigenvalue and the eigenfunctions as follows
\begin{equation}\label{eq.expan}
\omega(\bK+\kappa)=\omega_D+\tilde{\omega}, \quad \bu(\bx,\bK+\kappa)=\alpha_1\bu_1+\alpha_2\bu_2+\tilde{\bu},
\end{equation}
where $\bu_i=e^{-\ii\bK\cdot\bx}\bPsi_i$, $\la \bu_i, \tilde{\bu} \ra_W=0, ~ i=1,2$ and $\alpha_1,\alpha_2\in \mathbb{C}$ are to be determined.

Next, we substitute the expansions (\ref{eq.expan}) into  (\ref{eq:pert1}) and obtain
\begin{equation}
    (\MM(\bK)-\omega_D)\tilde{\bu} = (-W(\bx)B(\kappa)+\tilde{\omega})(\alpha_1\bu_1+\alpha_2\bu_2+\tilde{\bu}),
    \label{eq:mm}
\end{equation}
where we have used the fact  that $\text{Ker}\left(\MM(\bK)-\omega_D\right)=\text{span}\{\bu_1,\bu_2\}$.

Define the operators $P_{\parallel}$ and $P_{\bot}$  as $\forall \bu\in\L_\text{per}^2(\Lambda)$
\begin{equation*}
    P_{\parallel}\bu := \la\bu_1,\bu\ra_W\bu_1+\la\bu_2,\bu\ra_W\bu_2, \quad P_{\bot} = I - P_{\parallel}.
\end{equation*}
Operating $P_{\parallel}$ and $P_{\bot}$ on both sides of (\ref{eq:mm}) yields
\begin{align}
    (\MM(\bK)-\omega_D)\tilde{\bu} = P_\bot(-W(\bx)B(\kappa)+\tilde{\omega})(\alpha_1\bu_1+\alpha_2\bu_2+\tilde{\bu})\label{eq:mm1},\\
    0 = P_\parallel(-W(\bx)B(\kappa)+\tilde{\omega})(\alpha_1\bu_1+\alpha_2\bu_2+\tilde{\bu}).
    \label{eq:mm2}
\end{align}

For sufficiently small $\tilde{\omega}$, we solve $\tilde{\bu}$ in terms of $\bu_1$ and $\bu_2$ as
  \begin{equation}\label{eq:mm3}
      \tilde{\bu} = \hat{c}(\alpha_1\bu_1+\alpha_2\bu_2),
  \end{equation}
  where $\hat{c} = \Big(I-(\MM(\bK)-\omega_D)^{-1}P_\bot(-W(\bx)B(\kappa)+\tilde{\omega})\Big)^{-1}(\MM(\bK)-\omega_D)^{-1}P_\bot(-W(\bx)B(\kappa)+\tilde{\omega})$.

Plugging (\ref{eq:mm3}) into (\ref{eq:mm2}),  we obtain a linear system of algebraic equations for the undetermined coefficients $\alpha_1$ and $\alpha_2$ as
 \begin{equation}\label{eq.solv}
 G(\tilde{\omega},\kappa)\begin{pmatrix}\alpha_1\\\alpha_2\end{pmatrix}=0.
 \end{equation}
Here the $2\times 2$ matrix $G(\tilde{\omega},\kappa)=(G_{j,l})_{2\times2},~j,l=1,2$ is of the form
 \begin{equation*}
      G_{j,l} = \tilde{\omega}\delta_{jl}-\la\bPsi_j,W(\bx)B(\kappa)\bPsi_l\ra_{W}-O(|\kappa|(|\kappa|+|\tilde{\omega}|)),
 \end{equation*}
where we have used $\bu_j = e^{-\ii\bK\cdot\bx}\bPsi_j,~ j=1,2$.

Existence of nontrivial solutions of (\ref{eq.solv}) implies the solvability condition
    $$det(G(\tilde{\omega},\kappa)) = 0.$$
A direct calculation yields
\begin{equation}\label{eq.omegasq}
\tilde{\omega}^2-C_D^2|\kappa|^2+g_{21}(\tilde{\omega}, \kappa)+g_{12}(\tilde{\omega}, \kappa)+g_{03}(\tilde{\omega}, \kappa)=0,
\end{equation}
where $|g_{rs}(\tilde{\omega}, \kappa)| \leq C|\tilde{\omega}|^r|\kappa|^s$, $r,s\in\{0,1,2,3\}$ and we have used
$\la\bPsi_j,W(\bx)B(\kappa)\bPsi_l\ra_{W}=\kappa\cdot \mathcal{F}(\bPsi_j,\bPsi_l)$ by Proposition \ref{prop.f}.

Solving (\ref{eq.omegasq}) for sufficiently small $|\kappa|$ implies
$\tilde{\omega} = \pm C_D|\kappa|(1+O(|\kappa|))$,
which follows
 \begin{equation*}
\omega(\bK+\kappa) = \omega_{_D}+\tilde{\omega} = \omega_{_D}\pm C_D|\kappa|(1+e_\pm(\kappa)),
 \end{equation*}
 with $e_{\pm}(\kappa)=O(|\kappa|)$. This completes the proof.
\end{proof}

\subsection{Spectral gap opening under $\PPT\TTT$-symmetry breaking perturbations}\label{sec.gap}
In this subsection, we shall investigate the stability of the Dirac points under the symmetry-breaking perturbations.  Consider the perturbed material weight
\begin{equation*}
W^\delta(\bx) = W(\bx)+\delta V(\bx),
\end{equation*}
 with $\delta$ a small parameter. This work focuses on the $\PPT\TTT$-symmetry breaking perturbations. Specifically, we assume that the perturbed material weight $V(\bx)$ is a $\Lambda$-periodic $3\times 3$ Hermitian matrix and anti-$\PPT\TTT$-symmetry, i.e., $\overline{V(-\bx)}=-V(\bx)$. It immediately follows that $\PPT\TTT\MM_V=-\MM_V\PPT\TTT$ where $\MM_V=V(\bx)\LL$ with $\LL$  given in (\ref{eq.LL}).

To investigate the behavior of Dirac point $(\bK, \omega_{_D})$ under the above perturbation, we need to solve the perturbed eigenvalue problem
\begin{equation}\label{eigen_perturb}
\MM_{W^{\delta}} \bPsi^{\delta}:=(\MM_W+\delta\MM_V) \bPsi^{\delta}=\omega^\delta \bPsi^{\delta},\quad \bPsi^{\delta}\in \L^2_{\bK}.
\end{equation}

Substituting the asymptotic expansions
\begin{equation}\label{asy}
\bPsi^{\delta}=(\beta_1\bPsi_1+\beta_2\bPsi_2)+\widetilde{\bPsi},\quad \text{and }
\omega^\delta= \omega_{_D}+\tilde{\omega},
\end{equation}
to (\ref{eigen_perturb}) yields
\begin{equation*}
(\MM_W-\omega_{_D})\widetilde{\bPsi}=(\tilde{\omega}-\delta \MM_V)(\beta_1\bPsi_1+\beta_2\bPsi_2+\widetilde{\bPsi}),
\end{equation*}
where $\bPsi_1$ and $\bPsi_2$ are the eigenfunctions of $\MM_W$ corresponding to the Dirac point $(\bK,\omega_{_D})$ given in Theorem \ref{theo.diracpoint}, and $\beta_j\in\mathbb{C},~j=1,2$ are to be determined.

Similar to the proof of Theorem \ref{theo.diracpoint},  we shall obtain the solvability condition
$\det(\delta Q+U-\tilde{\omega})=0$,
where $U=(U_{ij})_{2\times2}$ with $U_{ij}=O(|\delta|(|\delta|+|\tilde{\omega}|))$ and $Q=\left(Q_{jl}\right)_{2\times2}, ~j,l=1,2$ is defined as
\begin{equation}\label{eq.bifurcation}
Q_{jl}= \la\bPsi_j,\MM_V \bPsi_l\ra_W.
\end{equation}

The key is to evaluate the entries of $Q$ explicitly. Note that $\MM_V$ is generally NOT a self-adjoint operator in $\L^2_\bK$ with the $W$-weighted inner product defined in \ref{eq.Wweight}. However, the matrix $Q$ is Hermitian. This claim can be directly obtained from the following calculation
\begin{equation*}
\begin{split}
    Q_{jl}&=\la\bPsi_j,\MM_V \bPsi_l\ra_W=\la\bPsi_j,V(\bx)W^{-1}(\bx)\MM_W \bPsi_l\ra_W\\
     &=\omega_{_D}\la\bPsi_j,V(\bx)W^{-1}(\bx) \bPsi_l\ra_W = \omega_{_D}\int_\Omega  \bPsi_j^*W^{-1}(\bx)V(\bx)W^{-1}(\bx) \bPsi_l\dd \bx,\\
\end{split}
\end{equation*}
where we have used the facts $\bPsi_j$, $j=1,2$ are the eigenfunctions of $\MM_W$ and $W^{-1}(\bx)V(\bx)W^{-1}(\bx)$ is a Hermitian matrix.
In other words, we have proved
\begin{equation}\label{eq.qrell}
 Q_{11},Q_{22}\in\mathbb{R}, \quad \text{and } Q_{12}=\overline{Q_{21}}.
\end{equation}

 Recalling $\bPsi_1=\PPT\TTT\bPsi_2$, we have by Proposition \ref{prop.inner}
\begin{equation}\label{eq.m1}
\begin{aligned}
&\la\bPsi_1,\MM_V \bPsi_1\ra_W=\la\pt\bPsi_2,\MM_V \pt\bPsi_2\ra_W\\
&=-\la\pt\bPsi_2,\pt\MM_V \bPsi_2\ra_W=-\overline{\la\bPsi_2,\MM_V \bPsi_2\ra_W}\\
\end{aligned}
\end{equation}
and
\begin{equation}\label{eq.m2}
\begin{aligned}
&\la\bPsi_1,\MM_V \bPsi_2\ra_W=\la\pt\bPsi_2,\MM_V \pt\bPsi_1\ra_W\\
&=-\la\pt\bPsi_2,\pt\MM_V \bPsi_1\ra_W=-\overline{\la\bPsi_2,\MM_V \bPsi_1\ra_W}.
\end{aligned}
\end{equation}

Summarizing the above calculations implies
\begin{equation}\label{eq.qs}
Q_{11}=-Q_{22}=\theta_\sharp\in\mathbb{R}, \quad Q_{12}=Q_{21}=0.
\end{equation}

With the simple form of $Q$ in (\ref{eq.qs}), we have
\begin{equation*}
\tilde{\omega}=\pm\delta\theta_\sharp+O(\delta^2).
\end{equation*}
As long as $\theta_\sharp\neq0$, the two-fold degenerate Dirac point $(\bK, \omega_{_D})$ splits into two simple eigenvalues: $\omega^\delta_\pm=\omega_{_D}\pm\delta\theta_\sharp+O(\delta^2)$. By the continuity of $\omega^\delta(\bk)$, we see that a local spectral gap opens under an anti-$\pt$-symmetric perturbation.  As a matter of fact, as long as the perturbation material weight $V(\bx)$ is NOT $\PPT\TTT$-symmetry, there is always a gap opening around the Dirac points with cumbersome calculations.

\subsection{Numerical examples of linear spectrum}
In this subsection, we show some simulations to demonstrate our analysis from the numerical aspect. Introduce a simple $\frac{2\pi}{3}$-rotation variant scalar function
\begin{equation}\label{eq.h}
h(\bx)=\cos(\bk_1\cdot\bx)+\cos(\bk_2\cdot\bx)+\cos(\bk_3\cdot\bx),
\end{equation}
where $\bk_1,\bk_2$ are the dual lattice vectors and $\bk_3=-\bk_1-\bk_2$.
 Obviously, $h(\bx)$ is $\Lambda$-periodic, even, and real, see Figure \ref{fig.mu3d}.
Based on $h(\bx)$, we construct the material weight $W(\bx)$ as follows
\begin{equation}\label{eq.parameter}
a(\bx)=(1-\frac{1}{5}h(\bx))^{-1}, \quad \text{and } A(\bx)=\begin{pmatrix}
    10-h(\bx) & 0\\
    0 & 10-h(\bx)\\
\end{pmatrix}.
\end{equation}
Apparently, $W(\bx)$ defined by (\ref{eq.parameter}) is a honeycomb material weight in the sense of Definition \ref{honeycomb_material}.

We solve the eigenvalue problem (\ref{eq:eigenk}) for the specific honeycomb material weight (\ref{eq.parameter}) for $\bk\in [-5,5] \times [-5,5]$ which contains the Brillouin zone $\mathcal{B}$.
 The numerical method which we use here is the Fourier collocation method, see \cite{Yang2010Nonlinear} for example.  The two dispersion surfaces $\omega_1(\bk)$ and $\omega_2(\bk)$, which are the smallest two positive eigenvalues of (\ref{eq:eigenk}), are shown in Figure \ref{fig.mu3d}. They conically intersect with each other at the vertices of the Brillouin zone.

\begin{figure}[htbp]
  \centering
  \includegraphics[width=5cm]{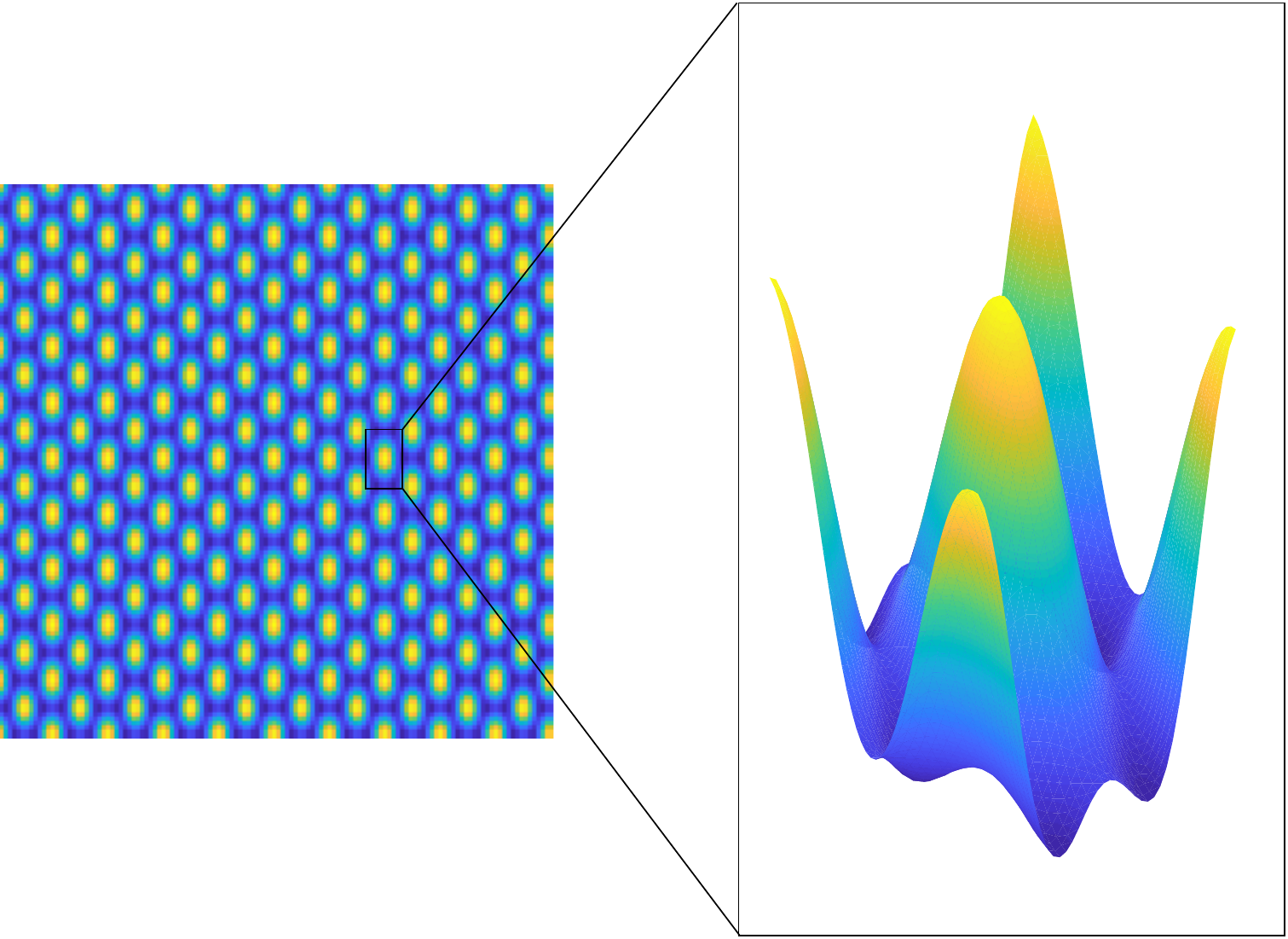}
  \includegraphics[width=6cm]{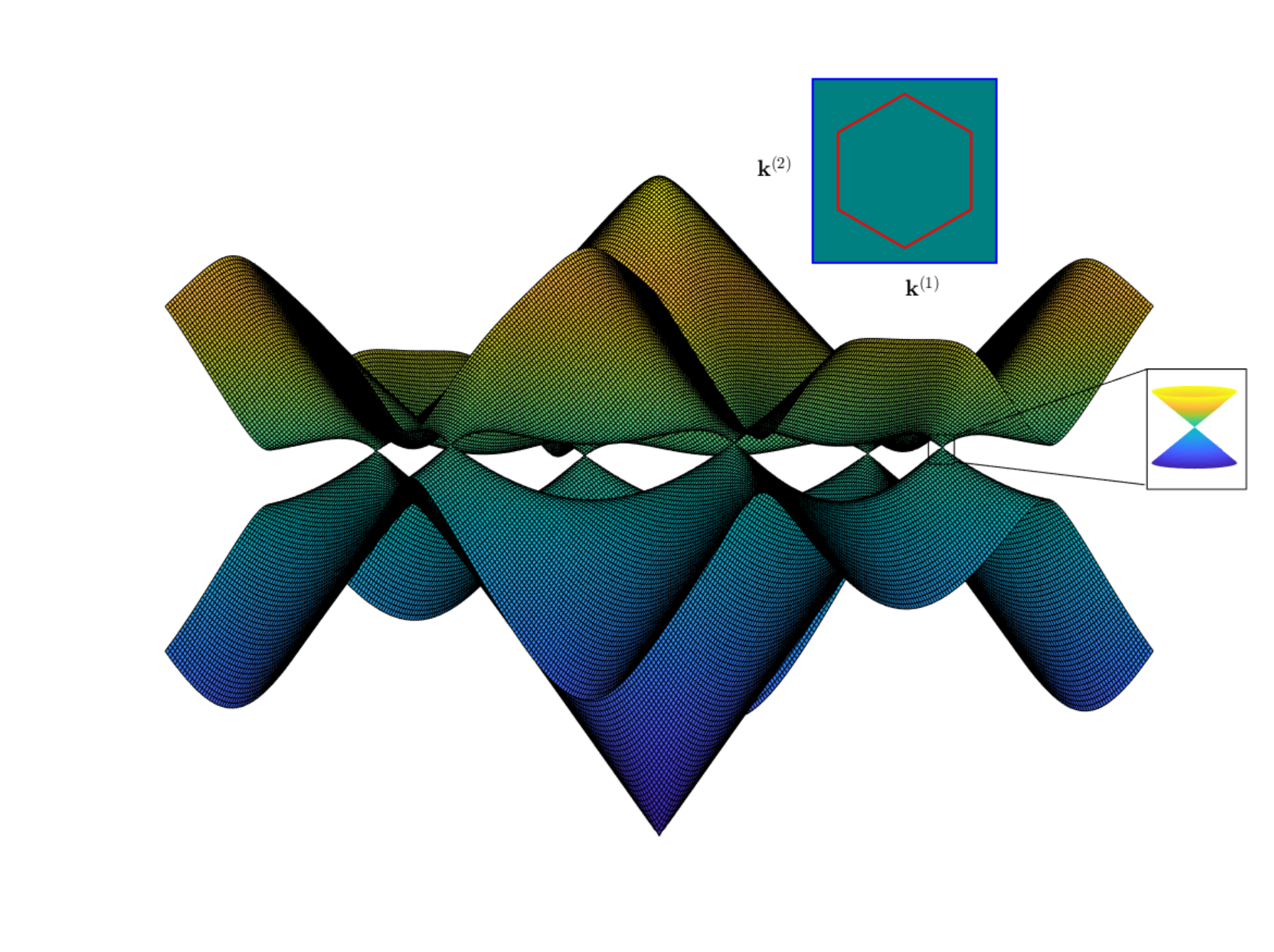}
  \caption{ Left panel: Image of $h(\bx)$ given in (\ref{eq.h}). Right panel: The lowest two positive dispersion surfaces $\omega_1(\bk)$ and $\omega_2(\bk)$ in the domain shown on the top. Dirac points occur at the intersections of two dispersion surfaces. The inset shows the zoomed-in Dirac cone at one Dirac point. }
  \label{fig.mu3d}
\end{figure}

To verify that the disappearance of the Dirac points under the $\PPT\TTT$-symmetry breaking perturbation, we numerically solve the perturbed eigenvalue problem (\ref{eigen_perturb}). We still use the honeycomb material weight $W(\bx)$ given in (\ref{eq.parameter}). The perturbation weight $V(\bx)$ is
\begin{equation}\label{eq.v}
V(\bx)= \begin{pmatrix}0 &\ii h(\bx) & 0\\
-\ii h(\bx)&0 &0\\
0&0 &0\\
 \end{pmatrix}.
 \end{equation}
Apparently, this perturbation weight  breaks the $\PPT\TTT$-symmetry as $\PPT\TTT\MM_V=-\MM_V\PPT\TTT$. Physically, the whole material weight $W^\delta$ corresponds to the magneto-optic material, see \cite{HR:07}. The results are shown in Figure \ref{fig.mu2d_break}. For clearness, we plot the dispersion relation along the direction of $\bk_2$ centered at $\bK$, i.e., $\bk = \bK + \lambda\bk_2$ with $\lambda\in [-0.5,0.5]$. The two branches of the dispersion relation disjoint with each other and a local gap appears once the perturbation is applied. Further we see that the gap increases in proportion to $\delta$. The numerical simulations agree well with our analysis.

\begin{figure}[htbp]
     \centering
     \begin{subfigure}[b]{0.4\textwidth}
         \centering
         \label{figh:a}\includegraphics[width=\textwidth]{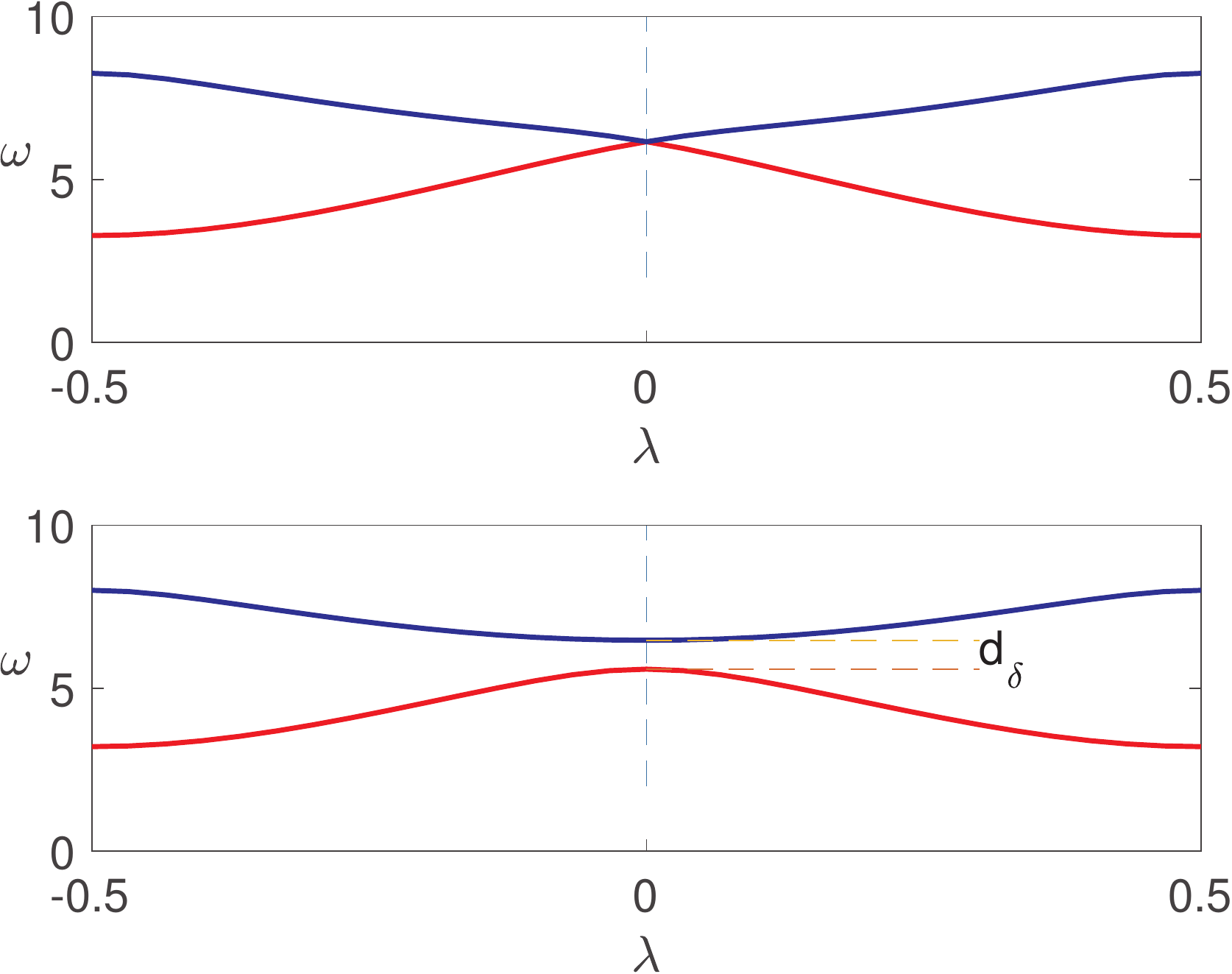}
         \end{subfigure}
     \hspace{2 em}
          \begin{subfigure}[b]{0.4\textwidth}
         \centering
         \label{figh:b}\includegraphics[width=\textwidth]{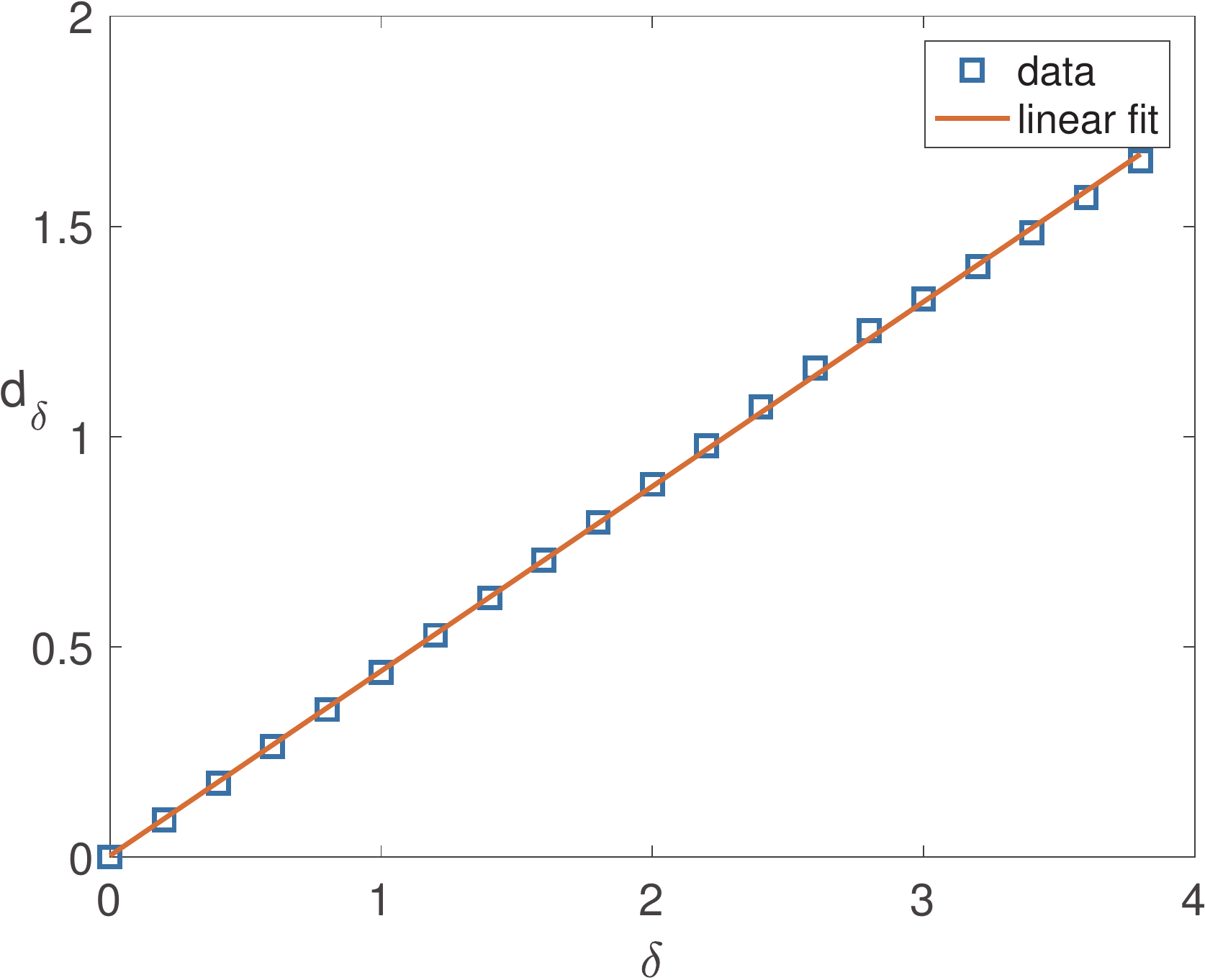}
     \end{subfigure}
       \caption{Left panel: the dispersion relation of the perturbed material weight $W^\delta(\bx)$ given in (\ref{eq.parameter}) and (\ref{eq.v}) along $\bk_2$ direction for $\delta=0$ (top) and $\delta=2$ (bottom). Right panel: the local gap width $d_\delta$ versus the perturbation parameter $\delta$.}
  \label{fig.mu2d_break}
\end{figure}

\section{Nonlinear dynamics of envelopes}\label{sec.envelope}
With the analytical structure of the linear spectrum near the Dirac points, we now can investigate the nonlinear dynamics of the envelope associated with the Dirac points. The derivation is presented in this section.

Turning back to the original Maxwell's equations (\ref{eq:maxwellmain})-(\ref{eq.constitutive}), we still divide the system of the equations into the TE component
\begin{equation}
    \ii\partial_t\bPsi_{e} + \MM_{W_e}\bPsi_{e} + \sigma N_e=0,
    \label{eq.TEnonlinear}
\end{equation}

 and the TM component
\begin{equation}
    \ii\partial_t\bPsi_{m} - \MM_{W_m}\bPsi_{m} + \sigma N_m=0,
    \label{eq.TMnonlinear}
\end{equation}
where $$N_e=\ii W_e(\bx)\begin{pmatrix}
        \partial_t (|\bE|^2\bPsi_e^\bot)\\
        0\\
        \end{pmatrix}
        , \quad N_m=\ii W_m(\bx)\begin{pmatrix}
        0\\
        \partial_t (|\bE|^2\bPsi_m^{(3)})\\
        \end{pmatrix}
        $$ are the nonlinear terms with $|\bE|^2=|\bPsi_e^\bot|^2+|\bPsi_m^{(3)}|^2$.
It is seen that the  TE and TM  components of electromagnetic waves are coupled to each other due to the nonlinearity. In this work, for simplicity, we will assume that one of the components is zero initially; for example, $\bPsi_m(\bx,0)=0$. Then, we only need to consider the TE component $\bPsi_{e}$ for $t>0$ since the TM component $\bPsi_{m}$ remains zero. The analysis of the other case, i.e., $\bPsi_e=0$  is the same.

We assume that the material weight is of the form
\begin{equation}
W_e(\bx)=W(\bx)+\delta \kappa(\delta\bx)V(\bx),
\label{eq.we}
\end{equation}
where  $W(\bx)$ is a honeycomb material weight in Definition \ref{honeycomb_material}, $V(\bx)$ is the perturbation stated in the previous section, $\delta\ll1$ is a small number and $\kappa(\delta\bx)$ is real and bounded.

\subsection{Derivation of the envelope equation}
Here we focus on the regime where the envelope scale, the modulation scale, and the nonlinearity effect are maximally balanced. Specifically, we assume that $\varrho:=\delta/\sigma=O(1)$.

Assume that the initial condition of (\ref{eq.TEnonlinear}) is $
\bPsi_e(\bx,0) = \sum_{j=1}^2\alpha_{j0}(\delta\bx)\bPsi_j(\bx)$,
where $\bPsi_j(\bx),~j=1,2$ are the eigenfunctions of $\MM_W$ corresponding the Dirac point $(\bK, \omega_{_D})$ given in Theorem \ref{theo.conical} and $\alpha_{j0}(\delta \bx), ~j=1,2$ are the slowly varying envelopes.

Introducing $\bX:= \delta \bx$, $T:= \delta t$ and $\bX=(X_1,X_2)^T$, we perform the standard multi-scale analysis. To this end, we first expand the solution $\bPsi_e(\bx,t)$ into
 the following asymptotic expansion
\begin{equation}\label{eq:ansatz}
    \bPsi_e(\bx,t) =  \sum_{j=1}^2\alpha_j(\bX,T)\bPsi_j(\bx)e^{\ii\omega_{_D}t}+\delta\bPsi_{e1}(\bx,t,\bX,T)e^{\ii\omega_{_D}t} +\cdots.
\end{equation}
Substituting (\ref{eq:ansatz}) into (\ref{eq.TEnonlinear}) yields a hierarchy of equations at different orders of $\delta$. The leading order is satisfied automatically.

At order $O(\delta)$,
\begin{equation}
\begin{array}{lll}
    (\MM_W-\omega_{_D})\bPsi_{e1} &= &-\ii(\bPsi_1\partial_T\alpha_1+\bPsi_2\partial_T\alpha_2)\\
    & &-(\alpha_1\kappa(\bX)\MM_V\bPsi_1+\alpha_2\kappa(\bX)\MM_V\bPsi_2)\\
    & &-(W(\bx)\LL_\bX\alpha_1\bPsi_1+W(\bx)\LL_\bX\alpha_2\bPsi_2)\\[0.05cm]
    & & +\omega_{_D}W(\bx)\begin{pmatrix}
        \|\alpha_1\bPsi_1^\bot+\alpha_2\bPsi_2^\bot\|^2(\alpha_1\bPsi_1^\bot+\alpha_2\bPsi_2^\bot)\\0\\
    \end{pmatrix} \\
&    :=& \Gamma_1+\Gamma_2+\Gamma_3+\Gamma_4,
\end{array}
\end{equation}
where $$\LL_\bX=\begin{pmatrix}
    0 & 0 &\ii\partial_{X_1}\\
    0 & 0 &\ii\partial_{X_2}\\
    \ii\partial_{X_1} & \ii\partial_{X_2} & 0 \\
\end{pmatrix}.$$

Applying the solvability conditions
\begin{equation}\label{eq:solvable}
    \la \bPsi_j,\sum_{k=1}^4\Gamma_k\ra_W=0,\, j=1,2
\end{equation}
yields the governing equations for the envelope dynamics.

The main task becomes the calculation of the solvability conditions (\ref{eq:solvable}). Now we  compute  (\ref{eq:solvable})  term by term.
First, the orthogonality between $\bPsi_1$ and $\bPsi_2$ yields
\begin{equation}\label{eq.g1}
\la\bPsi_1,\Gamma_1\ra_W = -\ii\partial_T\alpha_1, \quad \text{and } \la\bPsi_2,\Gamma_1\ra_W = -\ii\partial_T\alpha_2.
\end{equation}
By (\ref{eq.qs}), we obtain
\begin{equation}\label{eq.g2}
\la\bPsi_1,\Gamma_2\ra_W = -\theta_\sharp\kappa(\bX)\alpha_1, \quad \text{and } \la\bPsi_2,\Gamma_2\ra_W = \theta_\sharp\kappa(\bX)\alpha_2.
\end{equation}
Recall that
\[
\mathcal{F}(\bPsi_1,\bPsi_2)=C_D\begin{pmatrix}
            1\\
            \ii\\
        \end{pmatrix}
        ,\,\,
\mathcal{F}(\bPsi_2,\bPsi_1)=C_D\begin{pmatrix}
            1\\
            -\ii\\
        \end{pmatrix}\]
and $\mathcal{F}(\bPsi_1,\bPsi_1)=\mathcal{F}(\bPsi_2,\bPsi_2)=0.$
A direct calculation yields that

\begin{equation*}
\la\bPsi_j,W(\bx)\LL_\bX\alpha_k\bPsi_k\ra_W= \ii\nabla_\bX\alpha_k\cdot\mathcal{F}(\bPsi_j,\bPsi_k).\\
\end{equation*}
It follows that
\begin{equation}\label{eq.g3}
\begin{aligned}
\la\bPsi_1,\Gamma_3\ra_W = -\ii C_D\nabla_\bX\alpha_2\cdot\begin{pmatrix}1\\\ii\\ \end{pmatrix},\\
\la\bPsi_2,\Gamma_3\ra_W = -\ii C_D\nabla_\bX\alpha_1\cdot\begin{pmatrix}1\\-\ii\\ \end{pmatrix}.
\end{aligned}
\end{equation}

Let  $\tilde{\bPsi}_j=\bigl(\begin{smallmatrix}
    \bPsi_j^\bot\\
    0\\
\end{smallmatrix}\bigr)$, $j=1,2$. Obviously,  we have  $\tilde{\bPsi}_j\in\L^2_{\bK,\sigma_j}$ since $\bPsi_j\in\L^2_{\bK,\sigma_j}$, $\sigma_1=\tau$ and $\sigma_2=\overline{\tau}$.
Applying the operator $\RRT$, we have
\begin{equation*}
    \begin{array}{lll}
     \la\tilde{\bPsi}_j,W(\bx)(\tilde{\bPsi}_n^*\tilde{\bPsi}_l)\tilde{\bPsi}_k\ra_W &=&\la\RRT\tilde{\bPsi}_j,\RRT W(\bx)(\tilde{\bPsi}_n^*\tilde{\bPsi}_l)\tilde{\bPsi}_k\ra_W \\
    &=&\la\RRT\tilde{\bPsi}_j, W(\bx)((\RRT\tilde{\bPsi}_n)^*\RRT\tilde{\bPsi}_l)\RRT\tilde{\bPsi}_k\ra_W \\
    &=&\overline{\sigma}_j\overline{\sigma}_n\sigma_l\sigma_k\la\tilde{\bPsi}_j, W(\bx)(\tilde{\bPsi}_n^*\tilde{\bPsi}_l)\tilde{\bPsi}_k\ra_W. \\
        \end{array}
\end{equation*}
Note that the above term vanishes if $\overline{\sigma}_j\overline{\sigma}_n{\sigma}_l{\sigma}_k\neq1$. A simple enumeration implies there exist $\beta_1,\beta_2\in\mathbb{R}$ such that

\[    \la\tilde{\bPsi}_j,W(\bx)(\tilde{\bPsi}_n^*\tilde{\bPsi}_l)\tilde{\bPsi}_k\ra_W    = \begin{cases}
        \beta_1  & \text{if}~  (j,k,n,l)=(1,1,1,1),(2,2,2,2)\\
        \beta_2 & \text{if}~
            (j,k,n,l)=(2,1,1,2),(1,1,2,2),(2,2,1,1),(1,2,2,1)\\
        0 & \text{if}~ \text{otherwise}
    \end{cases}.
\]

Therefore, we obtain
\begin{equation}\label{eq.g4}
\begin{array}{l}
\la\Psi_1,\Gamma_4\ra_W = \omega_{_D}\varrho(\beta_1|\alpha_1|^2\alpha_1+2\beta_2|\alpha_2|^2\alpha_1),\\
\la\Psi_2,\Gamma_4\ra_W =\omega_{_D}\varrho(\beta_1|\alpha_2|^2\alpha_2+2\beta_2|\alpha_1|^2\alpha_2).
\end{array}
\end{equation}

Finally, the governing equation of the envelope associated with Dirac point is obtained by collecting (\ref{eq:solvable})-(\ref{eq.g4}). It is a nonlinear Dirac equation with varying mass which reads
\begin{equation}\label{eq:Dirac}
\begin{split}
\ii\alpha_{1_T}+\theta_\sharp\kappa\alpha_1 +\ii C_D\nabla_\bX\alpha_2\cdot \begin{pmatrix}
    1\\
    \ii\\
\end{pmatrix}-\omega_{_D}\varrho(\beta_1|\alpha_1|^2\alpha_1+2\beta_2|\alpha_2|^2\alpha_1)=0\\
\ii\alpha_{2_T}-\theta_\sharp\kappa\alpha_2 +\ii C_D\nabla_\bX\alpha_1\cdot \begin{pmatrix}
    1\\
    -\ii\\
\end{pmatrix}-\omega_{_D}\varrho(\beta_1|\alpha_2|^2\alpha_2+2\beta_2|\alpha_1|^2\alpha_2)=0
\end{split} \quad .
\end{equation}

For simplicity, we define $\tilde{\bX} = 1/C_D\bX$, $\tilde{\kappa}=\theta_\sharp\kappa$,   $p_1=-\omega_{_D}\varrho\beta_1$, and $p_2=-2\omega_{_D}\varrho\beta_2$. Drop the tilde notation, and we cast (\ref{eq:Dirac}) into
\begin{equation}
    \ii\partial_T\boldsymbol{\alpha} + (\ii\sigma_1\partial_{X_1}-\ii\sigma_2\partial_{X_2}+\kappa(\bX)\sigma_3 )\boldsymbol{\alpha}+\gamma(|\alpha_1|,|\alpha_2|)\boldsymbol{\alpha}=0,
    \label{eq.Dirac2}
\end{equation}
where $\boldsymbol{\alpha}=(\alpha_1,\alpha_2)^T=\bigl(\alpha_1(\bX,T),\alpha_2(\bX,T)\bigr)^T$,
$$\gamma(|\alpha_1|,|\alpha_2|)=\begin{pmatrix}[l]
        p_1|\alpha_1|^2+p_2|\alpha_2|^2 & 0\\
        0&p_1|\alpha_2|^2+p_2|\alpha_1|^2\\
    \end{pmatrix}$$ represents the  nonlinear effect, and $\sigma_1,\sigma_2,\sigma_3$ are Pauli matrices defined as
\begin{equation*}
    \sigma_1 = \begin{pmatrix}
        0&1\\
        1&0\\
    \end{pmatrix},\,
    \sigma_2 = \begin{pmatrix}
        0&-\ii\\
        \ii&0\\
    \end{pmatrix},\,
    \text{and}\,
    \sigma_3 = \begin{pmatrix}
        1& 0\\
        0&-1\\
    \end{pmatrix}.
\end{equation*}

\subsection{A typical numerical comparison}
Our derivation is based on a formal multi-scale analysis. In this subsection, we numerically justify the derivation via a typical comparison between the original Maxwell's equations (\ref{eq.TEnonlinear}) and the reduced envelope equation (\ref{eq:Dirac}). We shall simulate the typical topologically protected wave motion which possesses the chirality and immunity. We use the following physical setups. $W_e(\bx)=W(\bx)+\delta \kappa(\delta\bx)V(\bx)$ with $W(\bx)$ and $V(\bx)$ given in (\ref{eq.parameter}) and (\ref{eq.v}) and $\delta=0.05$. The modulation $\kappa(\delta\bx)$ plays an essential role in these topological phenomena. We choose  a smooth domain wall function $\kappa(\delta\bx)=\tanh(\delta(x_1-f(x_2))$ where the curve $x_1=f(x_2)$ defines the ``edge" between two materials. In the simulation, the edge curve presented by the white curve shown in Figure \ref{fig.compare} is composed of some end-to-end straight lines. With a well prepared initial condition, it is expected to see the wave propagate along the edge curve unidirectionally. It will be seen in the next section that this pattern can not persist in a strong nonlinear medium. Thus we ignore the nonlinearity in the numerical comparison.

We first use the Fourier collocation method, see for example \cite{Yang2010Nonlinear}, to obtain the normalized eigenfunctions $\bPsi_1$ and $\bPsi_2$ of the operator $\MM_{W}$ at Dirac point $\bK=\frac{1}{3}(\bk_1-\bk_2)$. Then, Maxwell's equations (\ref{eq.TEnonlinear}) is numerically solved with the following initial input
\begin{equation}\label{eq.maxwelli}
\bPsi_e(\bx,0) = \beta_1(\delta\bx)\bPsi_1(\bx) +\beta_2(\delta\bx)\bPsi_2(\bx),
\end{equation}
where $\beta_j(\delta\bx),~j=1,2$ denote the initial envelope which will match the initial condition for the envelope equation.

To do the comparison, we simulate the envelope equation (\ref{eq:Dirac}). The parameters $C_D=1.76$ and $\theta_\sharp=0.51$ are calculated numerically via the formulae (\ref{eq.cf}) and (\ref{eq.bifurcation}) with the same $\bPsi_1$ and $\bPsi_2$ above. The initial conditions are also consistent, i.e.,
\begin{equation}\label{eq.travelc}
\begin{pmatrix}
\alpha_1(\bX,0)\\
\alpha_2(\bX,0)\\
\end{pmatrix}=\begin{pmatrix}
\beta_1(\bX)\\ \beta_2(\bX)
\end{pmatrix}.
\end{equation}

To capture the topologically protected wave propagation, we use the following initial envelope
\begin{equation}\label{eq.envini}
\begin{pmatrix}
\beta_1(\bX)\\
\beta_2(\bX)\\
\end{pmatrix}=\begin{pmatrix}
1\\
-\ii\\
\end{pmatrix}\sech^{\frac{\theta_\sharp}{C_D}}(X_1-X_{10})e^{-0.2( X_2-X_{20})^2},
\end{equation}
where the initial central position $(X_{10},X_{20})$ is on the edge. It is noted that this initial envelope corresponds to the topologically protected linear edge state, see next section for details.

In both simulations, the pseudo-spectral method is used for spatial derivatives, and fourth-ordered Runge-Kutta method is used for time integration. The results are shown in Figure \ref{fig.compare}. It is seen in both simulations that the waves propagate along the edge with little energy leaking into the bulks. Moreover, the simulation for the envelope equation can perfectly capture the wave profile and its position under propagation.

The numerical comparison well justifies our derivation of the envelope equation.  We want to point out the original Maxwell's equations have highly oscillatory periodic structure while the envelope equation homogenizes the periodic structures and only describes the behaviors of the envelopes. Thus the envelope equation is a much simpler equation to study the interesting topological phenomena both analytically and numerically. In our simulations, in order to resolve the periodic structures, we need to choose very small space and time steps. This can not be easily implemented on a desktop computer. We run the simulation in a high-performance GPU server with a GPU of Tesla K40c and the computation cost about 10.9 hours. On the other hand, the numerical computation of the envelope equation (\ref{eq:Dirac}) only took about 12.3 seconds on the desktop computer with Xeon(R) CPU E5-26900 @2.90GHz.

\begin{figure}[htbp]
   \centering
   \begin{subfigure}[b]{\textwidth}
     \centering
     \includegraphics[width=\textwidth]{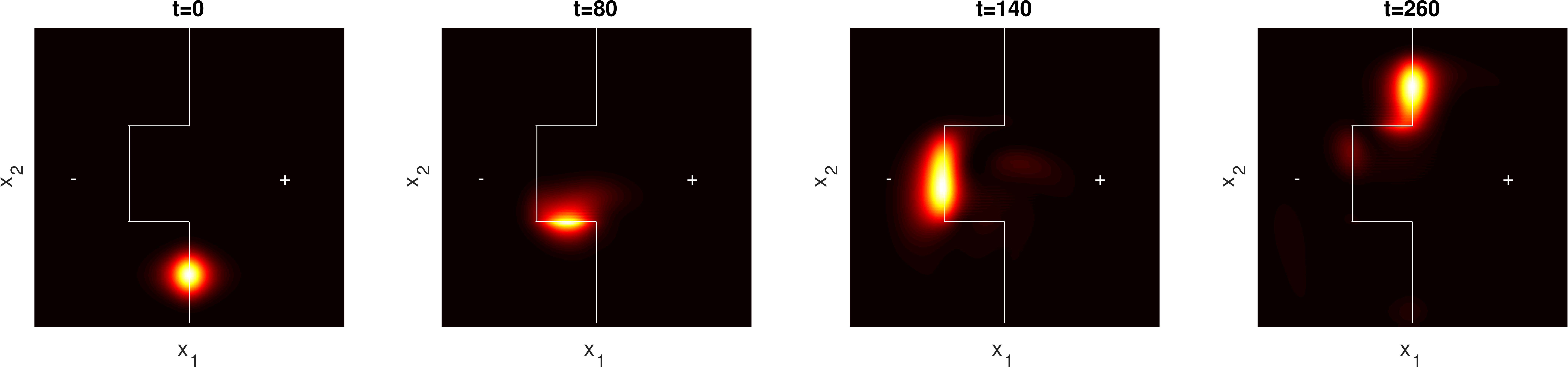}
     \caption{}
     \label{fig.maxwell}
     \end{subfigure}\\
   \begin{subfigure}[b]{\textwidth}
     \centering
     \includegraphics[width=\textwidth]{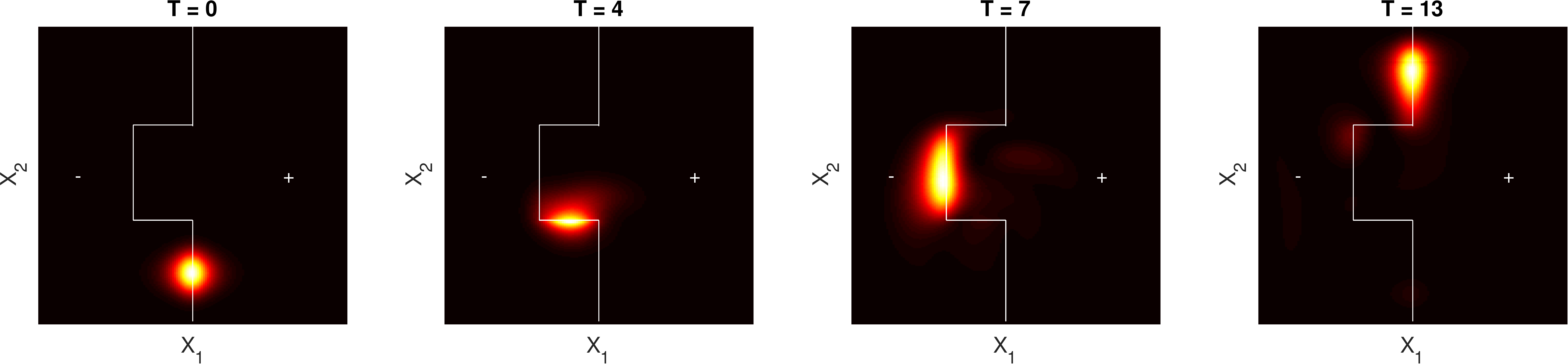}
     \caption{}
     \label{fig.envelope}
     \end{subfigure}
    \caption{The numerical simulations of Maxwell's equations (\ref{eq.TEnonlinear})  and the reduced envelope equation (\ref{eq:Dirac}). The snapshots of energies are shown at four successive time. Top panel (a): Energy $E=\bPsi^*W^{-1}\bPsi$ of Maxwell's equations.
 Bottom panel (b): Corresponding envelope energy $E = |\alpha_1|^2+|\alpha_2|^2$. The $\pm$ represent the signs of the domain wall function $\kappa(\cdot)$ in the areas.}
      \label{fig.compare}
\end{figure}

\section{Analysis of the envelope equation}\label{sec.dynamics}
In this section, we demonstrate that the envelope equation derived in the last section can describe many interesting topological wave propagations. To this end, we require that the ``mass" $\kappa(\bX)$ should change sign when passing through a given curve on which the mass vanishes. This corresponds to the physical setup in which two topologically different materials are glued together along with this cure, which is referred to as the edge. For simplicity, we choose $\kappa(\bX)=\tanh(f(X_1)-X_2)$ where $f(X_1)$ is a given continuous function. In this scenario, the edge is the curve defined by $X_2=f(X_1)$. The $\kappa(\bX)$ is negative above the curve and is positive below the curve. In the rest of this section, we use a lot of numerical simulations to show our results. For the time evolution simulations, the pseudo-spectral method with fourth order Runge-Kutta time integration is used \cite{Yang2010Nonlinear,bao2017numerical}. We use periodic boundary conditions and a huge computing domain such that the boundary effects do not pollute the fields in the center. To seek nonlinear modes, we adopt the Newton-conjugate-gradient method \cite{Yang2009Newton}. We have enlarged the computing domain and refined the mesh size to confirm the reliability of our numerical schemes.

\subsection{Linear and nonlinear line modes}
We first investigate the case where the edge is a straight line, i.e., $f(X_1)\equiv 0$. Without nonlinearity, (\ref{eq.Dirac2}) admits travelling line modes of the form
\begin{equation*}
\begin{pmatrix}
\alpha_1(\bX,T)\\
\alpha_2(\bX,T)\\
\end{pmatrix}=\begin{pmatrix}
1\\
1\\
\end{pmatrix}\sech(X_2)e^{i\xi(X_1-T)},
\end{equation*}
where $\xi\in \mathbb{R}$ is the line mode wave number. It is seen that line modes are product of exponentially decaying function in $X_2$ and plane waves in $X_1$. That is, line modes are localized at the edge and propagating along the edge. These line modes are referred to as topologically protected edge states. Interestingly, the linear modes with different wave numbers $\xi$ have the same velocity. Moreover, this envelope equation with the fixed $\kappa(\cdot)$ does not support the line modes which move in the opposite direction. This is related to the chirality of wave propagation in topological materials. We also note that the line modes in this system are not dispersive. It immediately follows that for any $g(X_1)\in L^2(\mathbb{R})$,
\begin{equation}\label{eq.travel}
\begin{pmatrix}
\alpha_1(\bX,T)\\
\alpha_2(\bX,T)\\
\end{pmatrix}=\begin{pmatrix}
1\\
1\\
\end{pmatrix}\sech(X_2)g(X_1-T),
\end{equation}
is an exact solution to (\ref{eq.Dirac2}) without nonlinearity.

The solution (\ref{eq.travel}) reveals the linear equation supports fully localized traveling wave solutions. A numerical example is shown in Figure \ref{fig.edge1}.  Moreover, when the straight-line-edge becomes a curved edge, these solutions travel along the edge with very little energy leaking to the bulk. In Figure \ref{fig.halfcircle}, we show a typical propagating pattern where the edge is a half-circle connected by two straight lines. This robust wave propagation pattern is related to the so-called topologically protected wave propagation and topological insulators \cite{cheng2016robust, HR:07}. Actually, the reduced envelope equations can describe many other complicated propagation patterns. Due to the length and scope of this paper, we leave these studies to future works.

\begin{figure}[htbp]
     \centering
     \begin{subfigure}[b]{0.3\textwidth}
         \centering
         \label{fig.kappa}\includegraphics[width=\textwidth]{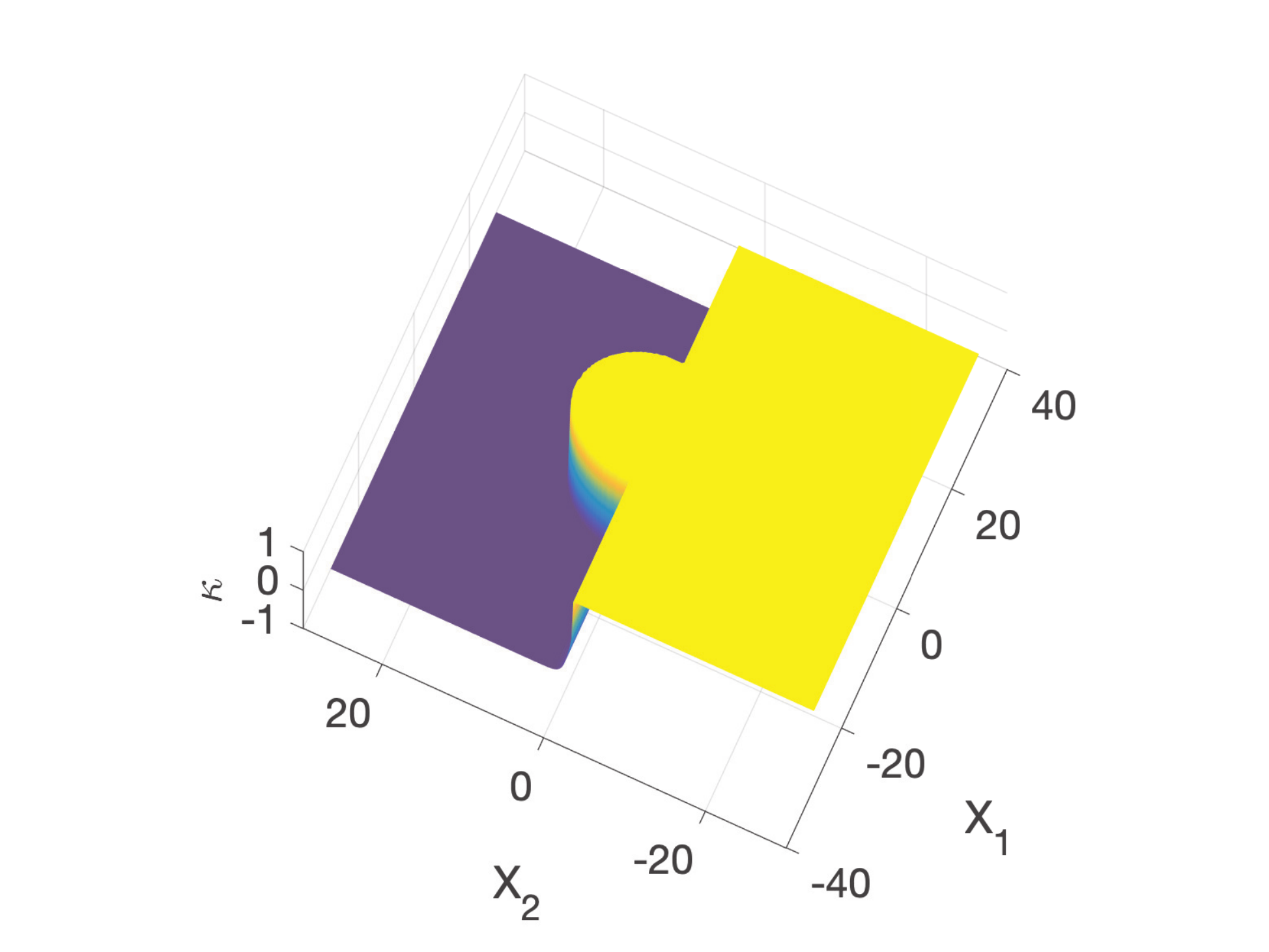}
         \caption{$\kappa(\bX)$}
         \end{subfigure}
     \hspace{1 em}
          \begin{subfigure}[b]{0.6\textwidth}
         \centering
         \label{figh:b}\includegraphics[width=\textwidth]{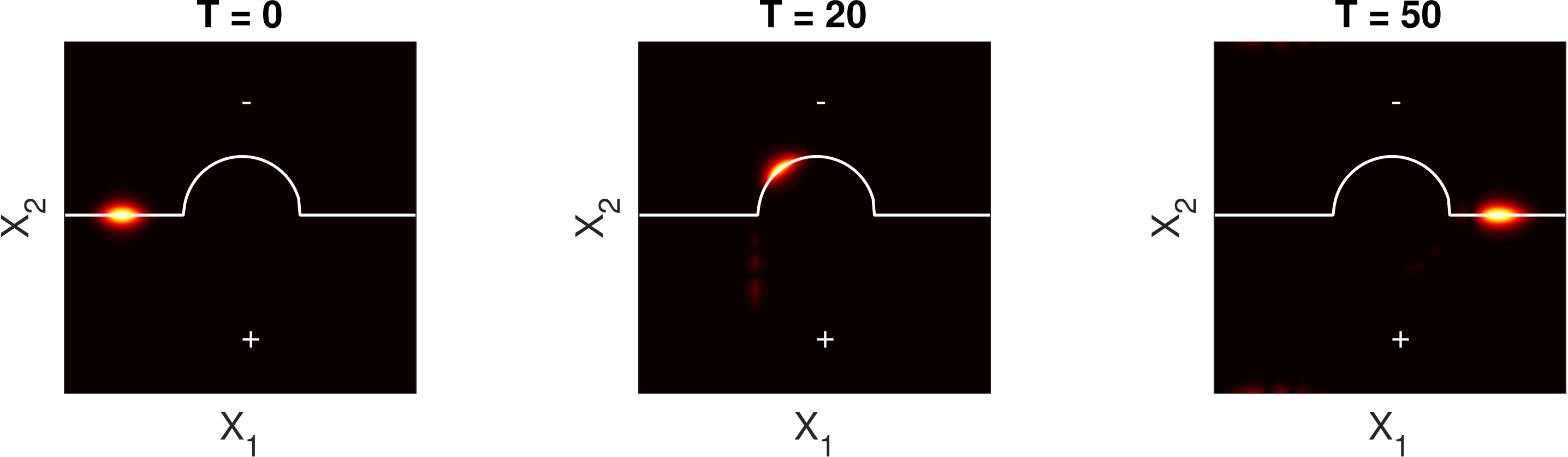}
         \caption{Evolution of modulated linear line mode in the linear media}
     \end{subfigure}
        \caption{Left panel (a): The surface plot of $\kappa(\bX)$. Right panel (b): The snapshots of $|\alpha_1|$ in the Dirac equation (\ref{eq.Dirac2}) without nonlinearity at three successive time.  The white curve denotes the edge.}
    \label{fig.halfcircle}
\end{figure}

In optics, the nonlinear effects cannot be neglected if the intensity of the electromagnetic waves propagating in the material is strong. Thus, it is important to investigate how the nonlinearity affects the interesting linear propagation patterns shown above. In Figure \ref{fig.linemodee1}, we present the propagation of the linear line mode Figure \ref{fig.linemode1}  in the nonlinear media where nonlinear parameters are $p_1=2$ and $p_2=1$. Here the edge is the straight-line $X_2=0$. We see that the line mode is destroyed by the nonlinearity, and a large portion of the energy is leaking to the bulk.


\begin{figure}[htbp]
     \centering
     \begin{subfigure}[b]{0.2\textwidth}
         \centering
         \includegraphics[width=\textwidth]{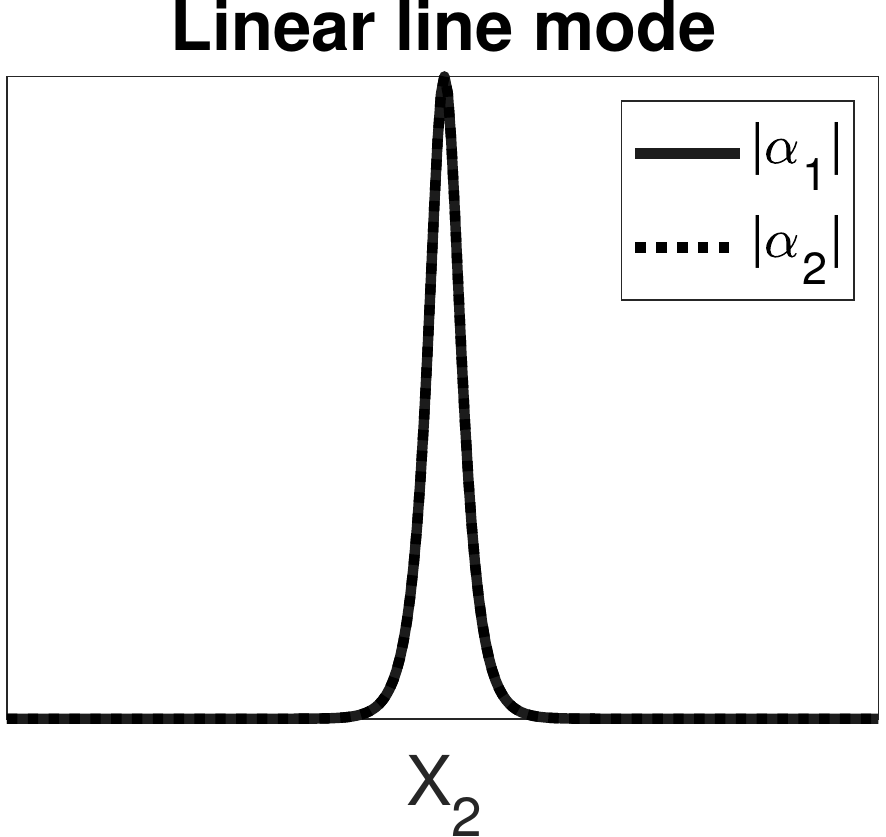}
         \caption{}
         \label{fig.linemode1}
         \end{subfigure}
         \hspace{1em}
     \begin{subfigure}[b]{0.7\textwidth}
         \centering
         \includegraphics[width=\textwidth]{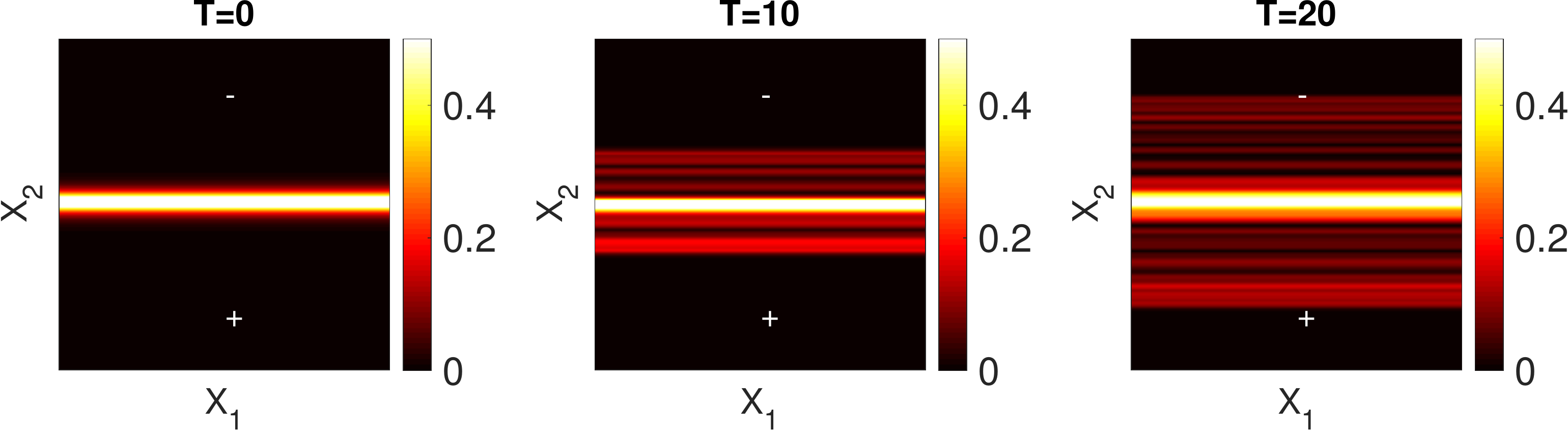}
         \caption{Evolution of linear line mode  in nonlinear media}
         \label{fig.linemodee1}
         \end{subfigure}
         \\
      \begin{subfigure}[b]{0.2\textwidth}
         \centering
         \includegraphics[width=\textwidth]{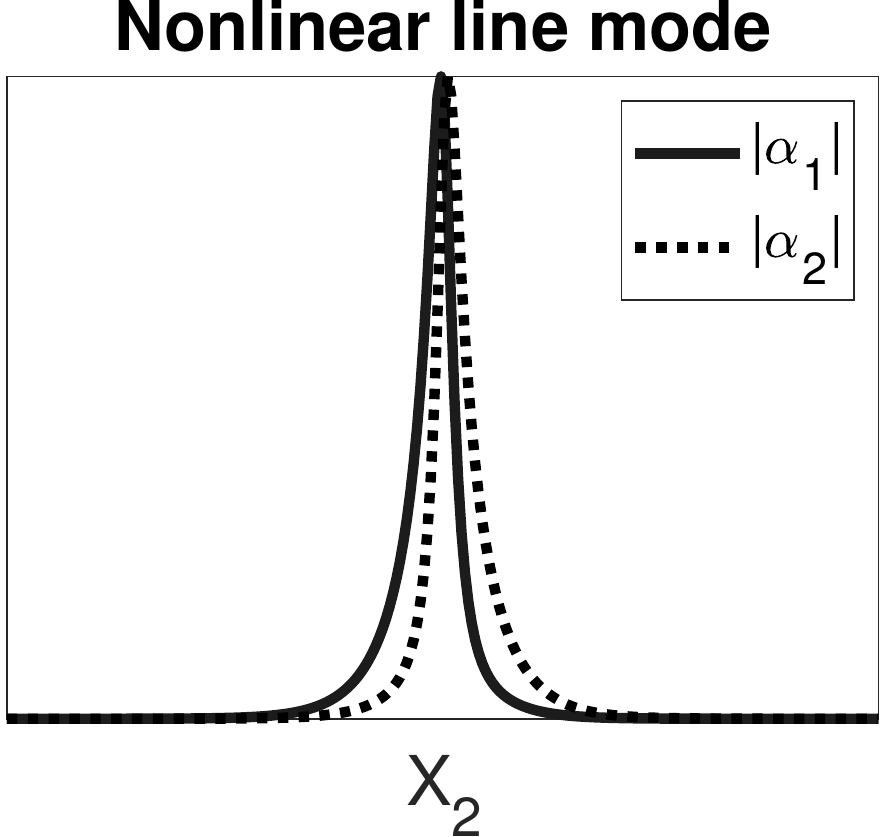}
         \caption{}
         \label{fig.linemode2}
         \end{subfigure}
                  \hspace{1em}
     \begin{subfigure}[b]{0.7\textwidth}
         \centering
         \includegraphics[width=\textwidth]{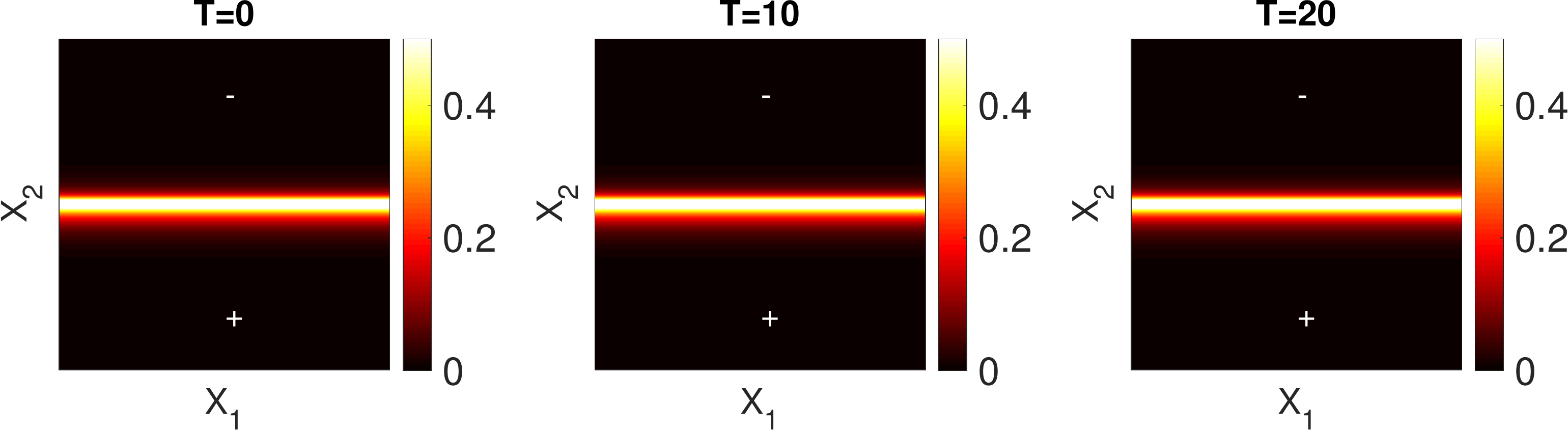}
         \caption{Evolution of nonlinear line mode  in nonlinear media}
         \label{fig.linemodee2}
         \end{subfigure}
        \caption{ Left panels (a,c): The cross section of  $|\alpha_1|$ and $|\alpha_2|$ of the linear and nonlinear line modes.  Right panels (b,d): The propagation of linear and nonlinear line modes in nonlinear media corresponding to $\kappa(\bX)=\tanh(-X_2)$ at three successive time.}
\end{figure}

Since the line modes could not survive in a nonlinear media, an interesting question to ask is whether there exist nonlinear line modes. To this end, we seek the solution to the nonlinear equation (\ref{eq.Dirac2}) of the form
\begin{equation}\label{eq.heigen}
    \begin{pmatrix}
        \alpha_1(\bX,T)\\
        \alpha_2(\bX,T)\\
    \end{pmatrix} =
    e^{-\ii\mu T}\begin{pmatrix}
        \chi_1(\bX)\\
        \chi_2(\bX)\\
    \end{pmatrix},
\end{equation}
where $\mu$ is the propagation constant. With the Newton-conjugate-gradient method \cite{Yang2009Newton}, we indeed find nonlinear line modes of which the profiles are shown in Figure \ref{fig.linemode2}. Compared to the linear line modes, which are symmetric, both components of the nonlinear line modes are asymmetric. We also want to point out that the nonlinear line modes that we obtain are not moving.

\begin{figure}[htbp]
     \centering
     \begin{subfigure}[b]{\textwidth}
         \centering
         \includegraphics[width=\textwidth]{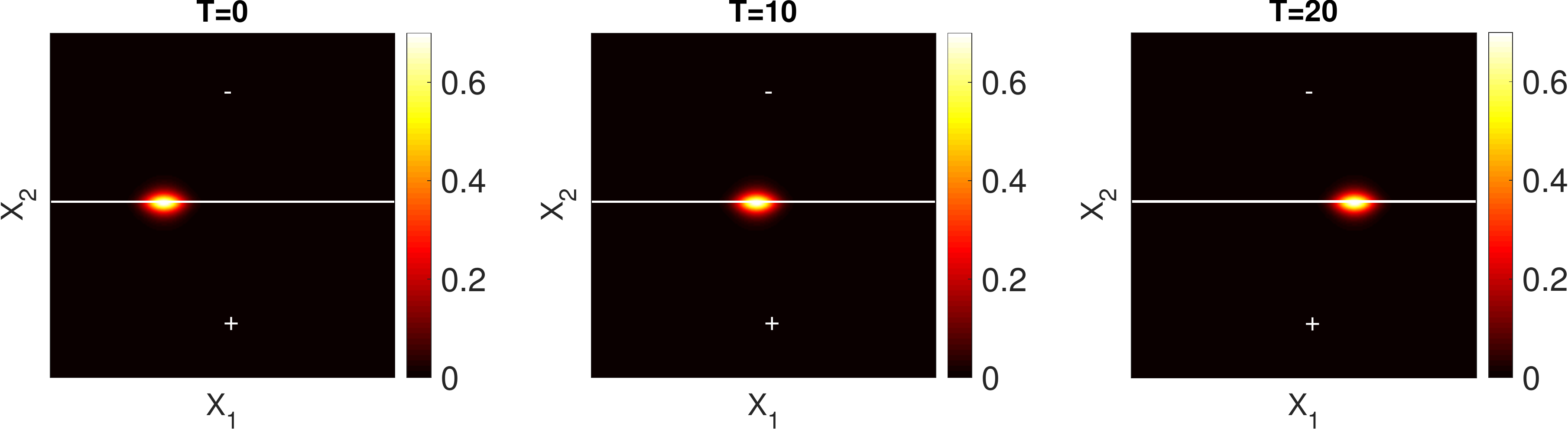}
         \caption{}
         \label{fig.edge1}
         \end{subfigure}\\
     \begin{subfigure}[b]{\textwidth}
         \centering
         \includegraphics[width=\textwidth]{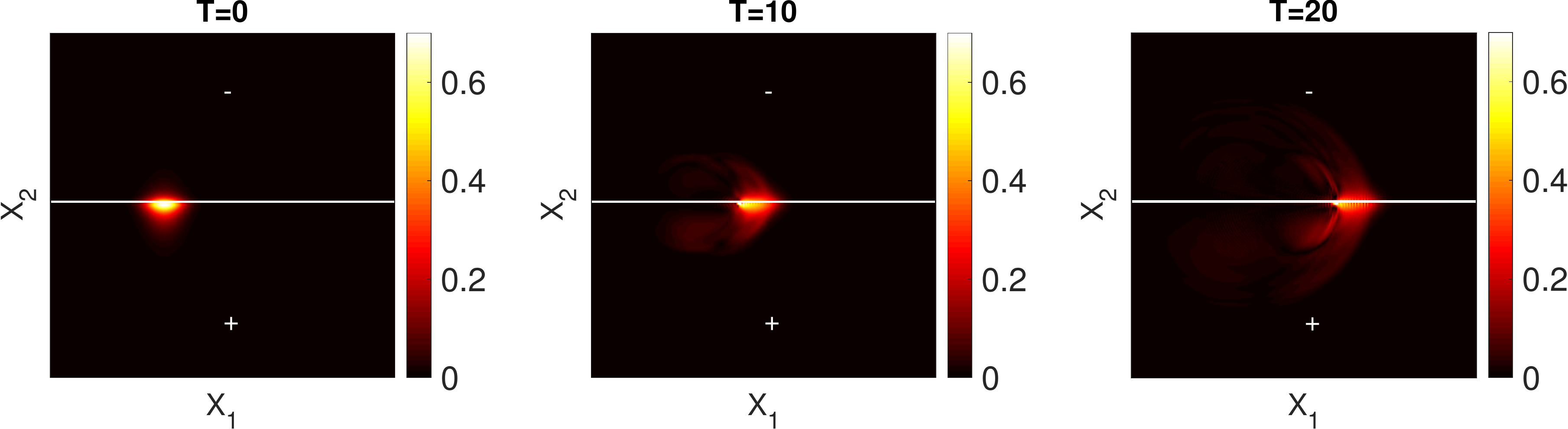}
          \caption{}
         \label{fig.edge2}
         \end{subfigure}
        \caption{The propagations of line modes modulated by a Gaussian in $X_1$ direction. The intensities $|\alpha_1|$ are shown at different time.
  Top panel (a): Linear modes in the linear media. By (\ref{eq.travel}), the initial input travels along the edge retaining its shape.
  Bottom panel (b): Nonlinear modes in the nonlinear media. A large portion of the energy leaks to the bulk due to the modulation.
  }
\end{figure}

\subsection{Fully localized nonlinear modes}
In the last subsection, we show that (\ref{eq.Dirac2}) admits nonlinear line modes. Line modes are localized along one direction, which means they have infinite energy by considering them as two-dimensional wave modes.  In real applications, it is interesting to investigate the propagation of waves with finite energy. In Figure \ref{fig.edge2}, we show the propagation of a nonlinear line mode modulated by a Gaussian along $X_1$ direction in the nonlinear media. It is seen that there exists considerable energy leaking to the bulk under propagation. This inspires us to seek fully localized nonlinear modes, i.e., solitary waves in this nonlinear system. By numerical iterations \cite{Yang2009Newton}, we indeed find the solitary wave solutions of the form  (\ref{eq.heigen}), where the profiles are shown in Figure \ref{fig.nonconst}. In this simulation, the parameters are $\kappa(\bX)=\tanh(-X_2)$, $p_1=2, p_2=1$ and $\mu=-0.8$. We see that the modes are lump-like solutions lying on edge with certain symmetries.  This new type of nonlinear modes has not been reported yet in the literature.  As we see in the last subsection, modulated line modes can not persist in the nonlinear material. The nonlinear lump-like modes that we find could be the substitutes in the nonlinear media. This fully nonlinear mode is not moving due to the special choice of the ansatz. Actually, with a different choice of ansatz, we do find moving fully localized edge mode. This is beyond the scope of this work and left for our forthcoming work.

\begin{figure}[htbp]
  \centering
   \includegraphics[width=8cm]{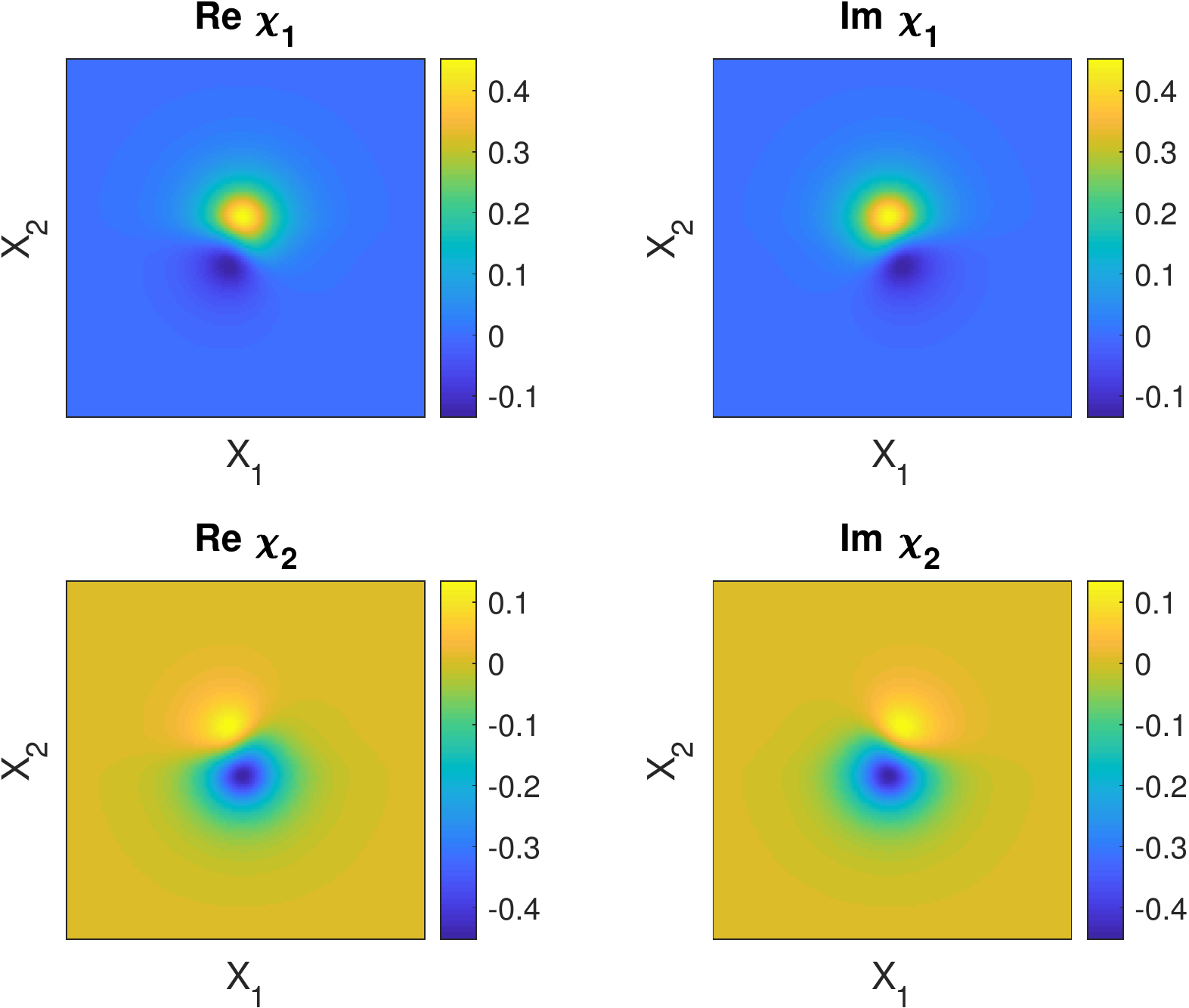}
  \caption{The profiles of the fully localized solitary wave solution (\ref{eq.heigen}) to the nonlinear Dirac equation (\ref{eq.Dirac2}) for $\mu = -0.8$. The nonlinear parameters are $p_1=2$ and $p_2=1$ and the mass is  $\kappa(\bX)=\tanh(-X_2)$.
   }
   \label{fig.nonconst}
\end{figure}

\section{Conclusions and discussions}\label{sec.conclusion}
Over the past ten years, there has been considerable interest in the wave dynamics in photonic meta-materials. Many novel propagating patterns have been produced and investigated. One of the key topics is the so-called topologically protected wave propagations in which the electromagnetic waves propagate along the designed path without any energy leaking to the bulks or traveling back even with strong defects. This robust wave propagations bring many potential applications.  In this paper,  nonlinear envelope dynamics of electromagnetic waves in nonlinear and weakly modulated honeycomb materials are studied.  By studying the envelope equation, we reveal the mechanism of some subtle wave patterns such as the topologically protected propagations. Different from the work in the existing literature, we directly study nonlinear Maxwell's equations and derive the nonlinear envelope equation.

We first investigate the spectrum of the Maxwell operator. With our characterization of honeycomb symmetries, we rigorously prove the existence of Dirac points which are conically singular points of the dispersion surfaces.  By the multi-scale perturbation theory, we derive the nonlinear dynamics of the envelope associated with the Dirac points in a weakly modulated honeycomb media. The reduced equation is a nonlinear Dirac equation with a spatially varying mass. We analyze and numerically simulate this equation to reveal the topologically protected edge states and their robust propagation. By including nonlinearity, we find the nonlinear edge states. Moreover, we report new lump-like modes which have not been found in the literature.  This new type of solitary solutions to the nonlinear Dirac equation may bring new features of the nonlinear materials and desire further investigations.

It is noted that some experimental realizations on topologically protected wave propagation are different from our physical setups, see for instance \cite{plotnik2013observation,poo2011experimental}. In their experiments, the edge states were observed at the interface between honeycomb materials and vacuum or air. This is referred to as a sharply terminated edge. Unfortunately, our current analysis, which utilizes a multi-scale analysis, does not apply for these physical setups. To the best of our knowledge, the mathematical analysis on such problems is mostly restricted in the tight-binding limit \cite{fefferman2018edge, ablowitz2017tight,ablowitz2013localized,ablowitz2013unified}.

\section{Acknowledgements}

This work was partially supported by the National Natural Science Foundation of China (grants $\#11871299$ and $\# 21877070$).


\bibliographystyle{unsrt}
\bibliography{references}


\section*{Appendix}
\section{Dirac point in low contrast honeycomb media}
Theorem \ref{theo.conical} states that two-dimensional eigenspace $\mathcal{E}_{\omega_*}$ of $\MM_\omega$ at $\bK$ yields the existence of Dirac point $(\bK,\omega_{_D})$ as long as $\mathcal{E}_{\omega_*}\subset\L^2_{\bK,\tau}\oplus\L^2_{\bK,\overline{\tau}}$ and the non-degenerate condition $C_D>0$ hold. In the appendix, we show that the conditions ensuring the existence of Dirac points are satisfied in the low contrast honeycomb media.
To this end, we consider the material weight of the form
\begin{equation*}
    W_{\varepsilon}(\bx)= I_{3\times 3} + \varepsilon W^{1}(\bx),
\end{equation*}
where
  $W^1(\bx)=\Bigl(\begin{smallmatrix}
    A^1(\bx) & \mathbf{0}_{2\times 1}\\
     \mathbf{0}_{1\times 2}& a^1(\bx)\\
\end{smallmatrix}\Bigr)$ is a honeycomb material weight in Definition \ref{honeycomb_material}, and $\varepsilon>0$ is the perturbation constant.
We shall prove that this media has Dirac points in its dispersion band structure when $\varepsilon$ is sufficiently small.
Namely, we solve the following $\bK$ quasi-periodic eigenvalue problem perturbatively
\begin{equation}\label{eq.perturbeigen}
    \MM_{W_\varepsilon}\bPsi^\varepsilon:=W_{\varepsilon}(\bx)\LL \bPsi^\varepsilon = \omega^\varepsilon \bPsi^\varepsilon, \quad \bPsi^\varepsilon\in\L^2_{\bK}.
\end{equation}

First, we solve the non-perturbed eigenvalue problem (\ref{eq.perturbeigen}), i.e., $\varepsilon=0$ as
\begin{equation}\label{eq.nonperturbeigen}
    \MM_I\bPsi^0:=\LL \bPsi^0 = \omega^0 \bPsi^0, \quad \bPsi^0\in\L^2_{\bK}.
\end{equation}
The results are concluded in the proposition below.
\begin{prop}\label{prop.MI}
    The smallest positive eigenvalue of (\ref{eq.nonperturbeigen}) is $\omega^0=|\bK|$ with multiplicity three, and the corresponding eigenspace is
    \begin{equation*}
        \mathcal{E}^0_{_{|\bK|}}=\text{span}\big\{ \bPhi_1, \bPhi_2, \bPhi_3\big\},
        \end{equation*}
where

\begin{equation*}
        \bPhi_1=\begin{pmatrix}
            -\hat{\bK}\\
            1\\
        \end{pmatrix}e^{\ii\bK\cdot\bx} ,\quad \bPhi_2=\RRT\bPhi_1=\begin{pmatrix}
            -R\hat{\bK}\\
            1\\
        \end{pmatrix}e^{\ii R\bK\cdot\bx},\quad \bPhi_3=\RRT\bPhi_2=\begin{pmatrix}
            -R^2\hat{\bK}\\
            1\\
        \end{pmatrix}e^{\ii R^2\bK\cdot\bx},
     \end{equation*}

$\hat{\bK}=\bK/|\bK|$ and $R$ is the rotation matrix defined in (\ref{def.R}).
    Moreover,  the eigenspace can be decomposed as
    $\mathcal{E}^0_{_{|\bK|}}=\text{span}\{\bPsi^0_1\}\oplus\text{span}\{\bPsi^0_\tau\}\oplus\text{span}\{\bPsi^0_{\overline{\tau}}\}$, where
\begin{equation}
    \bPsi^0_\sigma = \frac{1}{\sqrt{6|\Omega|}}\big(\bPhi_1+\overline{\sigma}\bPhi_2+\sigma\bPhi_3\big)\in \L^2_{\bK,\sigma},\quad \sigma=1,\tau,\overline{\tau}.
    \label{eq:psiz}
\end{equation}
\end{prop}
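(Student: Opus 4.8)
The plan is to diagonalize the constant-coefficient operator $\MM_I = \LL$ on $\L^2_\bK$ via Fourier series, then pick out the lowest positive eigenvalue and identify its eigenspace, and finally decompose that eigenspace into $\RRT$-isotypic components. Since $\bPsi^0\in\L^2_\bK$, write $\bPsi^0(\bx)=\sum_{\bm\in\Lambda^*}\hat{\bPsi}_\bm\, e^{\ii(\bK+\bm)\cdot\bx}$ with $\hat{\bPsi}_\bm\in\mathbb{C}^3$. Applying $\LL$ from (\ref{eq.LL}) turns the eigenvalue equation $\LL\bPsi^0=\omega^0\bPsi^0$ into an uncoupled family of $3\times3$ algebraic eigenvalue problems
\begin{equation*}
\begin{pmatrix} 0 & 0 & -(\bK+\bm)^{(1)}\\ 0 & 0 & -(\bK+\bm)^{(2)}\\ -(\bK+\bm)^{(1)} & -(\bK+\bm)^{(2)} & 0 \end{pmatrix}\hat{\bPsi}_\bm=\omega^0\hat{\bPsi}_\bm,
\end{equation*}
one for each $\bm\in\Lambda^*$. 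Writing $\bq=\bK+\bm$, this matrix has eigenvalues $0$ and $\pm|\bq|$; the eigenvector for $\pm|\bq|$ is $\bigl(\pm\hat{\bq},\,1\bigr)^T$ (up to scaling) with $\hat{\bq}=\bq/|\bq|$, and the kernel is spanned by $(\bq^\perp,0)^T$. Hence the positive part of the spectrum of $\MM_I$ on $\L^2_\bK$ is $\{\,|\bK+\bm| : \bm\in\Lambda^*\,\}$, and each value is attained with multiplicity equal to the number of lattice points $\bm$ achieving that distance.

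The next step is the elementary lattice geometry: among all $\bq\in\bK+\Lambda^*$, the minimum of $|\bq|$ is $|\bK|$, attained exactly at the three points $\bK$, $R\bK$, $R^2\bK$ (equivalently $\bK$, $\bK-\bk_1$, $\bK+\bk_2$ up to relabeling), because $\bK$ is one of the vertices of the Brillouin-zone hexagon and its three nearest representatives in $\bK+\Lambda^*$ are related by the $R$-rotation. This yields multiplicity exactly three for $\omega^0=|\bK|$, with eigenspace spanned by $\bPhi_1,\bPhi_2,\bPhi_3$ as written; the identities $\bPhi_2=\RRT\bPhi_1$, $\bPhi_3=\RRT\bPhi_2$ follow directly from the definition (\ref{def.R}) of $\RRT$ together with $R\bK$, $R^2\bK\in\bK+\Lambda^*$ and $R\hat{\bK}=\widehat{R\bK}$ (note $\RRT\bPhi_1$ lies in $\L^2_\bK$ since $\RRT$ preserves $\L^2_\bK$ at high-symmetry points, as established in the discussion preceding Lemma~\ref{prop.inner}).

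Finally, since $\RRT$ acts on the three-dimensional space $\mathcal{E}^0_{|\bK|}$ by cyclically permuting $\bPhi_1\mapsto\bPhi_2\mapsto\bPhi_3\mapsto\bPhi_1$ (using $\RRT^3=I_d$ and $\RRT\bPhi_3=\bPhi_1$, which holds because $R^3=I$ so $R^3\bK=\bK$), this is the regular representation of $\mathbb{Z}/3$, whose isotypic decomposition is spanned by the discrete-Fourier combinations $\bPhi_1+\bar\sigma\bPhi_2+\sigma\bPhi_3$ for $\sigma\in\{1,\tau,\bar\tau\}$; one checks $\RRT(\bPhi_1+\bar\sigma\bPhi_2+\sigma\bPhi_3)=\sigma(\bPhi_1+\bar\sigma\bPhi_2+\sigma\bPhi_3)$, which gives (\ref{eq:psiz}) up to the normalization constant, the latter obtained by a direct computation of $\|\bPsi^0_\sigma\|$ using orthogonality of the plane waves $e^{\ii R^j\bK\cdot\bx}$ over $\Omega$ and $|\bPhi_j|^2=2$ pointwise. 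I expect no serious obstacle here; the only point requiring a little care is the lattice-geometry claim that $|\bK|$ is attained at \emph{exactly} three points of $\bK+\Lambda^*$ — one must rule out any other lattice point at the same distance, which is a short explicit check with the given $\bk_1,\bk_2$.
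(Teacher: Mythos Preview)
Your approach is correct and is precisely the ``direct calculation'' the paper alludes to (the paper in fact omits the details entirely, merely noting that $\MM_I$ has constant coefficients and referring to Fefferman--Weinstein for an analogous computation). One small slip: the eigenvector of the $3\times3$ matrix for eigenvalue $+|\bq|$ is $(-\hat\bq,1)^T$ rather than $(+\hat\bq,1)^T$, which is exactly the $-\hat\bK$ appearing in the statement; otherwise every step is sound.
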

Note that $\MM_I$ is a differential operator with constant coefficients. A direct calculation leads to the conclusion. Here we omit the detailed proof and refer the readers to \cite{Fefferman2012Honeycomb} for a similar calculation.

Next, we turn to the perturbed eigenvalue problem (\ref{eq.perturbeigen}). As $[\MM_{W_\varepsilon}, \RRT]=0$, we only need to solve this eigenvalue problem in the subspaces $\L^2_{\bK,\sigma}$, $\sigma=1,\,\tau,\,\overline{\tau}$  separately, i.e.,
\begin{equation}
    (\MM_I + \varepsilon\MM_{W^1})\bPsi_\sigma^\varepsilon = \omega_\sigma^\varepsilon\bPsi_\sigma^\varepsilon, \;
    \bPsi_\sigma^\varepsilon\in \mathbf{ \L^2_{\bK,\sigma}}.
    \label{eq:pert}
\end{equation}
By the perturbation theory, we shall prove the following theorem.

\begin{theo}\label{the.low}
    Let $\omega^0=|\bK|$ be the three-fold eigenvalue of $\MM_I$ and $\MM_{W_\varepsilon}$ be defined in (\ref{eq.perturbeigen}). Denote the Fourier coefficients of  $W^1$ as $W^{1}_{m_1,m_2}$, i.e.,
    \begin{equation*}
    W^{1}_{m_1,m_2}=\frac{1}{|\Omega|}\int_\Omega e^{-\ii(m_1\bk_1+m_2\bk_2)\cdot \textbf{y}}W^{1}(\textbf{y})\dd\textbf{y},\quad m_1,m_2\in\mathbb{Z}.
    \end{equation*}

    Assume the non-degeneracy condition holds
    \begin{equation*}
    \zeta^T W^{1}_{0,-1}\zeta
        \neq 0,
    \end{equation*}
    where $\zeta=\begin{pmatrix}
        \hat{\bK}\\
        1\\
    \end{pmatrix}$.  Then, there exist $\varepsilon_0 >0$, mappings $\varepsilon\rightarrow\omega_{_D}^\varepsilon$, and $\varepsilon\rightarrow\bPsi_\tau^\varepsilon$ ,
$\varepsilon\rightarrow\bPsi_{\overline{\tau}}^\varepsilon$ for  $0<\varepsilon<\varepsilon_0$, such that
     \begin{equation*}
            \omega_{_D}^\varepsilon=\omegaz+ \varepsilon\frac{\omegaz}{2}\big[\zeta^TW^{1}_{0,0}\zeta - \zeta^TW^{1}_{0,-1}\zeta \big]+O(\varepsilon^2)
        \end{equation*}
      is an eigenvalue of multiplicity two with eigenfunctions $\bPsi_\tau^\varepsilon\in \L^2_{\bK,\tau}$,$\bPsi_{\overline{\tau}}^\varepsilon\in \L^2_{\bK,\overline{\tau}}$ and
    the conical constant
    \begin{equation*}
C_D(\varepsilon) = \frac{1}{2}\Big|\mathcal{F}(\bPsi_1^\varepsilon,\bPsi_2^\varepsilon)\cdot\begin{pmatrix}
        1\\
        -\ii\\
    \end{pmatrix}\Big|+O(\varepsilon)    = \frac{1}{2}+O(\varepsilon)
\end{equation*}
satisfies that $C_D(\varepsilon)>0$ as $\varepsilon\ll 1$.
Thus $(\bK,\omega_D^\varepsilon)$ is a Dirac point by Definition \ref{theo.conical}.

\end{theo}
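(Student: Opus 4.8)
The plan is to treat $\varepsilon\MM_{W^1}$ as a perturbation of $\MM_I=\LL$ and to track the three-fold eigenvalue $\omegaz=|\bK|$ of Proposition~\ref{prop.MI}. The key structural observation is that $[\MM_{W_\varepsilon},\RRT]=0$ for every $\varepsilon$, so the eigenvalue problem \eqref{eq.perturbeigen} decouples across the three $\RRT$-invariant subspaces $\L^2_{\bK,\sigma}$, $\sigma\in\{1,\tau,\overline\tau\}$, as in \eqref{eq:pert}. Inside each $\L^2_{\bK,\sigma}$ the unperturbed eigenvalue $\omegaz$ is \emph{simple} (its only eigenvector is the plane-wave superposition $\bPsisz$ of \eqref{eq:psiz}) and is isolated from the rest of the spectrum of $\MM_I$, so a Lyapunov--Schmidt reduction exactly parallel to the proof of Theorem~\ref{theo.conical} --- but now producing a \emph{scalar} solvability equation --- yields, for $\varepsilon$ small, a branch $\omega_\sigma^\varepsilon=\omegaz+\varepsilon\,\omega_\sigma^{(1)}+O(\varepsilon^2)$ with eigenfunction $\bPsi_\sigma^\varepsilon\to\bPsisz$ in $\L^2_\bK$ as $\varepsilon\to0$.

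To compute $\omega_\sigma^{(1)}$ I would write the eigenrelation as $(\LL-\omega)\bPsi=-\omega\,\varepsilon W^1\bPsi+O(\varepsilon^2)$ (using $W_\varepsilon^{-1}=I-\varepsilon W^1+O(\varepsilon^2)$) and impose the order-$\varepsilon$ solvability condition against $\bPsisz$, giving $\omega_\sigma^{(1)}=\omegaz\,\la\bPsisz,W^1\bPsisz\ra$. Inserting $\bPsisz=\tfrac1{\sqrt{6|\Omega|}}(\bPhi_1+\overline\sigma\bPhi_2+\sigma\bPhi_3)$, the nine terms $\la\bPhi_i,W^1\bPhi_j\ra$ collapse: the honeycomb condition Definition~\ref{honeycomb_material}(1) makes multiplication by $W^1(\bx)$ commute with $\RRT$ (as in the proof of Proposition~\ref{prop.honeychara}), and since $\RRT\bPhi_j=\bPhi_{j+1}$ cyclically and $\RRT$ is an $L^2$-isometry (Lemma~\ref{prop.inner} with $W=I$), one gets $\la\bPhi_1,W^1\bPhi_1\ra=\la\bPhi_2,W^1\bPhi_2\ra=\la\bPhi_3,W^1\bPhi_3\ra$ and $\la\bPhi_1,W^1\bPhi_2\ra=\la\bPhi_2,W^1\bPhi_3\ra=\la\bPhi_3,W^1\bPhi_1\ra$. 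Evaluating these against the explicit plane waves identifies the first common value with $|\Omega|\,\zeta^TW^1_{0,0}\zeta$ and the second with $|\Omega|\,\zeta^TW^1_{0,-1}\zeta$ (the latter being real: this follows from Hermiticity of $W^1$ together with the $\pt$-covariance of Definition~\ref{honeycomb_material}(2)). Collecting terms gives $\omega_\sigma^{(1)}=\tfrac{\omegaz}{2}\bigl[\zeta^TW^1_{0,0}\zeta+2\,\mathrm{Re}(\overline\sigma)\,\zeta^TW^1_{0,-1}\zeta\bigr]$, hence $\omega_\tau^{(1)}=\omega_{\overline\tau}^{(1)}=\tfrac{\omegaz}{2}\bigl[\zeta^TW^1_{0,0}\zeta-\zeta^TW^1_{0,-1}\zeta\bigr]$ and $\omega_1^{(1)}-\omega_\tau^{(1)}=\tfrac{3\omegaz}{2}\,\zeta^TW^1_{0,-1}\zeta$. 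The coincidence $\omega_\tau^\varepsilon=\omega_{\overline\tau}^\varepsilon$ in fact holds exactly for all small $\varepsilon$, not only at first order, because $\pt$ maps $\L^2_{\bK,\tau}$ isometrically onto $\L^2_{\bK,\overline\tau}$ and commutes with $\MM_{W_\varepsilon}$. Therefore, under the hypothesis $\zeta^TW^1_{0,-1}\zeta\neq0$, for $0<\varepsilon<\varepsilon_0$ the $\tau$- and $\overline\tau$-branches remain glued at $\omega_{_D}^\varepsilon:=\omega_\tau^\varepsilon=\omega_{\overline\tau}^\varepsilon$ while the $\sigma=1$ branch separates, so $\omega_{_D}^\varepsilon$ is a two-fold eigenvalue with $\mathcal{E}_{\omega_{_D}^\varepsilon}=\mathrm{span}\{\bPsi_\tau^\varepsilon\}\oplus\mathrm{span}\{\bPsi_{\overline\tau}^\varepsilon\}\subset\L^2_{\bK,\tau}\oplus\L^2_{\bK,\overline\tau}$, with the stated expansion --- precisely the structural hypothesis of Theorem~\ref{theo.conical}.

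Finally I would verify $C_D(\varepsilon)>0$. By Proposition~\ref{pro.psi} normalize $\bPsi_1^\varepsilon\in\L^2_{\bK,\tau}$, $\bPsi_2^\varepsilon=\pt\bPsi_1^\varepsilon$ so that $\mathcal{F}(\bPsi_1^\varepsilon,\bPsi_2^\varepsilon)$ is a multiple of $(1,\ii)^T$; then $C_D(\varepsilon)=\tfrac12\bigl|\mathcal{F}(\bPsi_1^\varepsilon,\bPsi_2^\varepsilon)\cdot(1,-\ii)^T\bigr|$, which is phase-independent. Since $\bPsi_1^\varepsilon\to\bPsi^0_\tau$ and $\bPsi_2^\varepsilon\to\bPsi^0_{\overline\tau}$ in $\L^2_\bK$ (up to a phase) and $\mathcal{F}$ is sesquilinear and bounded (Lemma~\ref{lemma.f}), $C_D(\varepsilon)=C_D(0)+O(\varepsilon)$. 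And $C_D(0)$ is a short computation with \eqref{eq:psiz}: writing $\bPhi_j=(-R^{j-1}\hat{\bK},1)^Te^{\ii R^{j-1}\bK\cdot\bx}$ and using that the mixed exponentials integrate to zero over $\Omega$ (because $R^{j-1}\bK\equiv\bK$ modulo $\Lambda^*$), one finds $\mathcal{F}(\bPhi_i,\bPhi_j)=-2|\Omega|\,(R^{i-1}\hat{\bK})\,\delta_{ij}$, hence $\mathcal{F}(\bPsi^0_\tau,\bPsi^0_{\overline\tau})=-\tfrac13\bigl(\hat{\bK}+\overline\tau\,R\hat{\bK}+\tau\,R^2\hat{\bK}\bigr)$; evaluating with $\hat{\bK}=(0,1)^T$ shows this is a nonzero multiple of $(1,\ii)^T$ with $\bigl|\mathcal{F}(\bPsi^0_\tau,\bPsi^0_{\overline\tau})\cdot(1,-\ii)^T\bigr|=1$, so $C_D(0)=\tfrac12$ and $C_D(\varepsilon)=\tfrac12+O(\varepsilon)>0$ for $\varepsilon$ small. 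Theorem~\ref{theo.conical} then gives that $(\bK,\omega_{_D}^\varepsilon)$ is a Dirac point.

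The main obstacle is the middle paragraph --- the degenerate-to-simple reduction together with the first-order computation, and in particular identifying the surviving off-diagonal element $\la\bPhi_1,W^1\bPhi_2\ra$ with $|\Omega|\,\zeta^TW^1_{0,-1}\zeta$. This is where both parts of Definition~\ref{honeycomb_material} are genuinely needed: the $\RRT$-covariance to collapse the three-term sums, and the $\pt$-covariance to force reality and the equality $\omega_\tau^{(1)}=\omega_{\overline\tau}^{(1)}$. The remaining ingredients are routine: the reduction is the scalar (simple-eigenvalue) version of the argument already carried out in the proof of Theorem~\ref{theo.conical}, and the value $C_D(0)=\tfrac12$ is a one-line plane-wave evaluation.
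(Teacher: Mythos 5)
Your proposal is correct and follows essentially the same route as the paper's own proof: reduction to the three $\RRT$-invariant subspaces where $\omegaz$ becomes simple, a Lyapunov--Schmidt/first-order computation of $\omega_\sigma^{(1)}$ collapsed by the $\RRT$-covariance of $W^1$ to the two values $\zeta^TW^1_{0,0}\zeta$ and $\zeta^TW^1_{0,-1}\zeta$, the $\PPT\TTT$-symmetry to glue the $\tau$ and $\overline{\tau}$ branches exactly while the non-degeneracy condition splits off the $\sigma=1$ branch, and the plane-wave evaluation of $\mathcal{F}(\bPsi^0_\tau,\bPsi^0_{\overline{\tau}})$ giving $C_D(\varepsilon)=\tfrac12+O(\varepsilon)$. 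The only cosmetic difference is that you perturb via $W_\varepsilon^{-1}=I-\varepsilon W^1+O(\varepsilon^2)$, writing the first-order coefficient as $\omegaz\la\bPsisz,W^1\bPsisz\rai$, whereas the paper uses $\la\bPsisz,\MM_{W^1}\bPsisz\rai$; these coincide because $\LL\bPsisz=\omegaz\bPsisz$.
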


\begin{proof}

 To solve the eigenvalue problem (\ref{eq:pert}), we expand the eigenvalues and eigenfunctions as
\begin{equation}
    \omega_\sigma^\varepsilon = \omegaz+\varepsilon\omega_\sigma^{1,\varepsilon},\; \bPsi_\sigma^\varepsilon = \bPsisz + \varepsilon\bPsiso,
    \label{eq:perta}
\end{equation}
where $\la\bPsisz,\bPsiso\rai = 0$ and the inner product $\la.,.\rai$  is defined in $\L^2_{\bK}$ with the identity matrix weight $I$ as
\begin{equation*}
    \la\vf_1,\vf_2\rai = \int_\Omega\vf_1^*\vf_2\,\dd\bx.
\end{equation*}
Similar to the proof of Theorem \ref{theo.conical},  a Lyapunov-Schmidt reduction leads to that
\begin{equation}
\omega_\sigma^{1,\varepsilon} = \la\bPsisz, \MM_{W^1} \bPsisz\ra_I + O(\varepsilon^2), \quad \sigma=1,\tau,\overline{\tau}.
\end{equation}

Let $\omega_\sigma^{1,0} = \la\bPsisz,\MMo\bPsisz\rai$.
Recalling from Proposition \ref{prop.MI} that $\bPsisz = \frac{1}{\sqrt{6|\Omega|}}(\bPhi_1+\overline{\sigma}\bPhi_2+\sigma\bPhi_3)$ and $\MM_{W^1}=W^{1}\MM_I$, we directly calculate that
\begin{equation}\label{eq.omega1}
\begin{array}{lll}
\omega_\sigma^{1,0} &=& \la\bPsisz,\MMo\bPsisz\rai\\[0.3cm]
     &=& \frac{1}{6|\Omega|}\begin{pmatrix}
         1\\
             \overline{\sigma}\\
         \sigma\\
     \end{pmatrix}^*
    H \begin{pmatrix}
         1\\
             \overline{\sigma}\\
         \sigma\\
     \end{pmatrix},
\end{array}
\end{equation}
where we have used the fact that $\bPhi_j$, $j=1,2,3$ are the eigenfunctions of $\MM_I$ corresponding to the eigenvalue $\omega^0=|\bK|$, and the matrix $H=(H_{ij})_{3\times3}$ is given as follows
\begin{equation}
H_{ij}=\la\bPhi_i, \MM_{W^1}\bPhi_j\rai, \quad i,j=1,2,3.
\end{equation}
Evidently, $H$ is a Hermitian matrix by a similar argument in (\ref{sec.gap}).

By Proposition \ref{prop.inner}, $[\MM_{W^1},\RRT]=0$ and $\bPhi_3=\RRT\bPhi_2=\RRT^2\bPhi_1$, we obtain
\begin{equation*}
    \la\bPhi_1,\MMo\bPhi_1\rai = \la\RRT\bPhi_1,\MMo\RRT\bPhi_1\rai = \la\RRT^2\bPhi_1,\MMo\RRT^2\bPhi_1\rai.
\end{equation*}
Thus,
\begin{equation}\label{eq.h11}
    H_{11} =H_{22}=H_{33}=\la\bPhi_1,\MMo\bPhi_1\rai = \omegaz\la\bPhi_1,W^{1}\bPhi_1 \rai = \omegaz|\Omega|\zeta^TW^{1}_{0,0}\zeta,
\end{equation}
 where
$\zeta=\begin{pmatrix}
        \hat{\bK}\\
        1\\
    \end{pmatrix}$.

Similarly, one can derive that
\begin{equation}\label{eq.h22}
\begin{array}{lcl}
    H_{ij}=\omega^0|\Omega|\zeta^TW^{1}_{0,-1}\zeta, \quad i\neq j.
\end{array}
\end{equation}
Equations (\ref{eq.omega1})-(\ref{eq.h22}) yield
\begin{equation}\label{eq.omegadivde}
\begin{array}{lll}
\omega_\sigma^{1,0} &=& \la\bPsisz,\MMo\bPsisz\ra_I\\[0.3cm]
     &=& \frac{\omegaz}{2}\big[\zeta^TW^{1}_{0,0}\zeta+(\zeta^TW^{1}_{0,-1}\zeta)(\sigma+\overline{\sigma})\big],
\end{array}
\end{equation}
where $\omega_0=|\bK|$.

By $[\PPT\TTT,\MM_{W^\varepsilon}]=0$,
we have $\omega^\varepsilon_\tau=\omega^\varepsilon_{\overline{\tau}}:=\omega^\varepsilon_{_D}$. In other words, $\omega^\varepsilon_{_D}$ is an eigenvalue of multiplicity two with eigenfunctions $\bPsi_\tau^\varepsilon\in \L^2_{\bK,\tau}$,$\bPsi_{\overline{\tau}}^\varepsilon=(\PPT\TTT)\bPsi_\tau^\varepsilon\in \L^2_{\bK,\overline{\tau}}$. On the other hand, $\omega_n^\varepsilon:=\omega_1^\varepsilon\neq\omega^\varepsilon_{_D}$ by (\ref{eq.omegadivde}), which means $\omega_n^\varepsilon$ is an eigenvalue of multiplicity one with eigenfunction $\bPsi_1^\varepsilon\in \L^2_{\bK,1}$.

To bring an end to the proof, we shall verify the positivity of the conical constant $C_D(\varepsilon)$.
By (\ref{eq:perta}) and (\ref{eq.cf}), we have
\begin{equation*}
    C_D(\varepsilon) = \frac{1}{2}\Big|\mathcal{F}(\bPsi_\tau^\varepsilon,\bPsi_{\overline{\tau}}^\varepsilon)\cdot\begin{pmatrix}
        1\\
        -\ii\\
    \end{pmatrix}\Big|  =\frac{1}{2}\Big|\mathcal{F}(\bPsi_\tau^{0},\bPsi_{\overline{\tau}}^{0})\cdot\begin{pmatrix}
        1\\
        -\ii\\
    \end{pmatrix}\Big|    +O(\varepsilon).
\end{equation*}
Substituting (\ref{eq:psiz}) into (\ref{eq.deff}) implies
\begin{equation*}
\begin{array}{lll}
    \mathcal{F}(\bPsi_\tau^{0},\bPsi_{\overline{\tau}}^{0}) = -\frac{1}{3\omega_0}(1 + \overline{\tau}R + \tau R^2)\bK.
\end{array}
\end{equation*}
It follows that
\begin{equation*}
\begin{array}{lll}
C_D(\varepsilon) &=& \frac{1}{6\omega_0}\Big|(1 + \overline{\tau}R + \tau R^2)\bK\cdot\begin{pmatrix}
        1\\
        -\ii\\
    \end{pmatrix}\Big|+O(\varepsilon)\\
    &=& \frac{1}{2}+O(\varepsilon).
\end{array}
\end{equation*}
This completes the proof.
\end{proof}

\end{document}